\documentclass[USenglish,a4paper,11pt]{article}
\pdfoutput=1

\input{latex-packages.sty}
\input{commands.sty}
\input{latex-style.sty}
\addbibresource{Bibliography.bib}

\title{A Brief Introduction to Quantum Query Complexity}

\author{Yassine Hamoudi}
\affil{Université de Bordeaux, CNRS, LaBRI, France.\\\texttt{ys.hamoudi@gmail.com}}

\date{\today}


\begin{document}

\maketitle

\begin{abstract}
  Quantum query complexity is a fundamental model for analyzing the computational power of quantum algorithms. It has played a key role in characterizing quantum speedups, from early breakthroughs such as Grover's and Simon's algorithms to more recent developments in quantum cryptography and complexity theory. This document provides a structured introduction to quantum query lower bounds, focusing on four major techniques: the hybrid method, the polynomial method, the recording method, and the adversary method. Each method is developed from first principles and illustrated through canonical problems. Additionally, the document discusses how the adversary method can be used to derive upper bounds, highlighting its dual role in quantum query complexity. The goal is to offer a self-contained exposition accessible to readers with a basic background in quantum computing, while also serving as an entry point for researchers interested in the study of quantum lower bounds.
\end{abstract}


\section{Introduction}
\label{Sec:intro}

\emph{Query complexity} (also known as \emph{decision tree complexity}) is the study of algorithms that can access their input solely through an abstract operation, called a \emph{query}. Unlike other models of computation (such as circuit complexity or Turing machines), query complexity has been particularly effective in pinpointing the hardness of various computational problems. Notably, it has played a crucial role in understanding the power and limitations of quantum computing, from the early-days quantum algorithms~\cite{DJ92j,BV97j,Sim97j,Gro97j} to modern applications in cryptography (e.g., quantum random oracle model). In this respect, it is one of the few computational models in which \emph{exponential} quantum speedups have been rigorously established (e.g., Simon's problem~\cite{Sim97j}, Welded Tree~\cite{CCD03c}, Forrelation~\cite{BS21c,SSW23j}, Yamakawa-Zhandry's problem~\cite{YZ24j}).

The great success of query complexity lies, in part, in the development of \emph{lower-bound methods} that relate the minimum number of queries required to solve a problem to its combinatorial properties. Together with algorithmic upper bounds, these methods can distinguish between problems of varying complexity -- such as constant, logarithmic, cubic root, square root, etc. -- thereby establishing an \emph{unconditional} hierarchy of \emph{fine-grained} complexity classes. However, only a handful of such methods extend to the quantum setting. Indeed, many intuitive properties of classical queries do not carry over to \emph{quantum queries}, which can access the input in superposition. A foundational step in understanding the limits of quantum queries was taken in the seminal work of Bennett, Bernstein, Brassard and Vazirani~\cite{BBBV97j} through the so-called ``hybrid method''. Since then, additional methods have emerged, revealing deep connections between quantum query complexity and Boolean function analysis, matrix analysis, semidefinite programming, etc.

The goal of this document is to present four of the most prominent and successful methods for establishing quantum query lower bounds, along with some of their basic applications. The properties sought in lower-bound methods are of various types. Of course, they are expected to reduce the complexity gap with the best-known algorithms, ultimately reaching optimality. However, this can manifest in different ways. Research in query complexity has been driven by challenges such as: problems with \emph{extreme output conditions} (zero error, exponentially small success probability), query models with \emph{physical constraints} (noisy queries, bounded space, short coherence time), query models with \emph{advice} (QMA query complexity, cryptographic settings with auxiliary input), \emph{strong direct product theorems} (complexity of solving multiple instances of the same problem), \emph{composition theorems} (relating the complexity of a problem to the complexities of its components), \emph{lifting theorems} (transferring query complexity lower bounds to communication complexity), complexity of \emph{``inherently'' quantum problems} (state conversion), etc. This document primarily focuses on basic applications and does not delve into these advanced aspects, though some are supported by the methods presented in the following sections.

\fakeparagraph{Organization of the document.}
The study of quantum query complexity requires minimal prior knowledge of quantum computing. It is nevertheless assumed that the reader is familiar with the basics of quantum computing (such as Dirac notation and the circuit model), as found in any standard textbook (e.g.,~\cite{NC11b}). In \Cref{Sec:queryAlgo}, we provide a self-contained description of the computational models used in this document. \Cref{Sec:function} outlines the main problems that we will use later to illustrate the lower-bound methods.

Each of the four subsequent sections is dedicated to a different lower-bound method: hybrid (\Cref{Sec:hybrid}), polynomial (\Cref{Sec:poly}), recording (\Cref{Sec:record}) and adversary (\Cref{Sec:adver}). The hybrid, recording and adversary methods belong to the same family of techniques, and it is recommended to read them in order.

The final section (\Cref{Sec:dual}) explores a striking property of the adversary method: its ability to also provide query upper bounds (i.e., quantum algorithms) via the dual of a specific semidefinite program.

\fakeparagraph{Going further.}
Complementary introductions to quantum query complexity can be found in the survey by Buhrman and de Wolf~\cite{BdW02j}, and in the lecture notes of de Wolf~\cite{dWol19p}, Childs~\cite{Chi17p} and Ben-David~\cite{Ben20p}. A more advanced treatment of certain lower-bound techniques is provided in the Ph.D. dissertations of {\v{S}}palek~\cite{Spa06d}, Belovs~\cite{Bel14p} and Rosmanis~\cite{Ros14d}. The document will also provide pointers to the scientific literature for readers who wish to explore further.

\fakeparagraph{Acknowledgments.}
The content of this document was taught at the IAS/PCMI 2023 summer school over the course of five lectures and four problem sessions. The author wants to thank the organizers of the summer school for the invitation and the great atmosphere throughout the event. The author is also very grateful to Angelos Pelecanos for his work as a teaching assistant.


\subsection{How to Model a Quantum Query Algorithm}
\label{Sec:queryAlgo}

There are many variants of query complexity, depending on the computational power given to the algorithms, the assumptions made about the input and the conditions required for the output. In this document, we primarily focus on the simple -- yet very informative -- setup of computing Boolean functions with bounded-error algorithms, as defined next.

\begin{enumerate}[label={},itemindent=-2em,leftmargin=2em]
  \item {\bf (Boolean input alphabet)}\, The input to a query problem is an $n$-bit string $x = x_1\dots x_n \in \rn^n$, where each coordinate $x_i$ is called the \emph{query value} on \emph{query index} $i$.
  \item {\bf (Decision problems)}\, A query problem associates with each input~$x$ a unique solution~$f(x)$ specified by a Boolean function $\rf$.
  \item {\bf (Worst-case output condition)}\, An algorithm is said to \emph{compute}~$f$ if it correctly outputs~$f(x)$ with probability at least~$2/3$ for all inputs~$x$.
\end{enumerate}

The complexity of an algorithm will be measured by the number of queries made to the input~$x$. For a classical algorithm, a \emph{query} consists of revealing the value of a coordinate~$x_i$ for an index~$i$ chosen by the algorithm. For a quantum algorithm, the definition of a query is more subtle and will be given later. The query complexity of a function~$f$ is the smallest number of queries needed among all algorithms computing~$f$.

We briefly define the two models of classical query complexity -- \emph{deterministic} and \emph{randomized}~-- that are most studied in the literature. A deterministic query algorithm can be conveniently represented as a Boolean decision tree, where each node corresponds to a query and each leaf to an output (see \Cref{Fig:decTree})~-- hence the alternative name of \emph{decision tree complexity}. Since the output is deterministic, the condition for such an algorithm to compute $f$ is to always output~$f(x)$. A randomized query algorithm also has the ability to make random choices, which is equivalent to having a distribution over decision trees. In that case, the output condition allows the algorithm to output the wrong value with probability at most~$1/3$.

\begin{figure}
  \centering
  \begin{tikzpicture}[level 1/.style={sibling distance=50mm},level 2/.style={sibling distance=15mm},indexnode/.style={font=\large},queryedge/.style={font=\footnotesize}]
    \node[indexnode] {2?}
        child[indexnode] {node {1?}
            child {node {\bf 0} edge from parent [->] node[queryedge] [left] {$x_1 = 0$}}
            child[indexnode] {node {3?}
                child {node {\bf 0} edge from parent [->] node[queryedge] [left] {$x_3 = 0$}}
                child {node {\bf 1} edge from parent [->] node[queryedge] [right] {$x_3 = 1$}}
                edge from parent [->] node[queryedge] [right] {$x_1 = 1$}
            }
            edge from parent [->] node[queryedge] [left] {$x_2 = 0$}
        }
        child {node[indexnode] {3?}
                child {node {1?}
                    child {node {\bf 0} edge from parent [->] node[queryedge] [left] {$x_1 = 0$}}
                    child {node {\bf 1} edge from parent [->] node[queryedge] [right] {$x_1 = 1$}}
                    edge from parent [->] node[queryedge] [left] {$x_3 = 0$}
                }
                child {node {\bf 1} edge from parent [->] node[queryedge] [right] {$x_3 = 1$}}
            edge from parent [->] node[queryedge] [right] {$x_2 = 1$}
        };
  \end{tikzpicture}
  \caption{A deterministic algorithm with query complexity~$3$ that computes the \majf\ function over three bits ($\majf(x_1x_2x_3) = 1$ when two or three bits are equal to one).}
  \label{Fig:decTree}
\end{figure}
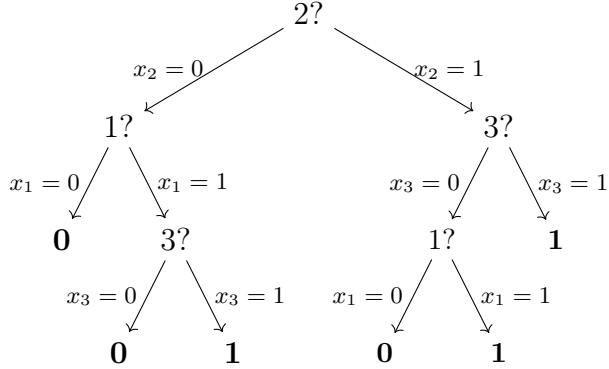

\begin{definition}[Deterministic and randomized query complexities]
  \label{Def:classQuery}
  A \emph{deterministic query algorithm} with query complexity $T$ is a rooted, ordered binary tree of depth~$T$, where each internal node is labeled with an index $i \in \ind$, and each leaf is labeled with a Boolean value. The \emph{output} of the algorithm on an input $x \in \rn^n$ is the value of the leaf obtained by starting at the root and, recursively, moving to the left child when $x_i = 0$ or to the right child when $x_i = 1$, where $i$ is the label of the current node.

  A \emph{randomized query algorithm} with query complexity $T$ is a distribution over deterministic query algorithms with query complexity $T$. The \emph{output} of the algorithm on an input $x$ is the random value obtained by sampling a deterministic query algorithm according to the distribution and evaluating it on $x$.

  The \emph{deterministic (resp. randomized) query complexity} $D(f)$ (resp. $R(f)$) of a Boolean function $\rf$ is the smallest integer $T$ such that there exists a deterministic (resp. randomized) query algorithm computing $f$ with query complexity $T$.
\end{definition}

It is important to remember that non-query operations are \emph{not counted} toward the complexity of an algorithm. Hence, a function can have low query complexity but require a large effective computation time. However, for most problems of interest, it is observed that the query complexity is a good proxy for their actual difficulty. The query complexity is nevertheless limited to be at most $n$ in the above models, since it always suffices to retrieve the entire input with~$n$ queries and apply~$f$ on it (this computation can be represented by the perfect binary tree of depth~$n$).

\begin{fact}
  For all functions $\rf$, we have $R(f) \leq D(f) \leq n$.
\end{fact}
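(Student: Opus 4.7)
The plan is to verify the two inequalities separately, both of which follow directly from the definitions in Definition~\ref{Def:classQuery}.

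For the right inequality $D(f) \le n$, I would exhibit an explicit deterministic query algorithm of complexity $n$. Take the perfect ordered binary tree of depth $n$ in which all nodes at depth $d \in \{0,\dots,n-1\}$ are labeled with the index $d+1$. On an input $x \in \{0,1\}^n$, following the branching rule of Definition~\ref{Def:classQuery} traces a root-to-leaf path that records the bits $x_1, x_2, \dots, x_n$ in order, so each of the $2^n$ leaves is in bijection with one input $x$. Labeling each such leaf with the value $f(x)$ yields a deterministic algorithm of query complexity $n$ that outputs $f(x)$ on every input, hence $D(f) \le n$.

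For the left inequality $R(f) \le D(f)$, I would observe that any deterministic query algorithm can be viewed as a randomized query algorithm whose distribution puts all its mass on a single deterministic tree. If $A$ is a deterministic algorithm of complexity $D(f)$ that computes $f$, then the trivial distribution on $A$ is a randomized algorithm of complexity $D(f)$ that outputs $f(x)$ with probability $1 \ge 2/3$ on every input. Therefore $R(f) \le D(f)$.

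There is no real obstacle here; the only thing to be careful about is to match the syntactic requirements of Definition~\ref{Def:classQuery} (rooted ordered binary tree of depth exactly $T$, labeling rules for internal nodes and leaves), and to note that the worst-case output condition ``probability at least $2/3$'' is automatically satisfied by a deterministic algorithm that is always correct.
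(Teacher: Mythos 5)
Your proof is correct and matches the paper's (brief) justification: the paper also argues $D(f)\le n$ via the perfect binary tree of depth $n$ that reads the whole input, and $R(f)\le D(f)$ is the trivial observation that a deterministic tree is a randomized algorithm with a point-mass distribution.
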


The first question to address in defining a model of quantum query complexity is: \emph{how is the algorithm given access to the input?} If we kept the same query operator as before (by letting the algorithm observe a single coordinate at a time), then the query complexity would be \emph{identical} to that in the randomized model. Indeed, the non-query operations could simply be simulated by a classical algorithm (even if they are quantum), since their cost is not part of the query complexity anyway. Hence, the query operator must be ``truly quantum'' to make the model interesting. This is achieved by the following quantum oracles, which have the ability to query multiple coordinates in superposition.

\begin{definition}[Quantum query operators]
  Consider the Hilbert space of dimension $2n$ spanned by the vectors $\ket{i,b}$ where $i \in \ind$ and $b \in \rn$. Given an input $x \in \rn^n$, we define the two following quantum unitary operators $\ora_x$ and~$\ora^{\pm}_x$:
  \begin{center}
      \begin{tabular}{c@{\hspace{2cm}}c}
        \emph{Binary oracle} & \emph{Phase oracle} \\[5pt]
        $\ora_x \ket{i,b} = \ket{i,b \oplus x_i}$ & $\ora^{\pm}_x \ket{i,b} = (-1)^{bx_i} \ket{i,b}$
      \end{tabular}
  \end{center}
  where $\oplus$ is the XOR operation. The register holding $i$ is called the \emph{index register}, and the register holding $b$ is the \emph{value register}.
  We say that an algorithm makes a \emph{quantum query to $x$} whenever it applies $\ora_x$ or $\ora^{\pm}_x$.
\end{definition}

The first oracle $\ora_x$ is the most natural extension of the classical query operator. It writes down the value of the query in an extra register in a reversible manner, and it can act on superpositions: $\ora_x \pt[\big]{\sum_{i,b} \alpha_{i,b} \ket{i,b}} = \sum_{i,b} \alpha_{i,b} \ket{i,b \oplus x_i}$ for any amplitudes $\alpha_{i,b}$. The second oracle $\ora^{\pm}_x$ encodes the query result into the phase rather than into a quantum register, which is sometimes more convenient for use in applications. This modification makes no change to the query complexity since the two oracles can simulate each other efficiently, by conjugation with a Hadamard gate.

\begin{fact}[Equivalence between binary and phase oracles]
  \label{fact:phaseO}
  We have the following equality,
    \[\ora^{\pm}_x = \pt{\id \otimes H} \ora_x \pt{\id \otimes H}\]
  where $H$ is the Hadamard gate. In particular, any quantum algorithm that makes~$T$ quantum queries to~$x$ using~$\ora_x$ can be perfectly simulated by an algorithm that makes the same number $T$ of queries but using $\ora^{\pm}_x$ (and vice-versa).
\end{fact}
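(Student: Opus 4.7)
The plan is to verify the operator identity by evaluating both sides on the computational basis $\set{\ket{i,b} : i \in \ind, b \in \rn}$ of the $2n$-dimensional oracle space, extending to arbitrary states by linearity. This reduces the claim to a short bookkeeping calculation that tracks amplitudes through the three factors on the right-hand side, using only the standard identity $H\ket{b} = \frac{1}{\sqrt{2}}\pt{\ket{0}+(-1)^b\ket{1}}$.

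Concretely, I would first apply the rightmost $(\id \otimes H)$, producing $\ket{i} \otimes \frac{1}{\sqrt{2}}\pt{\ket{0}+(-1)^b\ket{1}}$. The middle binary oracle $\ora_x$ then permutes the value register according to $\oplus\, x_i$, yielding $\frac{1}{\sqrt{2}}\pt{\ket{i,x_i}+(-1)^b\ket{i,1\oplus x_i}}$. A brief case split on $x_i \in \rn$ rewrites this as $(-1)^{b x_i}\, \ket{i}\otimes H\ket{b}$; the only subtlety is a sign-juggle when $x_i = 1$, where one uses $(-1)^{-b} = (-1)^b$ for $b \in \rn$ to factor the global phase cleanly. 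The final $(\id \otimes H)$ collapses $H\ket{b}$ back to $\ket{b}$, so the whole composition realizes $\ket{i,b} \mapsto (-1)^{b x_i}\ket{i,b}$, matching $\ora^{\pm}_x\ket{i,b}$.

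The ``in particular'' part then follows with essentially no extra work. Conjugating the established identity by $\id \otimes H$ and invoking $H^2 = \id$ gives the symmetric relation $\ora_x = (\id \otimes H)\,\ora^{\pm}_x\,(\id \otimes H)$. Hence, inside any algorithm, each $\ora_x$ call can be replaced by a single $\ora^{\pm}_x$ sandwiched between two Hadamards (and vice versa); since Hadamard gates are non-query operations, the total query count is preserved, and the unitary implemented by the algorithm is unchanged. The only even mildly delicate step in the whole argument is the sign accounting in the middle of the basis computation, but this is routine rather than a genuine obstacle.
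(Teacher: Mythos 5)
Your proof is correct, and the calculation is the standard one: verify the identity on the computational basis $\ket{i,b}$ using $H\ket{b} = \tfrac{1}{\sqrt{2}}(\ket{0}+(-1)^b\ket{1})$ and $H^2 = \id$, then note that sandwiching each query with non-query Hadamards leaves the query count unchanged. The paper states this fact without proof, so there is nothing to compare against, but your argument is precisely the one the author clearly has in mind.
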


We now incorporate the quantum query operator into the definition of quantum algorithms. Since query complexity concerns only the number of query operations, we can drastically simplify the model: all operations performed between two queries (including intermediate measurements -- by the deferred measurement principle) can be grouped together into a single unitary.

In general, an algorithm may need some extra workspace beyond the $\lceil\log n \rceil+1$ qubits used for the query register $\ket{i,b}$. However, the size (or existence) of this extra memory plays no role in the lower-bound methods that will be presented in this document (in fact, it is a major research challenge to find query lower-bound methods that are also sensitive to space constraints). Hence, for the ease of notation, we will only be considering \emph{memoryless} algorithms that need no extra registers beyond $\ket{i,b}$ (an example of such an algorithm is Grover's search).

We summarize the quantum query model in the next definition. The reader can also refer to the picture in \Cref{Fig:qAlgo}, which illustrates a quantum query algorithm in the quantum circuit formalism.


\begin{figure}
  \centering
  \begin{quantikz}
    \lstick{Value register: $\ket{0}$} &[-2mm] \gate[wires=2]{U_0} \slice[style=black]{$\psx{0}$} & \gate[wires=2]{O_x} & \gate[wires=2]{U_1} \slice[style=black]{$\psx{1}$} & \ \ldots\ \qw & \gate[wires=2]{O_x} & \gate[wires=2]{U_T} \slice[style=black]{$\psx{T}$} & \meter{} \rstick{\text{Output}} \\
    \lstick{Index register: $\ket{0}$} & & \qw &  \qw & \ \ldots\ \qw & \qw & \qw & \qw
  \end{quantikz}
  \caption{Canonical form of a (memoryless) quantum query algorithm.}
  \label{Fig:qAlgo}
\end{figure}
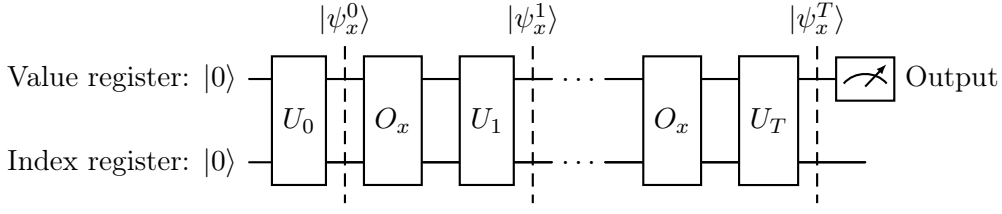

\begin{definition}[Canonical form of a memoryless quantum query algorithm]
  \label{Def:canon}
  A memoryless \emph{quantum query algorithm} with query complexity $T$ is a sequence $U_0,\dots,\allowbreak U_T$ of unitary operators acting on the Hilbert space spanned by the vectors $\ket{i,b}$ where $i \in \ind$ and $b \in \rn$.

  The \emph{intermediate state} $\psx{t}$ of the algorithm after $t \in \set{0,\dots,T}$ queries on the input $x$ is defined as,
    \[\psx{t} = U_t \ora_x U_{t-1} \ora_x \dots U_1 \ora_x U_0 \ket{0,0}.\]
  The \emph{final state} of the algorithm is $\psx{T}$ and its \emph{output} is the random bit obtained by measuring the value register of that state in the standard basis. Hence, the output is~$0$ with probability $\norm{\pt{\id \otimes \proj{0}} \psx{T}}^2$ and $1$ with probability $\norm{\pt{\id \otimes \proj{1}} \psx{T}}^2 = 1 - \norm{\pt{\id \otimes \proj{0}} \psx{T}}^2$. The \emph{success probability} $\psucc$ of the algorithm in computing a function $f$ is the smallest probability with which the algorithm outputs the correct value on any input: $\psucc = \min_{x \in \rn^n} \norm{\pt{\id \otimes \proj{f(x)}} \psx{T}}^2$. The algorithm is said to \emph{compute~$f$} if $\psucc \geq 2/3$.
\end{definition}


The most general form of query algorithms allows the unitaries $U_t$'s to act on a larger Hilbert space spanned by $\ket{i,b} \otimes \ket{w}$, where $w \in \set{1,\dots,d}$ represents an extra memory register of arbitrary dimension~$d$. The memoryless assumption amounts to restricting~$d = 1$. It is a simple exercise to verify that all the lower bound methods presented in this document extend readily to any value of~$d$.

Finally, the quantum query complexity of a Boolean function is defined as the smallest query complexity among the algorithms that compute that function.

\begin{definition}[Quantum query complexity]
  \label{Def:qc}
  The \emph{quantum query complexity} $Q(f)$ of a Boolean function $\rf$ is the smallest integer $T$ such that there exists a quantum algorithm computing $f$ with query complexity $T$.
\end{definition}

It is always possible to simulate a classical query on an index~$i$ by querying~$\ora_x \ket{i,0} = \ket{i,x_i}$ and measuring the value register. Hence, the quantum query complexity is always less than or equal to the randomized complexity (though the simulation may require considering non-memoryless quantum algorithms to carry it out).

\begin{fact}
  For all functions $\rf$, we have $Q(f) \leq R(f)$.
\end{fact}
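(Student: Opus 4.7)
The plan is to simulate a randomized query algorithm of complexity $T$ by a quantum query algorithm that also makes $T$ applications of $\ora_x$. By \Cref{Def:classQuery}, a randomized algorithm is a distribution $\mathcal{D}$ over deterministic decision trees of depth $T$, so it suffices to (i) show that a quantum algorithm can sample from $\mathcal{D}$ coherently using no queries (standard: Hadamard gates on an ancilla register encode the random coins in superposition, and all subsequent operations are performed conditionally on these bits), and (ii) simulate any fixed deterministic decision tree $\mathcal{T}$ of depth $T$ using $T$ quantum queries.

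For step (ii), I would maintain a workspace register $\ket{h}$ storing the history $h \in \rn^{t-1}$ of query outcomes collected so far. The tree $\mathcal{T}$ specifies, as a function of $h$, both the next index $i_t(h) \in \ind$ to query and, once $\abs*{h}=T$, the leaf value $\ell(h) \in \rn$. Between the $(t-1)$-th and $t$-th oracle calls, the unitary $U_{t-1}$ writes $i_t(h)$ into the index register controlled on $h$ and sets the value register to $\ket{0}$; then $\ora_x$ maps $\ket{i_t(h),0,h}$ to $\ket{i_t(h), x_{i_t(h)}, h}$; then $U_t$ appends $x_{i_t(h)}$ to the history register, uncomputes $i_t(h)$ from the index register, and resets the value register. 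After $T$ queries, the history holds the full path traversed on input $x$, and one final unitary writes $\ell(h) = \mathcal{T}(x)$ into the value register for measurement. Combined with (i), the resulting quantum algorithm reproduces the output distribution of the randomized algorithm on every $x$, so its success probability is at least $2/3$ and it uses exactly $T$ queries.

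The main subtlety is verifying that every non-oracle step is unitary, and this is exactly where determinism of $\mathcal{T}$ is essential: because $h \mapsto i_t(h)$ is single-valued, writing $i_t(h)$ into the index register controlled on $h$ (and later uncomputing it) is reversible, and so is copying the outcome bit into the history register. A minor caveat is that the simulation uses a workspace beyond the $\ket{i,b}$ register, so it is not strictly memoryless; however, as remarked after \Cref{Def:canon}, the quantum query complexity $Q(f)$ is insensitive to the presence of such ancillas, so the conclusion $Q(f) \leq R(f)$ follows.
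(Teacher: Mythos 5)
Your proof is correct and takes essentially the same approach as the paper, which dispatches this fact in a single sentence: simulate each classical query on index~$i$ by preparing $\ket{i,0}$, applying $\ora_x$ to obtain $\ket{i,x_i}$, and measuring the value register, noting that this requires a non-memoryless quantum algorithm. You simply spell out the fully coherent (measurement-free) version of that simulation in detail — the Hadamard-prepared random seed, the history register, and the reversibility bookkeeping — which the paper leaves implicit via the deferred-measurement principle mentioned earlier in the section.
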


We end this section by listing a few basic properties of the operator norm, which are used extensively in the analysis of quantum query lower bounds.

\begin{lemma}[Properties of the operator norm]
  Let $\norm{.}$ denote both the Euclidean vector norm $\norm{\ket{\psi}} = \sqrt{\sum_x \abs{\psi_x}^2}$, and the induced matrix norm (spectral norm) $\norm{A} = \max_{\psi : \norm{\ket{\psi}} = 1} \norm{A \ket{\psi}}$  over the same Hilbert space. Then, the following properties hold,
  \begin{itemize}
    \item {\it (Triangle inequality)}\, $\norm{\ps{1} + \ps{2}} \leq \norm{\ps{1}} + \norm{\ps{2}}$,
    \item {\it (Cauchy--Schwarz inequality)}\, $\abs{\ip{\psi_1}{\psi_2}} \leq \norm{\ps{1}} \cdot \norm{\ps{2}}$,
    \item {\it (Unitary invariance)}\, $\norm{U \ket{\psi}} = \norm{\ket{\psi}}$,
    \item {\it (Submultiplicativity)}\, $\norm{A \ket{\psi}} \leq \norm{A} \cdot \norm{\ket{\psi}}$,
  \end{itemize}
  for any vectors $\ket{\psi},\ps{1},\ps{2}$, unitary $U$ and matrix $A$.
\end{lemma}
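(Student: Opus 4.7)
The plan is to prove the four properties in an order that respects their logical dependencies. I would start with Cauchy--Schwarz, since the triangle inequality uses it, then unitary invariance (which is essentially a one-line computation), and finish with submultiplicativity (which is a direct unpacking of the definition of the induced norm).

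For Cauchy--Schwarz, the standard approach is to assume without loss of generality that $\ps{2} \neq 0$ (otherwise both sides vanish) and consider the nonnegative quantity $\norm{\ps{1} - \lambda \ps{2}}^2$ for $\lambda = \ip{\psi_2}{\psi_1}/\norm{\ps{2}}^2$. Expanding the inner product $\ip{\psi_1 - \lambda \psi_2}{\psi_1 - \lambda \psi_2} \geq 0$ and simplifying yields $\norm{\ps{1}}^2 \norm{\ps{2}}^2 \geq \abs{\ip{\psi_1}{\psi_2}}^2$, which is the claim. For the triangle inequality, I would expand
\[\norm{\ps{1} + \ps{2}}^2 = \norm{\ps{1}}^2 + 2\,\mathrm{Re}\,\ip{\psi_1}{\psi_2} + \norm{\ps{2}}^2,\]
bound $\mathrm{Re}\,\ip{\psi_1}{\psi_2} \leq \abs{\ip{\psi_1}{\psi_2}}$, apply Cauchy--Schwarz, and recognize the result as $\pt{\norm{\ps{1}} + \norm{\ps{2}}}^2$.

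Unitary invariance follows from $\norm{U\ket{\psi}}^2 = \bra{\psi} U^\dagger U \ket{\psi} = \bra{\psi}\psi\rangle = \norm{\ket{\psi}}^2$ using $U^\dagger U = \id$. For submultiplicativity, the case $\ket{\psi} = 0$ is trivial; otherwise I would normalize by writing $\ket{\varphi} = \ket{\psi}/\norm{\ket{\psi}}$, so that $\norm{A\ket{\psi}} = \norm{\ket{\psi}} \cdot \norm{A \ket{\varphi}}$, and then bound $\norm{A\ket{\varphi}} \leq \max_{\norm{\ket{\phi}}=1} \norm{A\ket{\phi}} = \norm{A}$ directly from the definition of the spectral norm.

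There is no real obstacle here: each property is a textbook fact, and the only delicate point is the ordering (Cauchy--Schwarz must precede the triangle inequality). The proof is purely expository and requires no ingredient beyond the definitions of the Euclidean vector norm, the induced spectral norm, and the condition $U^\dagger U = \id$ for unitaries.
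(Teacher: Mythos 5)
Your proposal is correct, and each of the four arguments is the standard textbook derivation (Cauchy--Schwarz by completing the square, triangle inequality by expanding $\norm{\ps{1}+\ps{2}}^2$ and invoking Cauchy--Schwarz, unitary invariance from $U^\dagger U = \id$, submultiplicativity by normalizing $\ket{\psi}$). The paper states this lemma without proof, treating it as a collection of well-known facts, so there is no alternative argument to compare against; your ordering of the claims (Cauchy--Schwarz before triangle inequality) is the right one and would fill the gap cleanly.
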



\subsection{Functions of interest}
\label{Sec:function}

We list a few examples of functions that will serve as applications for the lower-bound methods later in the document. A problem is well suited for query complexity when its difficulty resides in collecting sufficient information about its input (and not, for instance, into some pre- or post-processing operations). Of course, not all problems fall into this category. However, it is often the case that part of their solution involves solving a subproblem encountered in query complexity. A good example is Shor's factoring algorithm, whose core component is a quantum query algorithm for the Period-finding problem. In that case, quantum query complexity may still be helpful in understanding whether a subroutine is optimal or not.

\fakeparagraph{\orf\ and \srcf.}
The \orf\ function returns one if and only if the input $x \in \rn^n$ contains at least one bit equal to one. This is arguably one of the most studied problem in quantum query complexity. The celebrated Grover's algorithm~\cite{Gro97j} can compute it \emph{quadratically} faster than any classical algorithm. The proof of its optimality will serve as a guiding application throughout this document. An extension of that problem -- the \srcf\ problem -- asks to locate an index~$i$ such that~$x_i = 1$ when it exists. It turns out to be no more difficult than the decision problem.

\fakeparagraph{\colf.}
The \colf\ problem asks to decide if the input contains two coordinates with the same value~$x_i = x_j$ (or to locate the indices of such coordinates). In order to make sense of this, the input alphabet must be increased beyond the Boolean domain, which we describe in \Cref{Sec:record}. The \colf\ problem has been instrumental in the development of new quantum lower-bound methods, and it is studied heavily for its applications in cryptography as well.

\fakeparagraph{\parf\ and \majf.}
The \parf\ function returns one if and only if the input $x \in \rn^n$ has an odd number of bits equal to one. It is an example of a problem for which no significant quantum speedup is possible (we will see that the query complexity decreases by only a factor of two). Another example of such a function is \majf, which returns the Boolean value that appears most frequently in the input.

\fakeparagraph{\conf.} 
The \conf\ function views the input $x \in \rn^{\binom{n}{2}}$ as the adjacency matrix of an undirected graph over~$n$ vertices, and it returns one if and only if that graph is connected. We will show how lower-bound methods have also led to the development of new quantum algorithms for this problem.

\fakeparagraph{\aotf.} 
The \aotf\ function is a composed function acting on~$n = m^2$ bits, which applies the \orf\ function to the~$m$ consecutive blocks of~$m$ input bits, followed by the \andf\ function on the results: $\andf(\orf(x_1,\dots,x_m),\allowbreak \orf(x_{m+1},\dots,x_{2m}),\dots,\orf(x_{n-m+1},\dots,x_n))$. The composition structure of this problem makes it possible to analyze its complexity in a very automated way, by simply multiplying the complexities of the \orf\ and \andf\ functions. This will result from the composition properties of the adversary method detailed in \Cref{Sec:adver,Sec:dual}.

\fakeparagraph{Total vs. partial functions.}
The functions studied in this document share a common property: their input~$x$ can take \emph{any} value in the Boolean hypercube~$\rn^n$. This necessitates restricting ourselves to problems that can be represented by functions $f$ whose domain is the entire set~$\rn^n$. Such functions are called \emph{total}, in contrast to \emph{partial} functions, which are only defined on a proper subset of $\rn^n$ to which~$x$ is \emph{promised} to belong. The lower-bound methods presented in this document can be adapted to handle partial functions as well. There is, however, a major difference between partial and total functions in terms of the achievable gaps between $Q(f)$, $R(f)$ and $D(f)$. While the gaps can be exponential for partial functions (e.g., Deutsch–Jozsa's problem~\cite{DJ92j}, Simon's problem~\cite{Sim97j}, Welded Tree~\cite{CCD03c}), they are at most polynomial for total functions. This comes from very generic lower bounds exploiting the absence of structure in the input to total functions. We state these results below for the sake of concreteness. Weaker versions of these statements will be discussed later in the document as well.

\begin{proposition}[Relations between query complexities of total functions]
  For any total function $\rf$, the gaps between its deterministic, randomized and quantum query complexities are at most,
    \[D(f) = \bo{R(f)^3}\ \textup{\cite{Nis91j}},\ R(f) = \bo{Q(f)^4} \ \textup{\cite{ABK21c}}.\]
  Moreover, there exist some functions $f,g,h : \rn^n \ra \rn$ that achieve the next gaps,
    \[D(f) = \wom{R(f)^{2}} \ \textup{\cite{ABB17j}},\ D(g) = \wom{Q(g)^{4}}\ \textup{\cite{ABB17j}},\ R(h) = \wom{Q(h)^{3}} \ \textup{\cite{BS21c,SSW23j}}.\]
\end{proposition}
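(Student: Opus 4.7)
The plan is to handle the two upper bounds $D(f) = \bo{R(f)^3}$ and $R(f) = \bo{Q(f)^4}$ through combinatorial complexity measures that sandwich $D$, $R$, and $Q$ within polynomial factors, and to treat the three matching separations by exhibiting the explicit functions constructed in the cited papers.

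For Nisan's bound $D(f) = \bo{R(f)^3}$, I would combine two standard facts about block sensitivity $\bs(f)$. First, $\bs(f) = \bo{R(f)}$: on a maximally block-sensitive input $z$ with disjoint sensitive blocks $B_1,\dots,B_k$, any bounded-error algorithm must query some bit of each $B_i$ with probability $\Omega(1)$, else its output distribution on $z$ and on $z^{B_i}$ would be too close and it would err on one of them; summing the expected number of distinct blocks queried yields the bound. Second, $D(f) = \bo{\bs(f)^3}$: a greedy decision tree that always follows a minimum $1$-certificate of the current consistent input set gives $D(f) \le C(f) \cdot \bs(f)$, and iterating the sensitive-block structure on certificates yields $C(f) \le \bs(f)^2$. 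For $R(f) = \bo{Q(f)^4}$, the polynomial method alone only gives $D(f) = \bo{Q(f)^6}$: the acceptance probability of a $T$-query quantum algorithm is a multilinear polynomial of degree $2T$ in~$x$, hence $\adeg(f) \le 2Q(f)$; combined with $\bs(f) = \bo{\adeg(f)^2}$ and the previous $\bs \ra D$ step this yields the sixth power. The improved exponent $4$ of Aaronson--Ben-David--Kothari requires a more delicate quantum-to-randomized simulation -- combining a randomized composition theorem with a direct classical simulation that preserves randomized rather than deterministic cost -- which I would quote as a black box.

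For the separations, I would exhibit the explicit witnesses rather than hope for a generic statement. The quadratic $D$ vs.\ $R$ gap of Ambainis--Balodis--Belovs--Lee--Santha--Smotrovs uses a \emph{pointer function}: the input encodes a pointer chain that a randomized algorithm can partially shortcut by lucky guessing, while any deterministic tree must explicitly branch on too many alternatives. The quartic $D$ vs.\ $Q$ separation composes a cheat-sheet gadget with a quantum-easy subroutine (Forrelation or a variant), and the cubic $R$ vs.\ $Q$ gap of Bansal--Sinha and Sherstov--Storozhenko--Wu uses $k$-fold Forrelation, for which a quantum algorithm costs polylogarithmically many queries while the classical lower bound $\wom{n^{1-1/k}}$ follows from an explicit dual polynomial against the uniform distribution.

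The main obstacle is the randomized lower bound side of these separations. Proving that no randomized algorithm can beat the quantum one on Forrelation, or the deterministic one on the pointer function, requires delicate distributional arguments -- in the Forrelation case, constructing an explicit dual polynomial that certifies a large $L_1$ gap between $f$ and low-degree approximants; in the pointer/cheat-sheet case, a careful reduction showing that a randomized algorithm cannot exploit the cheat slot without first paying the base function's randomized cost. I would therefore present the upper-bound chain in detail and defer the separations to their original references, flagging the Aaronson--Ben-David--Kothari quantum-to-randomized simulation and the Forrelation dual polynomial as the technically heaviest ingredients of the overall statement.
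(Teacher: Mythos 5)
The paper states this proposition without proof; it is a survey-style collection of citations, so there is no ``paper's own proof'' to match against. Your reconstruction of the Nisan chain for $D(f) = \bo{R(f)^3}$ is correct and is exactly what the paper gestures at elsewhere: it proves $R(f) \geq \bs(f)/3$ in the hybrid-method section and cites $D(f) = \bo{\bs(f)^3}$ to [BBC+01j]. One small slip: the greedy tree gives $D(f) \leq C^{(1)}(f) \cdot \bs(f)$ (one-sided certificate complexity), not $C(f) \cdot \bs(f)$, though for Boolean functions this costs only a constant factor. Your deferral of the quartic $R(f) = \bo{Q(f)^4}$ bound and of the three separations is reasonable, and the paper makes the same choice, but your description of the ABK21c ingredient is slightly off: the heart of that result is not a ``quantum-to-randomized simulation'' but the purely analytic bound $\deg(f) = \bo{\adeg(f)^2}$ (proved via Huang's sensitivity theorem), which the paper itself records as \Cref{Eq:deg}; it then feeds into the classical chain $D(f) = \bo{\bs(f) \cdot \deg(f)} = \bo{\adeg(f)^4} = \bo{Q(f)^4}$, and $R \leq D$ finishes. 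Your identification of the separating witnesses (pointer functions, cheat sheets, $k$-fold Forrelation) is correct.
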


These results indicate that, for total functions, the best possible speedup is at most \emph{cubic} in the randomized vs. deterministic setting and at most \emph{quartic} in the quantum vs. deterministic or randomized settings. While Grover's algorithm provides an example of a \emph{quadratic} speedup for the quantum vs. randomized setting, there are recent examples of total functions~\cite{BS21c,SSW23j} exhibiting a \emph{cubic} speedup. It is still an open question to find a \emph{super-cubic} speedup for the quantum vs. randomized setting, or to show that $R(f) = \bo{Q(f)^3}$.

\section{The Hybrid Method}
\label{Sec:hybrid}

The \emph{hybrid method} is the first lower-bound technique we present in this document. It originates from the work of Bennett, Bernstein, Brassard and Vazirani~\cite{BBBV97j} on the limits of quantum computers for solving NP problems. This foundational result precluded a super-quadratic speedup for the \orf\ function. Later, Grover complemented this lower bound with the celebrated quantum search algorithm~\cite{Gro97j}, showing that the optimal speedup for \orf\ is indeed quadratic.

The hybrid method gave rise to a broader family of lower-bound methods known as ``adversarial''. We will develop two other such methods later in the document (\Cref{Sec:record,Sec:adver}).


\subsection{Technique}

The hybrid method tracks the distance (equivalently, the angle) between the intermediate states of an arbitrary quantum query algorithm on two inputs $x$ and $y$ with different outcomes,~$f(x) \neq f(y)$. Initially, this distance is zero, since the algorithm always starts in the same state regardless of the input. For the algorithm to succeed, the distance must increase, approaching one so the final states are distinguishable. The challenge is to upper bound how much each query can separate the states. Intuitively, the most important indices to query are those where~$x_i \neq y_i$. The hybrid method shows this increase is governed by the \emph{total amplitude} of such indices in the query.

\begin{theorem}[Hybrid method]
  \label{Thm:hybrid}
  Consider a quantum algorithm that computes a function $\rf$ using~$T$ queries to its input. Let $\psx{t}$ denote the intermediate states of the algorithm after $t \in \set{0,\dots,T}$ queries on the input $x \in \rn^n$. Then, for any two inputs $x,y \in \rn^n$, we have:
  \begin{itemize}[leftmargin=6mm]
    \item (Initial condition) $\norm{\psx{0} - \psy{0}} = 0$,
    \item (Progress evolution) $\norm{\psx{t+1} - \psy{t+1}} \leq \norm{\psx{t} - \psy{t}} + 2 \min\limits_{z \in \set{x,y}} \norm[\big]{\pt{\sum_{i : x_i \neq y_i} \allowbreak \proj{i} \otimes \id} \psz{t}}$,
    \item (Final condition) If $f(x) \neq f(y)$ then $\norm{\psx{T} - \psy{T}} \geq 1/3$.
  \end{itemize}
\end{theorem}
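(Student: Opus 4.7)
The plan is to handle the three claims in turn. The initial condition is immediate, since $\psx{0} = U_0 \ket{0,0}$ involves no oracle call and so does not depend on $x$. For the progress evolution, I would exploit the canonical form $\psx{t+1} = U_{t+1} \ora_x \psx{t}$ and the analogous expression for $y$; adding and subtracting $\ora_x \psy{t}$ produces the decomposition
\[
  \psx{t+1} - \psy{t+1} = U_{t+1}\, \ora_x (\psx{t} - \psy{t}) + U_{t+1}(\ora_x - \ora_y)\psy{t}.
\]
Triangle inequality and unitary invariance of the norm then make the first summand contribute exactly $\norm{\psx{t} - \psy{t}}$. The crux is to bound $\norm{(\ora_x - \ora_y)\psy{t}} \leq 2 \norm{\pt{\sum_{i : x_i \neq y_i} \proj{i} \otimes \id}\psy{t}}$: expanding $\psy{t} = \sum_{i,b} \alpha_{i,b}\ket{i,b}$, the two oracles act identically whenever $x_i = y_i$, while for disagreement indices the $b=0$ and $b=1$ components combine into $(\alpha_{i,0} - \alpha_{i,1})(\ket{i,0} - \ket{i,1})$. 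The elementary inequality $\abs{\alpha-\beta}^2 \leq 2(\abs{\alpha}^2 + \abs{\beta}^2)$, together with orthogonality of the contributions from distinct indices, delivers the factor $2$. The symmetric decomposition (adding and subtracting $\ora_y \psx{t}$ instead) gives the analogous bound centered at $\psx{t}$, which is what justifies taking the minimum over $z \in \set{x,y}$.

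For the final condition, set $\Pi = \id \otimes \proj{f(x)}$. The success-probability hypothesis, combined with $f(x) \neq f(y)$, forces $\norm{\Pi\psx{T}}^2 \geq 2/3$ and $\norm{\Pi\psy{T}}^2 \leq 1/3$. Splitting each final state as $\psx{T} = \Pi\psx{T} + (\id-\Pi)\psx{T}$ (and similarly for $\psy{T}$), the cross terms in $\ip{\psx{T}}{\psy{T}}$ vanish by orthogonality of the two subspaces, and Cauchy--Schwarz applied separately to the two surviving inner products yields
\[
  \abs{\ip{\psx{T}}{\psy{T}}} \leq \norm{\Pi\psx{T}}\,\norm{\Pi\psy{T}} + \norm{(\id-\Pi)\psx{T}}\,\norm{(\id-\Pi)\psy{T}}.
\]
An elementary optimization under the probability constraints (for instance, writing the four factors as $\cos\alpha,\sin\beta,\sin\alpha,\cos\beta$ with $\alpha,\beta \leq \arcsin\sqrt{1/3}$, so the right-hand side becomes $\sin(\alpha+\beta)$) bounds this by $2\sqrt{2}/3$. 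Combined with $\norm{\psx{T}-\psy{T}}^2 = 2 - 2\Re\ip{\psx{T}}{\psy{T}}$, this gives $\norm{\psx{T}-\psy{T}}^2 \geq 2 - 4\sqrt{2}/3 > 1/9$.

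The main obstacle I anticipate is the tightness of the constant $1/3$ in the final condition: the actual lower bound $\sqrt{2 - 4\sqrt{2}/3} \approx 0.338$ only just exceeds $1/3$, so both Cauchy--Schwarz contributions must be retained and one cannot afford to discard, e.g., the $(\id-\Pi)$-part. A secondary subtlety lies in the progress step: estimating $\norm{(\ora_x - \ora_y)\ket{i,b}} = \sqrt{2}$ basis vector by basis vector would only yield a factor $\sqrt{2}$; the stated factor $2$ requires grouping the value-register components $b=0$ and $b=1$ together before invoking the scalar inequality.
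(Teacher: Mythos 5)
Your proof is correct and follows the same structure as the paper's: the telescoping decomposition $\psx{t+1}-\psy{t+1} = U_{t+1}\ora_z(\psx{t}-\psy{t}) + U_{t+1}(\ora_x-\ora_y)\psz{t}$ for the progress step (your choice of inserting $\ora_x\psy{t}$ is the symmetric twin of the paper's $\ora_y\psx{t}$, and together they yield the $\min_z$), and the orthogonal split of $\ip{\psx{T}}{\psy{T}}$ across the two outcome subspaces plus Cauchy--Schwarz for the final condition. The one local variation is how you obtain the constant $2$ in $\norm{(\ora_x-\ora_y)\psz{t}} \le 2\norm{(\sum_{i:x_i\ne y_i}\proj{i}\otimes\id)\psz{t}}$: you expand in the computational basis, pair the $b=0$ and $b=1$ amplitudes at each disagreement index, and invoke $\abs{\alpha-\beta}^2\le 2(\abs{\alpha}^2+\abs{\beta}^2)$, whereas the paper restricts $\ora_x-\ora_y$ to the disagreement subspace and bounds $\norm{\ora_x-\ora_y}\le 2$ via the triangle inequality for the operator norm together with submultiplicativity. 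The operator-norm route is shorter; your amplitude-level computation is more elementary, makes explicit that the factor $2$ is tight (antisymmetric value-register amplitudes), and correctly flags that a naive basis-vector-by-basis-vector estimate would only deliver $\sqrt{2}$ because the images of $\ket{i,0}$ and $\ket{i,1}$ under $\ora_x-\ora_y$ are not orthogonal. Your trigonometric parameterization $\cos\alpha\sin\beta+\sin\alpha\cos\beta=\sin(\alpha+\beta)$ in the final condition spells out the small optimization the paper leaves implicit.
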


\begin{proof}[Proof of the initial and final conditions]
  The initial condition is immediate, as the states are independent of the inputs: $\psx{0} = \psy{0} = U_0 \ket{0,0}$. For the final condition, by applying the Cauchy--Schwarz inequality, we have:
    \begin{align*}
      \|\psx{T} - \psy{T}\|^2
        & = 2\pt*{1-\mathrm{Re}(\ip{\psi_x^{T}}{\psi_y^{T}})} \\
        & = 2\pt[\Big]{1-\sum\nolimits_{b \in \rn}\mathrm{Re}(\braket{\psi_x^{T}}{(\id \otimes \proj{b})}{\psi_y^{T}})} \\
        & \geq 2\pt[\Big]{1-\sum\nolimits_{b \in \rn}\norm{(\id \otimes \proj{b})\psx{T}}\cdot \norm{(\id \otimes \proj{b})\psy{T}}}.
    \end{align*}
  Suppose, without loss of generality, that $f(x) = 0$ and $f(x) = 1$. Then, the correctness of the algorithm implies  $\norm{(\id \otimes \proj{0}) \psx{T}}^2 \geq 2/3$ and $\norm{(\id \otimes \proj{1}) \psy{T}}^2 \geq 2/3$. Hence, $\|\psx{T} - \psy{T}\|^2 \geq 2(1-2\sqrt{2}/3) \geq 1/9$.
\end{proof}

\begin{proof}[Proof of the progress evolution]
  The values of $x$ and $y$ are interchangeable in the proof, hence it is sufficient to consider the case where the minimum is achieved by $z = x$ in the progress evolution. By definition, the states of the algorithm at two consecutive time steps satisfy $\psx{t+1} = U_{t+1}\ora_x \psx{t}$ and $\psy{t+1} = U_{t+1}\ora_y \psy{t}$. We obtain that,
  \switchAMS{
    \begin{flalign*}
      && \norm{\psx{t+1} - \psy{t+1}}
          & = \norm{U_{t+1} (\ora_x - \ora_y)\psx{t} + U_{t+1} \ora_y(\psx{t} - \psy{t})} \\
      &&  & \leq \norm{U_{t+1} (\ora_x - \ora_y)\psx{t}} + \norm{U_{t+1} \ora_y(\psx{t} - \psy{t})} \\
      &&  & & \eqname[1cm]{triangle inequality} \\
      && & = \norm{(\ora_x - \ora_y)\psx{t}} + \norm{\psx{t} - \psy{t}} \\
      && & & \eqname{unitary invariance of the norm}.
    \end{flalign*}
    }{
    \begin{align*}
      \norm{\psx{t+1} - \psy{t+1}}
        & = \norm{U_{t+1} (\ora_x - \ora_y)\psx{t} + U_{t+1} \ora_y(\psx{t} - \psy{t})} \\
        & \leq \norm{U_{t+1} (\ora_x - \ora_y)\psx{t}} + \norm{U_{t+1} \ora_y(\psx{t} - \psy{t})} \tag{triangle inequality} \\
        & = \norm{(\ora_x - \ora_y)\psx{t}} + \norm{\psx{t} - \psy{t}} \tag{unitary invariance of the norm}.
    \end{align*}
  }
  The oracles $\ora_x$ and $\ora_y$ coincide when the query register holds an index $i$ such that $x_i = y_i$. Hence, $(\ora_x - \ora_y)\psx{t} = (\ora_x - \ora_y) \pt{\sum_{i : x_i \neq y_i} \proj{i} \otimes \id} \psx{t}$. Using that $\norm{\ora_x - \ora_y} \leq \norm{\ora_x} + \norm{\ora_y} \leq 2$ and the submultiplicativity of the norm, we conclude that $\norm{(\ora_x - \ora_y)\psx{t}} \leq 2 \norm[\big]{\pt{\sum_{i : x_i \neq y_i} \proj{i} \otimes \id} \psx{t}}$.
\end{proof}

The quantity $\norm[\big]{\pt{\proj{i} \otimes \id} \psx{t}}^2$ is sometimes called the \emph{query weight} of~$i$ in the state~$\psx{t}$, as it corresponds to the probability of measuring that index in the query register of $\psx{t}$. A direct corollary to the hybrid method states that an algorithm must assign a sufficiently large query weight to indices where $x_i \neq y_i$ in order to successfully distinguish between the two inputs.

\begin{corollary}[Query weights lower bound]
  \label{Cor:hybrid}
  Consider a quantum algorithm that computes a function $\rf$ in~$T$ queries. Let $x,y \in \rn^n$ be two inputs such that~$f(x) \neq f(y)$. Then, the query weights must satisfy the inequality
    $\sum_{t = 0}^{T-1} \sqrt{\sum_{i : x_i \neq y_i} \norm[\big]{\pt{\proj{i} \otimes \id} \psx{t}}^2} \geq 1/6$.
\end{corollary}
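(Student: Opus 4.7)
The plan is to apply \Cref{Thm:hybrid} directly in a telescoping fashion. Starting from the initial condition $\norm{\psx{0} - \psy{0}} = 0$ and iterating the progress evolution from $t = 0$ to $t = T-1$, I obtain the telescoping bound
\[
  \norm{\psx{T} - \psy{T}} \;\leq\; 2 \sum_{t=0}^{T-1} \min_{z \in \set{x,y}} \norm[\bigg]{\pt[\Big]{\sum_{i : x_i \neq y_i} \proj{i} \otimes \id}\psz{t}}.
\]
Combining this with the final condition $\norm{\psx{T} - \psy{T}} \geq 1/3$ (applicable since $f(x) \neq f(y)$) yields a lower bound of $1/6$ on the sum on the right-hand side.

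Next, I would drop the minimum by replacing $\min_{z \in \set{x,y}}$ with the specific choice $z = x$, which can only make the sum larger. This gives
\[
  \sum_{t=0}^{T-1} \norm[\bigg]{\pt[\Big]{\sum_{i : x_i \neq y_i} \proj{i} \otimes \id}\psx{t}} \;\geq\; 1/6.
\]
Finally, I rewrite each summand by observing that the projectors $\proj{i} \otimes \id$ for distinct indices $i$ project onto mutually orthogonal subspaces of the index register, so their sum is itself a projector and the Pythagorean identity gives
\[
  \norm[\bigg]{\pt[\Big]{\sum_{i : x_i \neq y_i} \proj{i} \otimes \id}\psx{t}}^{2} \;=\; \sum_{i : x_i \neq y_i} \norm[\big]{\pt{\proj{i} \otimes \id}\psx{t}}^{2}.
\]
Taking square roots inside the outer sum and substituting yields exactly the claimed inequality.

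There is no real obstacle here: the corollary is essentially a repackaging of \Cref{Thm:hybrid}, and the only ingredients are telescoping, the triangle inequality implicit in the progress evolution, the trivial bound $\min \leq x\text{-case}$, and the orthogonality of the query-index projectors. The mild subtlety worth flagging is that one must be careful to fix the ``$z = x$'' choice \emph{uniformly} across all $t$ (which is allowed, since the inequality $\min_{z} \leq (\text{value at } z = x)$ holds termwise), rather than trying to optimize the choice of $z$ per time step.
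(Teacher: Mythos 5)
Your proposal is correct and matches the intended ``direct corollary'' derivation from \Cref{Thm:hybrid}: telescoping the progress evolution, applying the initial and final conditions, replacing the minimum over $z$ by the (larger or equal) value at $z = x$, and using orthogonality of the projectors $\proj{i}\otimes\id$ to identify the norm with the square root of the sum of squared query weights. The paper presents the corollary without a written proof precisely because this is the intended argument, and your subtlety flag about choosing $z = x$ uniformly is harmless but reasonable to note.
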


The quantity $\frac{1}{T} \sum_{t = 0}^{T-1} \sum_{i : x_i \neq y_i} \norm[\big]{\pt{\proj{i} \otimes \id} \psx{t}}^2$ is sometimes more convenient to manipulate, as it has an operational meaning: it represents the probability of finding a witness index~$i$ that distinguishes between $x$ and $y$ when the query register is measured at a random step of the algorithm. By combining the above corollary with the Cauchy--Schwarz inequality, this probability must be at least~$1/(36T^2)$ for the algorithm to succeed. In comparison, it can be easily shown that for randomized algorithms, this probability is at least~$\om{1/T}$, thereby enabling stronger lower bounds.

\fakeparagraph{Why is it called ``hybrid''?}
The hybrid method has another formulation that is particularly useful, for instance, in security proofs for cryptographic protocols. It states that, in an arbitrary algorithm, replacing a single unitary step~$U$ with another~$U'$ induces an error of at most $\norm{\pt{U - U'}\ket{\psi}}$, where~$\ket{\psi}$ is the state of the algorithm immediately before the perturbation occurs. One can recover \Cref{Cor:hybrid} by considering a sequence of hybrid algorithms $U_T \ora_y U_{T-1} \dots \ora_y U_t \ora_x \dots \ U_1 \ora_xU_0$ where the oracle behaves as $\ora_x$ until the $t$-th query, after which it switches to $\ora_y$. The error between two consecutive hybrids is at most $\norm{\pt{\ora_x - \ora_y} U_t \ora_x \dots \ U_1 \ora_xU_0 \allowbreak\ket{0,0}} = \norm{\pt{\ora_x - \ora_y} \psx{t}} \leq 2 \norm[\big]{\pt{\sum_{i : x_i \neq y_i} \proj{i} \otimes \id} \psx{t}}$. The interested reader can find more details in a survey by Vazirani~\cite{Vaz98j}.


\subsection{Applications}

We describe two canonical uses of the hybrid method for problems whose output is sensitive to a small perturbations of their input.


\fakeparagraph{Application 1: The \orf\ function.}
Our first application is to recover the original quantum lower bound $Q(\orf) = \om{\sqrt{n}}$ for the \orf\ function. In order to use \Cref{Thm:hybrid}, we must find a way to bound the term $\norm[\big]{\pt{\sum_{i : x_i \neq y_i} \proj{i} \otimes \id} \psx{t}}$ appearing in the progress evolution. This amounts to finding pairs of inputs $(x,y)$ with $f(x) \neq f(y)$ such that the query weight increases slowly on the bits that differ between the two inputs.

The crucial idea is to focus the analysis on the all-$0$ input $x\super{0} = (0,\dots,0)$ and the $1$-sparse inputs $y\super{1} = (1,0,\dots,0), y\super{2} = (0,1,0,\dots,0), \dots, y\super{n} = (0,\dots,0,1)$. The merit of this choice is that it simplifies the progress evolution into:
  \[\norm{\ket{\psi_{x\super{0}}^{t+1}} - \ket{\psi_{y\super{i}}^{t+1}}} \leq \norm{\ket{\psi_{x\super{0}}^{t}} - \ket{\psi_{y\super{i}}^{t}}} + 2 \norm[\big]{\pt{\proj{i} \otimes \id} \ket{\psi_{x\super{0}}^{t}}}\]
since $x\super{0}$ and $y\super{i}$ differ only on the bit at position $i$. We claim that, on average over $i$, the query weight $\norm[\big]{\pt{\proj{i} \otimes \id} \ket{\psi_{x\super{0}}^{t}}}^2$ is $1/n$. Indeed, $\sum_{i=1}^n \norm[\big]{\pt{\proj{i} \otimes \id} \ket{\psi_{x\super{0}}^{t}}}^2 = 1$ since the states $\pt{\proj{i} \otimes \id} \ket{\psi_{x\super{0}}^{t}}$ are orthogonal. Hence, it requires $T = \om{\sqrt{n}}$ queries to progress from $\norm{\ket{\psi_{x\super{0}}^{0}} - \ket{\psi_{y\super{i}}^{0}}} = 0$ (initial condition) to $\norm{\ket{\psi_{x\super{0}}^{T}} - \ket{\psi_{y\super{i}}^{T}}} \geq 1/3$ (final condition). We make this argument more formal below.

\begin{proposition}[Hybrid method applied to \orf]
  \label{Prop:hybridOR}
  The quantum query complexity of the \orf\ function is at least~$Q(\orf) \geq \sqrt{n}/6$.
\end{proposition}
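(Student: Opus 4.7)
The plan is to apply the query-weight bound of \Cref{Cor:hybrid} simultaneously to the $n$ input pairs $(x\super{0}, y\super{i})$, where $x\super{0} = (0,\dots,0) \in \rn^n$ is the all-zeros string and $y\super{i} \in \rn^n$ is the string with a single $1$ at coordinate~$i$. Since $\orf(x\super{0}) = 0$ and $\orf(y\super{i}) = 1$, the corollary applies to each pair. The crucial feature of this choice is that $x\super{0}$ and $y\super{i}$ differ at exactly one position, so the inner sum over differing indices in \Cref{Cor:hybrid} collapses to a single term and the square root disappears. Using the state $\ket{\psi_{x\super{0}}^{t}}$ as the common reference (which is permitted because one may upper bound the minimum in \Cref{Thm:hybrid} by its $z = x\super{0}$ branch), this yields, for every $i \in \ind$,
\[
    \sum_{t=0}^{T-1} \norm[\big]{(\proj{i} \otimes \id)\ket{\psi_{x\super{0}}^{t}}} \geq \frac{1}{6}.
\]

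The main step is then to aggregate these $n$ inequalities and invoke Cauchy--Schwarz on the index register. Summing over $i$ gives $\sum_{t=0}^{T-1} \sum_{i=1}^{n} \norm[\big]{(\proj{i} \otimes \id)\ket{\psi_{x\super{0}}^{t}}} \geq n/6$. Because the projectors $\proj{i} \otimes \id$ are pairwise orthogonal and sum to the identity of the full Hilbert space, we have $\sum_{i=1}^{n} \norm[\big]{(\proj{i} \otimes \id)\ket{\psi_{x\super{0}}^{t}}}^2 = 1$ for every fixed $t$. Applying Cauchy--Schwarz against the all-ones vector of length $n$ therefore gives $\sum_{i=1}^{n} \norm[\big]{(\proj{i} \otimes \id)\ket{\psi_{x\super{0}}^{t}}} \leq \sqrt{n}$. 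Combining the two estimates yields $T\sqrt{n} \geq n/6$, and hence $T \geq \sqrt{n}/6$.

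There is no real analytic obstacle; the whole proof rests on one conceptual insight: choosing input pairs that differ in exactly one coordinate localizes the hybrid-method progress bound onto a single index-register basis vector. This converts the task of bounding distinguishing progress into a simple $\ell_2$-conservation statement about query weights, on which Cauchy--Schwarz translates the $n$ per-pair lower bounds into the desired global $\sqrt{n}$ constraint. The only care needed is in fixing a common reference state across the $n$ applications of the corollary, which is possible thanks to the symmetric formulation of \Cref{Thm:hybrid}.
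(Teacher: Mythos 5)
Your proof is correct and follows essentially the same approach as the paper: both exploit the same hard set of pairs $(x\super{0}, y\super{i})$, use the fact that they differ in a single coordinate to localize the query weight to one basis vector of the index register, and apply Cauchy--Schwarz together with $\sum_{i=1}^n \norm[\big]{(\proj{i}\otimes\id)\ket{\psi_{x\super{0}}^t}}^2 = 1$ to turn the $n$ per-pair bounds into the $\sqrt{n}$ constraint. The only cosmetic difference is that you route through \Cref{Cor:hybrid} (which packages the telescoping of \Cref{Thm:hybrid}) and then sum over $i$, whereas the paper defines the progress measure $\Delta_t = \sum_i \norm{\ket{\psi_{x\super{0}}^t} - \ket{\psi_{y\super{i}}^t}}$ and applies the three conditions of \Cref{Thm:hybrid} directly.
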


\begin{proof}
  Consider any quantum algorithm that computes the \orf\ function using some number~$T$ of queries. Define the progress measure after $t \in \set{0,\dots,T}$ queries as:
    \[\Delta_t = \sum_{i=1}^n \norm{\ket{\psi_{x\super{0}}^{t}} - \ket{\psi_{y\super{i}}^{t}}}\]
  where $\ket{\psi_{x\super{0}}^{t}}$ (resp. $\ket{\psi_{y\super{i}}^{t}}$) is the state of the algorithm after $t$ queries on input~$x\super{0}$ (resp.~$y\super{i}$). By \Cref{Thm:hybrid}, the progress must be $\Delta_0 = 0$ initially and at least $\Delta_T \geq n/3$ at the end because $f(x\super{0}) \neq f(y\super{i})$ for all $i$. At each query, the progress increases by at most $\Delta_{t+1} \leq \Delta_t + 2 \sum_{i=1}^n \norm[\big]{\pt{\proj{i} \otimes \id} \ket{\psi_{x\super{0}}^{t}}} \leq \Delta_t + 2 \sqrt{n}$, since $\sum_{i=1}^n \norm[\big]{\pt{\proj{i} \otimes \id} \ket{\psi_{x\super{0}}^{t}}} \leq \sqrt{n \sum_{i=1}^n \norm[\big]{\pt{\proj{i} \otimes \id} \ket{\psi_{x\super{0}}^{t}}}^2} = \sqrt{n}$ by the Cauchy--Schwarz inequality. Hence, $n/3 \leq \Delta_T \leq 2T\sqrt{n}$ and we necessarily have~$T \geq \sqrt{n}/6$.
\end{proof}

The hybrid method is optimal (up to a constant factor) in the case of the \orf\ function, as Grover's algorithm~\cite{Gro97j} provides a matching $\bo{\sqrt{n}}$ upper bound. A careful inspection of that algorithm reveals that its query weights on the inputs $y\super{i}$ are $\norm{(\proj{i} \otimes \id )\ket{\psi_{y\super{i}}^{t}}}^2 \approx t^2/n$. Hence, it falls close to the lower bound stated in \Cref{Cor:hybrid} since $\sum_{t=0}^{T-1} \sqrt{t^2/n} = \bo{1}$ when~$T = \bo{\sqrt{n}}$.


\fakeparagraph{Application 2: Block sensitivity.}
Our second application relates the query complexity and the \emph{block sensitivity} of a Boolean function. The block sensitivity is a combinatorial measure of complexity introduced by Nisan~\cite{Nis91j} that generalizes the notion of \emph{sensitivity}. Informally, it is the largest number of disjoint bit flips in the input that can change the output of the function.

\begin{definition}[Block sensitivity]
  \label{Def:bs}
  The \emph{block sensitivity} $\bs(f)$ of a function $\rf$ is the largest integer~$s$ such that there exist an input $x \in \rn^n$ and $s$ disjoint subsets $B_1,\dots,B_s \subseteq \ind$ satisfying $f(x^{B_j}) \neq f(x)$ for all $1 \leq j \leq s$, where $x^{B_j} \in \rn^n$ is defined by $x^{B_j}_i = 1 - x_i$ when $i \in B_j$ and $x^{B_j}_i = x_i$ otherwise.
\end{definition}

One can check that the block sensitivity is maximal (i.e., equal to $n$) for the \orf, \andf\ and \parf\ functions for instance. This is proved by observing that the output value is flipped on each block $B_j = \set{j}$ when the input is, respectively, $x = 0^n$, $x = 1^n$ and $x \in \rn^n$.

An interesting example of a function with lower block sensitivity, due to Rubinstein~\cite{Rub95j}, is constructed as follows. Suppose that~$n = m^2$ is a square number and~$m$ is even. First, define the function~$g : \rn^m \ra \rn$ on~$m$ bits such that $g(x) = 1$ when all coordinates of~$x$ are zero except at two consecutive indices (i.e., $x_{2j} = x_{2j+1} = 1$ for some~$j$). Next, consider the composed function~$f = \orf \bullet g$, which applies~$g$ to the~$m$ consecutive substrings of size~$m$ in~$x \in \rn^n$ and returns the \orf\ of the results, i.e.,
  $f(x) = \orf\pt*{g(x_1,\dots,x_m),\dots,g(x_{n-m+1},\dots,x_n)}$.
The block sensitivity of~$f$ is easily shown to be~$\bs(f) = n/2$. The input achieving this optimum is the all-$0$ string, with the blocks~$B_j = \set{2j,2j+1}$ for all $j \in \set{0,\dots,n/2-1}$. If the blocks were restricted to be of size one (which results in the \emph{sensitivity} of~$f$), then the optimum would only be~$\sqrt{n}$.

The block sensitivity was identified by Nisan as characterizing the complexity of computing a function in a certain  model of parallel computation. He also described the following connection with classical query complexity.

\begin{proposition}[Lemma 4.2 in~\cite{Nis91j}]
  The randomized query complexity of any function $f : \rn^n \ra \rn$ is at least~$R(f) \geq \bs(f)/3$.
\end{proposition}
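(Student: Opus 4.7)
The plan is to fix a randomized algorithm $A$ of query complexity $T$ that computes $f$, and exploit the sensitive input guaranteed by the block sensitivity. Let $s = \bs(f)$, and let $x \in \rn^n$ together with disjoint blocks $B_1,\dots,B_s$ witness this block sensitivity, so that $f(x^{B_j}) \neq f(x)$ for every $j$. The goal is to show that, on input $x$, the algorithm must touch every block with decent probability, which then bounds $T$ from below by linearity of expectation.

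The main step is a sensitivity argument for each block. For a fixed deterministic algorithm sampled from $A$'s distribution, if its execution on input $x$ never queries an index inside $B_j$, then the same sequence of queries and answers occurs on $x^{B_j}$ (since $x$ and $x^{B_j}$ agree outside $B_j$), so the algorithm outputs the same value on both inputs. Let $p_j$ denote the probability that $A$, run on $x$, queries at least one index of $B_j$, and let $E_j$ be the corresponding event. Conditioning on $\overline{E_j}$ gives
\[
  \Pr[A(x) = f(x) \mid \overline{E_j}] = \Pr[A(x^{B_j}) = f(x) \mid \overline{E_j}],
\]
so that $\Pr[A(x) = f(x)] - \Pr[A(x^{B_j}) = f(x)] \leq p_j$. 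Since $A$ computes $f$ with error at most $1/3$, the left-hand side is at least $2/3 - 1/3 = 1/3$, and therefore $p_j \geq 1/3$ for every $j \in \set{1,\dots,s}$.

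To conclude, let $N$ denote the (random) number of distinct blocks $B_j$ that are hit by at least one query when $A$ is run on $x$. On one hand, linearity of expectation gives $\ex{N} = \sum_{j=1}^s p_j \geq s/3$. On the other hand, since the blocks $B_1,\dots,B_s$ are pairwise disjoint, each query contributes to at most one block, so $N \leq T$ deterministically, hence $\ex{N} \leq T$. Combining the two inequalities yields $T \geq s/3 = \bs(f)/3$, which proves $R(f) \geq \bs(f)/3$.

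I do not expect any real obstacle here: the only subtle point is the observation that conditioning on $\overline{E_j}$ makes the algorithm's behavior on $x$ and on $x^{B_j}$ identical, which requires viewing the randomized algorithm as a distribution over deterministic decision trees (as in \Cref{Def:classQuery}) so that one can talk about a well-defined execution path on each sample.
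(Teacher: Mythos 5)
Your proof is correct and rests on the same key observation as the paper's: a decision tree that never queries block $B_j$ when run on $x$ follows the identical execution path on $x^{B_j}$, so conditioned on that event the two runs produce the same output. The paper phrases the argument as a proof by contradiction — if $T < s/3$, then since each query lands in at most one of the $s$ disjoint blocks, some block is queried with probability below $1/3$, and omitting that block forces an error of at least $1/3$ on $x$ or $x^{B_j}$ — whereas you argue directly, deriving $p_j \geq 1/3$ for \emph{every} block from the correctness on both $x$ and $x^{B_j}$, and then summing via linearity of expectation to get $T \geq \ex{N} = \sum_j p_j \geq s/3$. These are the same argument in contrapositive form; your version has the minor advantage of making explicit the averaging step (each query hits at most one disjoint block) that the paper leaves implicit when it asserts the existence of a rarely-queried block.
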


\begin{proof}
  The proof is by contradiction. Let $x$ be an input on which $f$ attains its block sensitivity, and consider $s = \bs(f)$ disjoint blocks $B_1,\dots,B_s$ with $f(x^{B_j}) \neq f(x)$ for all $j$. Suppose that the algorithm makes fewer than~$s/3$ queries. Then, on input $x$, there exists a block $B_j$ of indices that is queried with probability less than $1/3$. When the algorithm does not query that block, it must be wrong with probability at least $1/2$ either on input $x$ or $x^{B_j}$ (since it cannot distinguish between the two). Hence, the algorithm fails with probability at least $(1-1/3) \cdot 1/2 = 1/3$ on some input.
\end{proof}

The hybrid method also helps us establish a quantum lower bound in terms of the block sensitivity. Another proof of this result will be given later using the polynomial method (\Cref{Prop:bsPoly}).

\begin{proposition}[Hybrid method applied to block sensitivity]
  \label{Prop:hybridBs}
  The quantum query complexity of any function $f : \rn^n \ra \rn$ is at least~$Q(f) \geq \sqrt{\bs(f)}/6$.
\end{proposition}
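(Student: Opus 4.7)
The plan is to mimic closely the hybrid-method proof for \orf\ (\Cref{Prop:hybridOR}), substituting the all-zero input and the singleton-flipped inputs $y\super{i}$ by a block-sensitivity witness and its block-flipped versions. Concretely, let $s = \bs(f)$, pick an input $x \in \rn^n$ together with disjoint blocks $B_1,\dots,B_s \subseteq \ind$ satisfying $f(x^{B_j}) \neq f(x)$ for every $j$, and define the progress measure
\[\Delta_t = \sum_{j=1}^s \norm{\psx{t} - \ket{\psi_{x^{B_j}}^{t}}}.\]
The initial and final conditions of \Cref{Thm:hybrid} immediately give $\Delta_0 = 0$ and $\Delta_T \geq s/3$, since each of the $s$ pairs $(x,x^{B_j})$ has differing $f$-values.

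Next, I would apply the progress evolution of \Cref{Thm:hybrid} to each pair $(x,x^{B_j})$, always picking the minimum with $z = x$ (so that the same state $\psx{t}$ appears in every bound). Since $x$ and $x^{B_j}$ differ exactly on the indices in $B_j$, this yields
\[\Delta_{t+1} \leq \Delta_t + 2 \sum_{j=1}^s \norm[\Big]{\pt[\big]{\sum_{i \in B_j} \proj{i} \otimes \id}\psx{t}} = \Delta_t + 2 \sum_{j=1}^s \sqrt{p_{j,t}},\]
where $p_{j,t} := \sum_{i \in B_j} \norm{(\proj{i} \otimes \id)\psx{t}}^2$ is the total query weight on block $B_j$ at time $t$ (the equality uses the orthogonality of the projectors $\proj{i}$).

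The key step is to bound $\sum_j \sqrt{p_{j,t}}$. Because the blocks $B_j$ are pairwise disjoint and the full query register has norm one, we have $\sum_{j=1}^s p_{j,t} \leq 1$, so Cauchy--Schwarz gives $\sum_j \sqrt{p_{j,t}} \leq \sqrt{s \sum_j p_{j,t}} \leq \sqrt{s}$. Hence $\Delta_{t+1} \leq \Delta_t + 2\sqrt{s}$, and iterating yields $\Delta_T \leq 2T\sqrt{s}$. Combining with $\Delta_T \geq s/3$ gives $T \geq \sqrt{s}/6$, as desired.

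The only subtlety — the ``main obstacle'' to watch out for — is making sure the Cauchy--Schwarz step is the tight one: summing the square roots of the block weights (rather than the weights themselves) is what produces the $\sqrt{\bs(f)}$ factor, and this crucially uses disjointness of the $B_j$ so that the block weights partition (a subset of) the total probability mass of the query register. Everything else is a direct transcription of the \orf\ argument.
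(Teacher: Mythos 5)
Your proof is correct and follows essentially the same route as the paper's: the same progress measure $\Delta_t = \sum_{j=1}^s \norm{\psx{t} - \ket{\psi_{x^{B_j}}^{t}}}$, the same application of \Cref{Thm:hybrid} with $z = x$, and the same Cauchy--Schwarz step using disjointness of the blocks to bound $\sum_j \norm{(\sum_{i\in B_j}\proj{i}\otimes\id)\psx{t}} \leq \sqrt{s}$. No discrepancies to report.
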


\begin{proof}
  We proceed similarly to the proof of \Cref{Prop:hybridOR}. First, we identify a hard set of inputs: fix $x \in \rn^n$ and $s = \bs(f)$ disjoint subsets $B_1,\dots,B_s \subseteq \ind$ such that $f(x^{B_j}) \neq f(x)$. Next, we define the progress measure after~$t$ queries as:
    $\Delta_t = \sum_{j=1}^s \norm{\ket{\psi_{x}^{t}} - \ket{\psi_{x\super{B_j}}^{t}}}$.
  By \Cref{Thm:hybrid}, the initial and final conditions imply that $\Delta_0 = 0$ and $\Delta_T \geq s/3$. Moreover, the progress increases by at most
    \[\Delta_{t+1} \leq \Delta_t + 2 \sum_{j=1}^s \norm[\Big]{\pt[\Big]{\sum_{i \in B_j} \proj{i} \otimes \id} \psx{t}}\]
  since $x$ and $x\super{B_j}$ differ only in the indices in $B_j$. By the Cauchy--Schwarz inequality, it is at most $\sum_{j=1}^s \norm[\big]{\pt{\sum_{i \in B_j} \proj{i} \otimes \id} \psx{t}} \leq \sqrt{s \sum_{j=1}^s \norm[\big]{\pt{\sum_{i \in B_j} \proj{i} \otimes \id} \psx{t}}^2} \leq \sqrt{s}$. Hence, $s/3 \leq \Delta_T \leq 2T\sqrt{s}$.
\end{proof}

The two above propositions give the best dependence on $\bs(f)$ that one can hope for in general. This is witnessed by the \orf\ function that saturates the bounds: $R(\orf) = \ta{n} = \ta{\bs(\orf)} $ and $Q(\orf) = \ta{\sqrt{n}} = \ta{\sqrt{\bs(\orf)}}$. In general, however, the block sensitivity may not be equal to the query complexity. There is at most a cubic gap with the deterministic complexity
  \[D(f) = \bo{\bs(f)^3}\]
(see~\cite[Lemma 5.3]{BBC+01j} for an algorithm). Together with \Cref{Prop:hybridBs}, it leads to the polynomial relationship $Q(f) \leq D(f) = \bo{Q(f)^6}$ between the deterministic and quantum query complexity of any function $f$. It is a major open problem to determine whether $D(f) = \bo{\bs(f)^2}$.

\section{The Polynomial Method}
\label{Sec:poly}

The \emph{polynomial method} was introduced in a work by Beals, Buhrman, Cleve, Mosca and de Wolf~\cite{BBC+01j}. It establishes a deep connection between quantum query complexity and the approximation of real functions by low-degree polynomials. One of its great successes was the first optimal lower bound for the \colf\ problem by Aaronson and Shi~\cite{AS04j}.

This method tends to differ from the other techniques presented in this document, as it does not involve analyzing the increase of a progress measure under each query. Instead, it directly relates the existence of a $T$-query algorithm computing~$f$ to the existence of a $2T$-degree multilinear polynomial approximating~$f$. The primary difficulty lies in lower bounding the degree of such multilinear polynomials, which is purely a matter of Boolean function analysis. In \Cref{Sec:polyAppli}, we will introduce some central tools that can be used to address this challenge.

The theory of polynomial approximation is also instrumental in the design of quantum algorithms, such as in the Quantum Singular Value Transformation framework~\cite{GSLW19c} or in obtaining a converse to the polynomial method~\cite{ABP19j}.


\subsection{Technique}

We are interested in polynomials with $n$ variables that are evaluated on Boolean inputs $x_1,\dots,x_n \in \rn$. We need only consider multilinear polynomials since $x_i^2 = x_i$ when $x_i$ is Boolean. A multilinear polynomial has the following form.

\begin{definition}
  A real, \emph{multilinear polynomial} over the variables $x_1,\dots,x_n$ is a polynomial of the form
    \[p(x_1,\dots,x_n) = \sum_{S \subseteq \ind} a_S \prod_{i \in S} x_i\]
  with real coefficients $a_S \in \R$. The \emph{degree} of $p$ is the size of the largest subset $S \subseteq \ind$ with a non-zero coefficient: $\deg(p) = \max_{a_S \neq 0} \abs{S}$.
\end{definition}

A fundamental result in Boolean function analysis is that any function defined on the Boolean hypercube can be represented as a multilinear polynomial. Moreover, this representation is unique.

\begin{lemma}
  For any function $f : \rn^n \ra \R$, there exists a unique multilinear polynomial~$p_f$ over the variables $x_1,\dots,x_n$ such that $f(x) = p_f(x)$ for all $x \in \rn^n$.
\end{lemma}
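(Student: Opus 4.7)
The plan is to establish existence by an explicit construction and then derive uniqueness from a dimension count. For existence, I would write down the Lagrange-style interpolation formula
\[p_f(x_1,\dots,x_n) = \sum_{y \in \rn^n} f(y) \prod_{i \,:\, y_i = 1} x_i \prod_{i \,:\, y_i = 0} (1 - x_i).\]
For any fixed $x \in \rn^n$ each factor in the product is either $0$ or $1$, and the entire product equals $1$ precisely when $y = x$; this immediately gives $p_f(x) = f(x)$ on the Boolean cube. Expanding the $(1-x_i)$ factors via distributivity produces a sum of monomials in which each variable appears to the power $0$ or $1$, so $p_f$ is multilinear.

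For uniqueness I would appeal to a dimension argument. The space of real-valued functions on $\rn^n$ is an $\R$-vector space of dimension $2^n$, and the space of multilinear polynomials in $x_1,\dots,x_n$ is spanned by the $2^n$ monomials $\{\prod_{i \in S} x_i : S \subseteq \ind\}$, hence has dimension at most $2^n$. The restriction-to-$\rn^n$ map from multilinear polynomials to functions on the cube is $\R$-linear, and the construction of $p_f$ shows it is surjective. Any surjective linear map between finite-dimensional spaces whose target has dimension $\geq$ the source's dimension must be bijective and the two dimensions must coincide, so uniqueness follows at once. As a sanity alternative, one can invert the relation $f(y) = \sum_{S \subseteq \supp(y)} a_S$ explicitly by Möbius inversion, obtaining the closed form $a_S = \sum_{T \subseteq S} (-1)^{|S|-|T|} f(\mathbbm{1}_T)$, which shows directly that the coefficients of a multilinear representative of $f$ are forced.

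The argument is essentially immediate once one guesses the interpolation formula, so there is no genuine obstacle. The one place to be careful is in justifying that the monomials span (and are not merely a generating set of larger size): this is cleanest to deduce as a byproduct of the surjectivity of the evaluation map together with the dimension equality, which is the route I would take to keep the exposition crisp.
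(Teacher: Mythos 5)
Your existence construction is exactly the one in the paper: the Lagrange-style interpolation $p_f(x) = \sum_{y} f(y) \prod_{i : y_i=1} x_i \prod_{i : y_i=0}(1-x_i)$, which the paper builds from the indicator polynomials $p_y$. The paper explicitly leaves uniqueness to the reader, and your dimension-count argument is a clean and correct way to supply it: surjectivity of the restriction map (from your existence construction) onto the $2^n$-dimensional space of functions on the cube, combined with the span bound of $2^n$ on multilinear polynomials, forces the map to be an isomorphism by rank--nullity. Your Möbius-inversion aside gives the coefficients explicitly and is also correct, so the proposal is a complete and faithful rendering of what the paper proves and what it defers.
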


\begin{proof}
  We construct a multilinear polynomial $p_f$ that coincides with $f$. The proof of uniqueness is left to the reader. We start with indicator functions. For any $y \in \rn^n$, let $1_y : \rn^n \ra \rn$ denote the indicator function that evaluates to~$1$ if and only if the input is $x = y$. The multilinear polynomial $p_y(x) = \prod_{i : y_i = 1} x_i \prod_{i : y_i = 0} (1-x_i)$ coincides with~$1_y$ on all Boolean inputs. If $f$ is an arbitrary function, it suffices to take the linear combination $p_f(x) = \sum_{y \in \rn^n} f(y) p_y(x)$.
\end{proof}

There are several measures of complexity associated with the polynomial representation of a Boolean function. It is often relevant to also consider the polynomials that are close to~$f$ in some sense. Here, we will focus on the exact degree $\deg(f) = \deg(p_f)$, and the approximate degree $\adeg(f)$, which is obtained by minimizing over all polynomials that \emph{pointwise approximate}~$f$.

\begin{definition}[Exact and approximate degrees]
  Consider a Boolean function $\rf$. The \emph{(exact) degree}~$\deg(f)$ of~$f$ is the degree of its multilinear polynomial representation: $\deg(f) = \deg(p_f)$. The \emph{approximate degree} $\adeg(f)$ of~$f$ is the smallest integer $d$ such that there exists a polynomial~$p$ of degree $d$ that approximates $f$ in the sense
    \[\abs{p(x) - f(x)} \leq 1/3\]
  for all $x \in \rn^n$.
\end{definition}

Notice that the exact degree is always at most~$n$, and the approximate degree is less than or equal to it. On the other hand, it was recently established~\cite[Theorems 1 and 18]{ABK21c} that the gap between the two quantities is at most quadratic,
  \begin{equation}
    \label{Eq:deg}
    \adeg(f) \leq \deg(f) \leq \min\set[\big]{n,9\adeg(f)^2}. 
  \end{equation}
The \orf\ function provides an example that saturates this bound (a direct proof of the lower bound $\adeg(\orf) = \om{\sqrt{n}}$ will be given in \Cref{Prop:ordeg}).

\begin{proposition}[Example 3.11 in~\cite{NS94j}]
  \label{Prop:orApprox}
  The exact and approximate degrees of the \orf\ function satisfy, respectively, $\deg(\orf) = n$ and $\adeg(\orf) = \bo{\sqrt{n}}$.
\end{proposition}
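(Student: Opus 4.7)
I will handle the two claims separately.

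\emph{Exact degree.} The plan is to write down the unique multilinear representation of $\orf$ explicitly. Since $\orf(x) = 0$ iff $x = 0^n$, the polynomial
\[
  p_{\orf}(x) \;=\; 1 - \prod_{i=1}^n (1-x_i)
\]
agrees with $\orf$ on $\{0,1\}^n$ and is multilinear, so by uniqueness it is $p_{\orf}$. Expanding the product gives $p_{\orf}(x) = -\sum_{\emptyset \ne S \subseteq [n]} (-1)^{|S|} \prod_{i \in S} x_i$, and the monomial $x_1 \cdots x_n$ appears with coefficient $(-1)^{n+1} \neq 0$. Hence $\deg(\orf) = n$.

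\emph{Approximate degree.} I will use a \emph{symmetrization + Chebyshev} construction, which is the canonical way to build low-degree approximations of symmetric Boolean functions. Since $\orf$ is symmetric, it suffices to exhibit a univariate real polynomial $q$ of degree $d = O(\sqrt{n})$ with $q(0) \in [-1/3, 1/3]$ and $q(k) \in [2/3, 4/3]$ for every integer $k \in \{1, \ldots, n\}$; then $p(x_1,\dots,x_n) := q(x_1+\cdots+x_n)$, reduced modulo $x_i^2 = x_i$, is a multilinear polynomial of degree at most $d$ that approximates $\orf$ pointwise to within $1/3$.

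The main step is producing $q$. I would take $d = \lceil \sqrt{n} \rceil$ and use the degree-$d$ Chebyshev polynomial $T_d$, which satisfies $|T_d(y)| \leq 1$ on $[-1,1]$ and grows quickly just outside. Let $s(t) = (2t - (n+1))/(n-1)$, so that $s$ maps $\{1,\dots,n\}$ into $[-1,1]$ while $s(0) = -1 - 2/(n-1)$ lies just outside. Define
\[
  q(t) \;=\; 1 - \frac{T_d(s(t))}{T_d(s(0))}.
\]
Then $q(0) = 0$ exactly, and for $t \in \{1,\dots,n\}$ the numerator is bounded by $1$ in absolute value while the denominator satisfies $|T_d(s(0))| = T_d(1 + 2/(n-1))$. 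Using the hyperbolic form $T_d(\cosh \theta) = \cosh(d\theta)$ and $\mathrm{arccosh}(1+\delta) \sim \sqrt{2\delta}$, I get $T_d(1 + 2/(n-1)) \geq \cosh(d \cdot \sqrt{4/(n-1)}) \geq \cosh(2) > 3$ for $d = \lceil\sqrt{n}\rceil$ and $n$ sufficiently large, hence $|q(t) - 1| \leq 1/3$ on $\{1,\dots,n\}$. (Small $n$ are handled trivially since $\adeg(\orf) \leq n$.) This yields $\adeg(\orf) \leq d = O(\sqrt{n})$.

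\emph{Main obstacle.} The exact-degree part is routine. The genuine content is the Chebyshev bound $T_d(1 + c/d^2) = \Theta(1)$, which makes the construction work precisely at $d = \Theta(\sqrt{n})$; this is the tight place where the quadratic gap between $\adeg$ and the naive bound $n$ appears, and one has to verify it carefully via the $\cosh$ identity rather than a cruder Taylor estimate.
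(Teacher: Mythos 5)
Your exact-degree argument and your overall plan for the approximate degree (symmetrize to a univariate polynomial, then use a shifted/scaled Chebyshev polynomial) are the same as the paper's; the paper just defers the details of the Chebyshev construction to references, whereas you try to spell them out.

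The spelled-out part contains a sign error in the key inequality. You claim
\[
T_d\!\left(1 + \tfrac{2}{n-1}\right) \;\geq\; \cosh\!\left(d\sqrt{\tfrac{4}{n-1}}\right),
\]
invoking $\mathrm{arccosh}(1+\delta) \sim \sqrt{2\delta}$. But since $\cosh x \geq 1 + x^2/2$, one in fact has $\mathrm{arccosh}(1+\delta) \leq \sqrt{2\delta}$, so $T_d(1+\delta) = \cosh\bigl(d\,\mathrm{arccosh}(1+\delta)\bigr) \leq \cosh(d\sqrt{2\delta})$: your chain produces an \emph{upper} bound on $T_d(1+\delta)$, which is the wrong direction — you need a lower bound to conclude $|q(t)-1| \leq 1/3$. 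The conclusion is still true and can be rescued cleanly. With $y = 1 + 2/(n-1) = (n+1)/(n-1)$, one computes $y + \sqrt{y^2-1} = \frac{(\sqrt{n}+1)^2}{n-1} = \frac{\sqrt{n}+1}{\sqrt{n}-1}$, so
\[
\mathrm{arccosh}\!\left(1 + \tfrac{2}{n-1}\right) \;=\; \ln\frac{\sqrt{n}+1}{\sqrt{n}-1},
\]
and since $m \mapsto m\ln\frac{m+1}{m-1}$ decreases monotonically to $2$ as $m \to \infty$, we get $\lceil\sqrt{n}\rceil \cdot \mathrm{arccosh}(1 + 2/(n-1)) > 2$ for all $n \geq 2$, hence $T_d(1+2/(n-1)) > \cosh 2 > 3$ with $d = \lceil\sqrt{n}\rceil$. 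Alternatively, a crude but correctly-oriented bound such as $\mathrm{arccosh}(1+\delta) \geq \sqrt{\delta}$ for $\delta \leq 1$ also works if you take $d = \lceil c\sqrt{n}\rceil$ with a slightly larger constant $c$. Either way, replace the reversed inequality with a genuine lower bound on $\mathrm{arccosh}$.
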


\begin{proof}
  The exact degree of \orf\ is easily obtained from the fact that its polynomial representation is $\orf(x_1,\dots,x_n) = 1 - (1-x_1)(1-x_2)\dots(1-x_n)$.

  For the approximate degree, we observe that it suffices to construct a \emph{univariate} polynomial~$q(z)$ of degree $\bo{\sqrt{n}}$ such that $q(0) \in [0,1/3]$ and $q(z) \in [2/3,1]$ for all $z \in \set{1,\dots,n}$. Indeed, the multivariate polynomial $p(x) = q(x_1+\dots+x_n)$ is an approximation of the \orf\ function of degree $\deg(p) \leq \deg(q)$. The desired polynomial $q$ can be obtained out of the $k$'th Chebyshev polynomial~$T_k(z)$ of degree $k = \bo{\sqrt{n}}$, by choosing $q(z) = aT_k(bz)+c$ for some constants $a,b,c \in \R$. The construction is detailed in~\cite{NS94j} or~\cite[Lemma 7]{BT22j}.
\end{proof}

One can expect that the functions with high degrees are harder to compute. A result in that direction is easy to establish for the deterministic and randomized query complexities.

\begin{proposition}[Classical polynomial method]
  The deterministic and randomized query complexities of any function $f : \rn^n \ra \rn$ satisfy, respectively, $D(f) \geq \deg(f)$ and $R(f) \geq \adeg(f)$.
\end{proposition}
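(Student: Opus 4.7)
The plan is to show that any (deterministic or randomized) decision tree of depth $T$ induces a multilinear polynomial of degree at most $T$ that represents (respectively, approximates) $f$, and then invoke the minimality of $\deg(f)$ and $\adeg(f)$.

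First I would handle the deterministic case. Given a decision tree of depth $T = D(f)$, associate to each leaf $\ell$ the path leading to it: a sequence of at most $T$ conditions of the form $x_i = 0$ or $x_i = 1$. The indicator that input $x$ reaches leaf $\ell$ is the product $p_\ell(x) = \prod_{i : \text{path requires } x_i = 1} x_i \cdot \prod_{i : \text{path requires } x_i = 0} (1 - x_i)$, which is a multilinear polynomial of degree at most $T$. Since exactly one leaf is reached on any input and the output on that input is the label of its leaf, we get $f(x) = \sum_{\ell \text{ accepting}} p_\ell(x)$ for all $x \in \rn^n$. This is a multilinear polynomial of degree at most $T$ that agrees with $f$ on $\rn^n$, and by uniqueness of the multilinear representation it must equal $p_f$. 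Hence $\deg(f) \leq T = D(f)$.

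Next I would handle the randomized case. A randomized algorithm of query complexity $T = R(f)$ is a distribution $\mathcal{D}$ over deterministic trees of depth $T$. For each tree $t$ in the support, the previous paragraph yields a multilinear polynomial $q_t$ of degree at most $T$ such that $q_t(x) \in \rn$ equals the output of $t$ on input $x$. Define $p(x) = \mathbb{E}_{t \sim \mathcal{D}}[q_t(x)]$, which is again a multilinear polynomial of degree at most $T$ (a convex combination of such polynomials). For any $x$, $p(x)$ is precisely the probability that the randomized algorithm outputs $1$ on $x$. The success condition $\psucc \geq 2/3$ forces $p(x) \in [2/3, 1]$ when $f(x) = 1$ and $p(x) \in [0, 1/3]$ when $f(x) = 0$, so $|p(x) - f(x)| \leq 1/3$ for all $x \in \rn^n$. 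This certifies $\adeg(f) \leq \deg(p) \leq T = R(f)$.

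There is no real obstacle here: both claims reduce to the observation that a depth-$T$ decision tree naturally yields a degree-$T$ multilinear polynomial through its leaf indicators. The only subtlety worth double-checking is that multilinearity is preserved when expanding products like $\prod_i x_i \prod_j (1 - x_j)$ over disjoint index sets, which holds because the indices appearing along a single root-to-leaf path are distinct. The randomized case is then just a convex combination, which preserves the degree bound.
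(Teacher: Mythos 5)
Your proof is correct and essentially matches the paper's argument; the only difference is presentational. Where you decompose $f$ directly as a sum of leaf-indicator monomials, the paper builds the same polynomial by induction on the depth of the tree via the recursion $p = (1-x_i)\,p_L + x_i\,p_R$, and the randomized case is handled identically in both. One small point to tighten: you assert that ``the indices appearing along a single root-to-leaf path are distinct,'' but the model in \Cref{Def:classQuery} does not forbid a tree from re-querying an index. This is a harmless normalization -- a repeated query is redundant and can be eliminated without increasing depth, so one may assume it without loss of generality -- but it should be stated as a WLOG rather than as a given. (Alternatively, even without that normalization, the constructed $p$ still has degree $\leq T$; one then multilinearizes by replacing each $x_i^k$ with $x_i$, which preserves values on $\rn^n$ and does not increase degree, and concludes $\deg(f) \leq T$ by uniqueness. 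The paper's own parenthetical about $p$ being multilinear ``when the algorithm never queries the same index twice'' is gesturing at the same subtlety.)
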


\begin{proof}
  We first consider the case of deterministic query algorithms. By \Cref{Def:classQuery}, there exists a decision tree of depth $T = D(f)$ computing $f$. We construct a polynomial of degree at most $T$ in the variables $x_1,\dots,x_n$ that coincides with the output of the tree. The proof proceeds by induction on the depth. If the tree consists of a single leaf ($T = 0$), the algorithm makes no query and the corresponding polynomial is simply the constant output associated with the leaf. For depth $T \geq 1$, let $i$ be the index queried at the root of the tree, and let~$p_L$ (resp.~$p_R$) be the polynomial of degree at most~$T-1$ constructed recursively for the left (resp. right) subtree of the root. Then, the polynomial $p = (1-x_i)p_L + x_i p_R$ is of degree at most $T$ and $p(x) = f(x)$ for all $x \in \rn^n$. We conclude that $\deg(f) \leq \deg(p) \leq T = D(f)$ (the equality $\deg(f) = \deg(p)$ holds if $p$ is also multilinear, which happens, for instance, when the algorithm never queries the same index twice).

  Now, consider a randomized query algorithm with query complexity $T = R(f)$. Such an algorithm corresponds to a probabilistic distribution over decision trees $D$ of depth at most~$T$. Using the same construction as above, we associate with each tree $D$ a polynomial $p_D$ of degree at most $T$ that coincides with the output of that tree. We then define the polynomial $p$ as the linear combination of the polynomials $\alpha_D p_D$, where $\alpha_D \in [0,1]$ is the probability of selecting $D$ in the distribution.
  The crucial observation is that $p(x)$ corresponds to the probability that the randomized algorithm outputs $1$ on input $x \in \rn^n$. By assumption, this probability is at least $p(x) \geq 2/3$ when $f(x) = 1$, and at most $p(x) \leq 1/3$ when $f(x) = 0$. Thus,~$p$ is an approximating polynomial for~$f$. Since the approximate degree is the smallest degree of an approximating polynomial for~$f$, we conclude that $\adeg(f) \leq \deg(p) \leq T = R(f)$.
\end{proof}

The example of the \orf\ function shows that the lower bound on $R(f)$ in terms of the approximate degree is not necessarily tight, since $\adeg(\orf) = \bo{\sqrt{n}}$ but $R(\orf) = \om{n}$. This comes as no surprise since the crux of the polynomial method will be to show that $\adeg(f)/2$ is in fact a lower bound on the quantum query complexity. The proof exploits the following fundamental property, stating that the amplitudes of a quantum state after one query are univariate polynomials in the input bits $x_1,\dots,x_n$.

\begin{lemma}
  \label{Lem:poly}
  For any input $x \in \rn^n$, the effect of the quantum oracle operator on a basis state is
    \[\ora_x\ket{i,b} = (1-x_i) \ket{i,b} + x_i \ket{i,b\oplus 1}\]
  for all $i \in \ind$ and $b \in \rn$.
\end{lemma}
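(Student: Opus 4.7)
The plan is to prove this identity by a direct case analysis on the value of $x_i \in \{0,1\}$, since by definition $\ora_x\ket{i,b} = \ket{i,b\oplus x_i}$, and the right-hand side is also a piecewise-defined expression in $x_i$. Because $x_i$ only takes two values, verifying the identity on each case is sufficient to establish it for all Boolean inputs.

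First I would fix an arbitrary $i \in \{1,\dots,n\}$ and $b \in \{0,1\}$ and split into two cases. In the case $x_i = 0$, the oracle acts trivially on the value register because $b \oplus 0 = b$, yielding $\ora_x\ket{i,b} = \ket{i,b}$; meanwhile, the right-hand side becomes $1 \cdot \ket{i,b} + 0 \cdot \ket{i,b\oplus 1} = \ket{i,b}$, matching exactly. In the case $x_i = 1$, the oracle flips the value register because $b \oplus 1$ is the flipped bit, so $\ora_x\ket{i,b} = \ket{i,b\oplus 1}$; on the right-hand side, the coefficients become $0$ and $1$, again giving $\ket{i,b\oplus 1}$.

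There is no real obstacle here: the statement is essentially a rewriting of the definition of $\ora_x$ that replaces the XOR operation on bits by an arithmetic expression involving $x_i$ as an integer. The value of this reformulation, which the next section of the document exploits, is that the amplitudes $(1-x_i)$ and $x_i$ are linear polynomials in the input variable $x_i$; this is the seed of the polynomial method, since it will allow an inductive argument showing that after $t$ queries the amplitudes of the algorithm's state are multilinear polynomials of degree at most $t$ in $x_1,\dots,x_n$.
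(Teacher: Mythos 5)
Your proof is correct and takes essentially the same approach as the paper, which simply observes that the identity follows directly from the definition $\ora_x\ket{i,b} = \ket{i,b\oplus x_i}$. You have merely spelled out the two-case verification that the paper leaves implicit.
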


\begin{proof}
  This follows directly from the definition: $\ora_x\ket{i,b} = \ket{i,b\oplus x_i}$.
\end{proof}

By applying the lemma repeatedly, one obtains that the amplitudes after $T$ queries are multilinear polynomials of degree $T$. Since the probability of outputting~$1$ is the \emph{square} of an amplitude, it follows that any algorithm can be transformed into an approximating polynomial for~$f$ whose degree is twice the number of quantum queries.

\begin{theorem}[Polynomial method]
  \label{Thm:polynomial}
  Consider a quantum algorithm that computes a function $\rf$ using~$T$ queries to its input. Let $p(x) \in [0,1]$ denote the probability that the algorithm outputs $1$ on the input $x \in \rn^n$. Then,
  \begin{itemize}
    \item (Approximation) $\abs{p(x) - f(x)} \leq 1/3$ for all $x \in \rn^n$,
    \item (Polynomial degree) $\deg(p) \leq 2T$.
  \end{itemize}
  In particular, the approximate degree of $f$ is at most $\adeg(f) \leq 2T$.
\end{theorem}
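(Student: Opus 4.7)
The plan is to dispatch the two bullets separately. The approximation claim is immediate from \Cref{Def:canon}: since the algorithm computes $f$, its success probability satisfies $\psucc \geq 2/3$. When $f(x)=1$, this means $p(x) = \norm{(\id\otimes\proj{1})\psx{T}}^2 \geq 2/3$, so $|p(x)-f(x)|\leq 1/3$; when $f(x)=0$, we have $1-p(x)\geq 2/3$, giving the same bound. No further work is needed there.

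For the degree bound, the strategy is induction on $t$ to show that for every basis state $\ket{i,b}$, the amplitude $\alpha_{i,b}^{(t)}(x) := \braket{i,b}{\,}{\psi_x^{t}}$, viewed as a function of $x\in\rn^n$, is computed by a multilinear polynomial of degree at most $t$. The base case $t=0$ is trivial because $\psx{0}=U_0\ket{0,0}$ is independent of $x$, so every amplitude is a constant (degree $0$) polynomial. For the inductive step, I would split the transition $\psx{t}\mapsto\psx{t+1} = U_{t+1}\ora_x\psx{t}$ into two parts. Applying $\ora_x$ first: by \Cref{Lem:poly}, $\ora_x$ sends $\ket{i,b}$ to $(1-x_i)\ket{i,b}+x_i\ket{i,b\oplus 1}$, so the new amplitude of $\ket{i,b}$ becomes $(1-x_i)\alpha_{i,b}^{(t)}(x)+x_i\alpha_{i,b\oplus 1}^{(t)}(x)$, which is a polynomial of degree at most $t+1$. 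Applying $U_{t+1}$ next only takes a fixed linear combination of these amplitudes, which cannot increase the degree. Finally, since we are only evaluating on Boolean inputs we may reduce modulo the relations $x_i^2=x_i$ to obtain a multilinear representative of the same degree, completing the induction.

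With the induction in hand, every amplitude of $\psx{T}$ is a multilinear polynomial of degree at most $T$ in $x$ (with complex coefficients, which I would split into real and imaginary parts to stay within the real multilinear framework of the earlier lemma). The output probability is
\[
  p(x) \;=\; \norm{(\id\otimes\proj{1})\psx{T}}^2 \;=\; \sum_{i=1}^{n}\abs{\alpha_{i,1}^{(T)}(x)}^2,
\]
and each term $|\alpha_{i,1}^{(T)}(x)|^2$ is a sum of squares of two real multilinear polynomials of degree at most $T$. Hence $p$ is a real polynomial of degree at most $2T$, and after the canonical reduction modulo $x_i^2=x_i$ it becomes a multilinear polynomial of degree at most $2T$. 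Combined with the approximation bullet, this shows $p$ is a polynomial approximating $f$ in the sense of the definition of $\adeg(f)$, yielding $\adeg(f)\leq 2T$.

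The only mildly delicate point will be the multilinearity bookkeeping: one must be careful that the inductive statement is phrased as \emph{``equals a multilinear polynomial of degree $\leq t$ on $\rn^n$''} rather than ``is literally written as such a polynomial,'' since the raw expression produced by the recursion will contain repeated powers of $x_i$. The clean fix is to perform the reduction $x_i^k\mapsto x_i$ at each step (or only at the very end, invoking uniqueness of the multilinear representation); both are routine once the setup is in place.
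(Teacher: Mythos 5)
Your proposal is correct and follows essentially the same route as the paper: the approximation bullet is read directly off the definition of $\psucc$, and the degree bullet is proved by induction on $t$, using \Cref{Lem:poly} to show each amplitude $\braket{i,b}{\,}{\psi_x^t}$ is a degree-$\leq t$ polynomial in $x$, then squaring and summing to get $\deg(p)\leq 2T$. The only cosmetic difference is in the inductive step: you track amplitudes forward through $\ora_x$ then $U_{t+1}$, while the paper computes $\braket{i,b}{U_{t+1}\ora_x}{\psi_x^t}$ by expanding $U_{t+1}^\dagger\ket{i,b}$ in the standard basis; these are the same linear-algebra calculation organized slightly differently. Your extra remarks about reducing $x_i^2\mapsto x_i$ and splitting into real and imaginary parts are correct and, if anything, supply bookkeeping that the paper leaves implicit.
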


\begin{proof}
  The first point is immediate, as the probability of outputting $1$ must be $p(x) \geq 2/3$ when $f(x) = 1$ and $p(x) \leq 1/3$ when $f(x) = 0$.

  For the second point, let $\psx{t}$ denote the intermediate states of the algorithm after $t \in \set{0,\dots,T}$ queries on the input $x \in \rn^n$. By definition, the probability of outputting $1$ is
    \[p(x) = \norm{(\id \otimes \proj{1}) \psx{T}}^2 = \sum_{1\leq i \leq n} \abs{\ip{i,1}{\psi_{x}^T}}^2.\]
  It is sufficient to show that, for all $i \in \ind$ and $b \in \rn$, the complex-valued function
    $x \mapsto \ip{i,b}{\psi_{x}^t}$
  is a polynomial in $x$ of degree at most $t$. We prove it by induction on $t$. The base case is trivial since $\psx{0} = U_0\ket{0,0}$ and $\ip{i,b}{\psi_{x}^0}$ is independent of $x$. Suppose that the statement holds for $t$. Then, for $t+1$, the inner product is $\ip{i,b}{\psi_{x}^{t+1}} = \braket{i,b}{U_{t+1}\ora_x}{\psi_{x}^{t}}$. Let $U^{\dagger}_{t+1}\ket{i,b} = \sum_{1 \leq j \leq n,c \in \rn} \alpha_{j,c}^{\dagger} \ket{j,c}$ denote the decomposition of the state $U^{\dagger}_{t+1}\ket{i,b}$ in the standard basis, where the complex numbers $\alpha_{j,c}$ are \emph{independent} of $x$. By \Cref{Lem:poly}, the inner product is
    \[\ip{i,b}{\psi_{x}^{t+1}} = \sum_{1 \leq j \leq n,c \in \rn} \alpha_{j,c} \pt*{(1-x_j)\ip{j,c}{\psi_{x}^t} + x_j \ip{j,c\oplus 1}{\psi_{x}^t}}.\]
  By the induction hypothesis, each term $\ip{j,c}{\psi_{x}^t}$ and $\ip{j,c\oplus 1}{\psi_{x}^t}$ are multivariate polynomials in $x$ of degree at most $t$. Hence, $\ip{i,b}{\psi_{x}^{t+1}}$ is of degree at most~$t+1$.
\end{proof}

This result has several interesting consequences in approximation theory and quantum query lower bounds. For instance, the existence of Grover's algorithm provides an alternative proof of \Cref{Prop:orApprox} on the approximate degree of the \orf\ function. These types of ``algorithmically-inspired'' polynomials are discussed in more detail in~\cite[Section 4]{BT22j}.
Another crucial implication in query complexity is that no algorithm can compute a function~$f$ using fewer than~$\adeg(f)/2$ queries, as this would result in an approximating polynomial of degree less than~$\adeg(f)$.

\begin{corollary}
  \label{Cor:poly}
  The quantum query complexity of any function $f : \rn^n \ra \rn$ is at least~$Q(f) \geq \adeg(f)/2$.
\end{corollary}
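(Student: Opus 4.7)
The plan is to derive this bound as a direct consequence of \Cref{Thm:polynomial}. There is essentially no new mathematical content here; the work was done in establishing the theorem, and the corollary only requires unpacking the definition of $\adeg(f)$.

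First I would let $T = Q(f)$ and invoke \Cref{Def:qc} to fix a quantum query algorithm that computes $f$ using exactly $T$ queries. Applying \Cref{Thm:polynomial} to this algorithm yields a real multilinear polynomial $p$ on the variables $x_1,\dots,x_n$ such that (i) $|p(x) - f(x)| \leq 1/3$ for every $x \in \{0,1\}^n$, and (ii) $\deg(p) \leq 2T$. Condition (i) is exactly the criterion for $p$ to be a pointwise approximation of $f$ in the sense of the definition of approximate degree, so $p$ is an admissible candidate in the minimization defining $\adeg(f)$.

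Next I would invoke the definition of $\adeg(f)$ as the smallest degree among such approximating polynomials, which immediately gives $\adeg(f) \leq \deg(p) \leq 2T = 2Q(f)$. Rearranging yields the claimed lower bound $Q(f) \geq \adeg(f)/2$.

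There is no real obstacle in this argument; the only subtlety worth commenting on is that \Cref{Thm:polynomial} is stated for the output probability $p(x)$ of a fixed algorithm, and one should note that it is automatically a polynomial in the Boolean variables (which can be taken multilinear since $x_i^2 = x_i$ on $\{0,1\}$), so it falls within the class over which $\adeg(f)$ is minimized. Everything else is bookkeeping between the definitions.
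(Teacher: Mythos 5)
Your argument is correct and matches the paper exactly: the paper gives no separate proof for this corollary because the final sentence of \Cref{Thm:polynomial} already states $\adeg(f) \leq 2T$, so the corollary is just the rearrangement you describe. Your note about multilinearity is a fine sanity check, though it is already built into the definition of approximate degree as stated.
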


Together with \Cref{Eq:deg}, it also implies that $Q(f) \geq \sqrt{\deg(f)}/6$. The latter inequality yields, for instance, that~$Q(\orf),Q(\parf) \geq \sqrt{n}/6$ (which is not optimal in case of the \parf\ function). We will develop more direct methods for analyzing the approximate degree in the next section.


  \subsection{Applications}
  \label{Sec:polyAppli}

We present three applications of the polynomial method, each exploiting the multilinear polynomial constructed in \Cref{Thm:polynomial} in different ways.


\fakeparagraph{Application 1: Distinguishing distributions.}
We begin with a simple application of the polynomial method, although in a problem setting that slightly differs from what we have considered so far. Instead of computing a Boolean function $f$, we consider the problem of distinguishing between two probability distributions~$\dis_0$ and~$\dis_1$.

\begin{definition}[Distinguishing problem]
  Let $\dis_0$ and $\dis_1$ be two distributions over the set~$\rn^n$. In the \emph{$(\dis_0,\dis_1)$-distinguishing problem}, the algorithm is given oracle access to an input $x \in \rn^n$ sampled either from $\dis_0$ or $\dis_1$, and it must output a bit $b \in \rn$ that maximizes the \emph{distinguishing advantage}:
    \[\abs[\big]{\pr{b = 1 \given x \sim \dis_0} - \pr{b = 1 \given x \sim \dis_1}}.\]
  We say that two distributions are \emph{indistinguishable} by a class of algorithms if no algorithm in that class can make the advantage nonzero. Distributions that are indistinguishable from the uniform distribution $\unif$ are called \emph{pseudorandom}.
\end{definition}

The construction of pseudorandom distributions is a topic of central importance in cryptography and in complexity theory.
We show that the following \emph{$k$-wise independence} condition is sufficient for a distribution to be pseudorandom against quantum algorithms that make few queries.

\begin{definition}[$k$-wise independence]
  We say that a distribution $\dis$ over $\rn^n$ is \emph{$k$-wise independent} if for all subset $S \subseteq \ind$ of size at most~$k$, its marginal distribution on the coordinates indexed by~$S$ is uniform: $\pr_{x \sim \dis}{x_i = a_i, \forall i \in S} = 2^{-\abs{S}}$ for all choices of~$a_i \in \rn$.
\end{definition}

It is easy to see that $k$-wise independent distributions are pseudorandom for classical algorithms that make at most~$k$ queries, since the knowledge of any~$k$ input bits provides no information on whether the distribution is uniform or not.
Using the polynomial method, we demonstrate a similar result for quantum algorithms, although with a slightly stronger assumption on the number of queries.

\begin{proposition}[Polynomial method applied to the distinguishing problem]
  Let~$\dis$ be a $2k$-wise independent distribution over $\rn^n$. Then,~$\dis$ is pseudorandom for the class of quantum quantum algorithms that make at most~$k$ queries.
\end{proposition}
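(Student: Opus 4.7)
The plan is to leverage \Cref{Thm:polynomial} to express the acceptance probability of any $k$-query quantum algorithm as a multilinear polynomial of degree at most $2k$, and then observe that the expectation of any such polynomial is the same under $\mu$ and $\unif$ thanks to the $2k$-wise independence assumption.

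More concretely, I would first fix an arbitrary quantum algorithm $\mathcal{A}$ making $T \leq k$ queries, and let $p(x)$ denote the probability that $\mathcal{A}$ outputs $1$ on input $x$. By \Cref{Thm:polynomial}, $p$ is a real multilinear polynomial in $x_1, \dots, x_n$ of degree at most $2T \leq 2k$, so it can be expanded as
\[
    p(x) = \sum_{S \subseteq \ind,\, \abs{S} \leq 2k} a_S \prod_{i \in S} x_i
\]
for some real coefficients $a_S$. The distinguishing advantage of $\mathcal{A}$ between $\mu$ and $\unif$ is $\abs{\ex_{x \sim \mu}{p(x)} - \ex_{x \sim \unif}{p(x)}}$, so by linearity of expectation it suffices to show that $\ex_{x \sim \mu}\bc*{\prod_{i \in S} x_i} = \ex_{x \sim \unif}\bc*{\prod_{i \in S} x_i}$ for every $S$ of size at most $2k$.

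Next I would compute this common value. For any $S \subseteq \ind$, the monomial $\prod_{i \in S} x_i$ equals the indicator of the event $\set{x_i = 1, \forall i \in S}$, so $\ex_{x \sim \nu}\bc*{\prod_{i \in S} x_i} = \pr_{x \sim \nu}{x_i = 1,\ \forall i \in S}$ for any distribution $\nu$. By the $2k$-wise independence of $\mu$, this probability equals $2^{-\abs{S}}$ when $\abs{S} \leq 2k$, which is exactly the same value obtained under $\unif$. Summing over $S$ with the coefficients $a_S$ then gives $\ex_{x \sim \mu}{p(x)} = \ex_{x \sim \unif}{p(x)}$, so the distinguishing advantage is zero.

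I do not foresee a real obstacle here: the conceptual work is done entirely by \Cref{Thm:polynomial}, which converts a $k$-query quantum algorithm into a polynomial of degree at most $2k$, and then the $2k$-wise independence precisely matches the degree budget of this polynomial. The only subtle point worth flagging in the write-up is why the degree $2k$ (rather than $k$) is the right threshold: it comes from the fact that measurement probabilities are squares of amplitudes, which is exactly the doubling already accounted for in \Cref{Thm:polynomial}.
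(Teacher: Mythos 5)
Your proposal is correct and follows essentially the same route as the paper: apply \Cref{Thm:polynomial} to get a degree-$2k$ multilinear polynomial for the acceptance probability, expand it into monomials, and invoke $2k$-wise independence to equate expectations under $\dis$ and $\unif$. The only cosmetic difference is that you explicitly evaluate $\ex{\prod_{i \in S} x_i} = 2^{-\abs{S}}$ for both distributions, whereas the paper simply notes that the marginals on any $\leq 2k$ coordinates agree; this is an equivalent observation.
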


\begin{proof}
  Fix any quantum algorithm that outputs a bit~$b$ after making~$k$ queries to its input~$x$. By \Cref{Thm:polynomial}, the probability~$p(x)$ that it outputs~$b = 1$ on input~$x$ is given by a function $p : \rn^n \ra [0,1]$ of degree at most $\deg(p) \leq 2k$. Let $\set{a_S}_{S \subseteq \ind, \abs{S} \leq 2k}$ denote the coefficients of the multilinear polynomial that coincides with it, $p(x) = \sum_{S \subseteq \ind, \abs{S} \leq 2k} a_S \prod_{i \in S} x_i$. The distinguishing advantage of the algorithm is,
  \switchAMS{
    \begin{align*}
      \abs[\big]{\pr{b = 1 \given x \sim \dis} - \pr{b = 1 \given x \sim \unif}}
       & = \abs[\big]{\ex_{x \sim \dis}{p(x)} - \ex_{x \sim \unif}{p(x)}} \\
       & = \Big|\sum_{S : \abs{S} \leq 2k} a_S \Big(\ex_{x \sim \dis}[\Big]{\prod_{i \in S} x_i} \\
       & \qquad\qquad\qquad\qquad - \ex_{x \sim \unif}[\Big]{\prod_{i \in S} x_i}\Big)\Big|.
    \end{align*}
  }{
   \begin{align*}
     \abs[\big]{\pr{b = 1 \given x \sim \dis} - \pr{b = 1 \given x \sim \unif}}
      & = \abs[\big]{\ex_{x \sim \dis}{p(x)} - \ex_{x \sim \unif}{p(x)}} \\
      & = \abs*{\sum_{S : \abs{S} \leq 2k} a_S \pt*{\ex_{x \sim \dis}[\Big]{\prod_{i \in S} x_i} - \ex_{x \sim \unif}[\Big]{\prod_{i \in S} x_i}}}.
   \end{align*}
  }
  Since $\dis$ is $2k$-wise independent, its marginal distribution over any set~$S$ of at most~$2k$ indices is uniform.
  It implies that the product of any~$2k$ coordinates has the same expected value under the distributions $\dis$ and $\unif$.
  Hence, $\ex_{x \sim \dis}{\prod_{i \in S} x_i} = \ex_{x \sim \unif}{\prod_{i \in S} x_i}$ and the distinguishing advantage of the algorithm is zero.
\end{proof}

This type of application of the polynomial method can be generalized to other problems relevant in cryptography, such as polynomial interpolation~\cite{KK11j}.


\fakeparagraph{Application 2: Symmetrization.}
The polynomial method leads to the study of approximating polynomials whose number $n$ of variables grows with the input size of the problem. Symmetrization refers to a family of techniques that exploit the symmetries of the problem to construct another approximating polynomial with far fewer variables -- ideally one or two -- without increasing the degree. The resulting polynomial typically inherits certain fluctuation properties from the original polynomial (such as the number of roots, large derivative, etc.), which can be exploited to lower bound its degree.

We describe the most standard symmetrization technique, due to Minsky and Papert~\cite{MP69b}, which consists of averaging the polynomial over all inputs with the same Hamming weight. This approach works well for problems that are invariant under permutations of the input bits.

\begin{proposition}[Minsky-Papert symmetrization]
  \label{Prop:mp}
  Let $p(x_1,\dots,x_n)$ be a multilinear polynomial over the variables $x_1,\dots,x_n$. Then, there exists a univariate polynomial $\psym(k)$ of degree at most $\deg(\psym) \leq \deg(p)$ such that, for all integers $k \in \set{0,\dots,n}$,
    \[\psym(k) = \ex_{x \in \rn^n : \abs{x} = k}{p(x)}.\]
\end{proposition}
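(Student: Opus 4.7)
The plan is to expand $p$ in its multilinear monomial basis and reduce the problem to computing the expectation of a single monomial $\prod_{i \in S} x_i$ under the uniform distribution on weight-$k$ strings. Writing $p(x_1,\dots,x_n) = \sum_{S \subseteq \ind} a_S \prod_{i \in S} x_i$, linearity of expectation gives
\[\ex_{x \in \rn^n : \abs{x} = k}{p(x)} = \sum_{S \subseteq \ind} a_S \cdot \pr_{x \in \rn^n : \abs{x} = k}{x_i = 1 \text{ for all } i \in S}.\]
So it suffices to show that each probability above, viewed as a function of $k$, agrees with a polynomial in $k$ of degree at most $\abs{S}$.

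Next I would compute these probabilities by a direct counting argument. For a fixed $S$ with $\abs{S} = s$, fixing the coordinates indexed by $S$ to $1$ leaves $\binom{n-s}{k-s}$ ways to place the remaining $k-s$ ones among the other $n-s$ positions, so the probability equals $\binom{n-s}{k-s}/\binom{n}{k}$. The key algebraic identity to invoke is
\[\frac{\binom{n-s}{k-s}}{\binom{n}{k}} = \frac{k(k-1)\cdots(k-s+1)}{n(n-1)\cdots(n-s+1)},\]
whose right-hand side is a polynomial in $k$ of degree exactly $s$. One must check that this equality also holds at the integer points $k < s$, where the left-hand side vanishes and the right-hand side automatically contains a factor $(k-j)$ equal to zero for some $0 \leq j < s$; this makes the identity valid for every $k \in \set{0,\dots,n}$, which is all we need.

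Finally, I would assemble the answer by defining
\[\psym(k) \;=\; \sum_{S \subseteq \ind} a_S \cdot \frac{k(k-1)\cdots(k-\abs{S}+1)}{n(n-1)\cdots(n-\abs{S}+1)}.\]
This is a univariate polynomial in $k$ whose degree is bounded by $\max_{S : a_S \neq 0} \abs{S} = \deg(p)$, and by construction it agrees with $\ex_{\abs{x}=k}{p(x)}$ on every integer $k \in \set{0,\dots,n\}$. The proof has no real obstacle; the only mildly delicate point is checking that the polynomial formula for $\binom{n-s}{k-s}/\binom{n}{k}$ correctly captures the boundary cases $k < s$, which is handled by the vanishing of the falling factorial $k(k-1)\cdots(k-s+1)$ at those values.
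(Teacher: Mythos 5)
Your proof is correct and follows essentially the same route as the paper's: reduce to a single monomial by linearity, observe that the expectation equals $\binom{n-|S|}{k-|S|}/\binom{n}{k}$, and identify this with the degree-$|S|$ falling-factorial polynomial $k(k-1)\cdots(k-|S|+1)/\bigl(n(n-1)\cdots(n-|S|+1)\bigr)$, checking that the formula also vanishes correctly for $k < |S|$.
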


\begin{proof}
  It is sufficient to prove the existence of $\psym$ when $p$ consists of a single monomial~$p(x) = \prod_{i \in S} x_i$ (the general case follows by linearity of expectation). Let~$d = \abs{S}$ and consider the univariate polynomial $\psym(k) = \frac{k(k-1) \cdots (k-d+1)}{n(n-1)\cdots (n-d+1)}$ of degree $d$. It is a simple calculation to check that,
  \[\psym(k) =
      \left\{\begin{array}{c@{\hspace{2mm}}c@{\hspace{2mm}}ll}
         0 & = & \ex_{x \in \rn^n : \abs{x} = k}{\prod_{i \in S} x_i} & \text{when $k \in \set{0,\dots,d-1}$,} \\[3mm]
         \frac{\binom{n-d}{k-d}}{\binom{n}{k}} & = & \ex_{x \in \rn^n : \abs{x} = k}{\prod_{i \in S} x_i} & \text{when $k \in \set{d,\dots,n}$.}
       \end{array}
     \right.\]
\end{proof}

Notice that the dependence on~$n$ has shifted from the number of variables in $p$ to the size of the domain used to characterize~$\psym$. This often simplifies the degree analysis. We describe three applications of the Minsky-Papert symmetrization.

The first application is to show that the approximate degree of the \parf\ function is~$n$, which is the largest possible value and must be equal to the exact degree (indeed, $\parf(x_1,\dots,x_n) = (1-(1-2x_1)\cdots(1-2x_n))/2$). By \Cref{Cor:poly}, this implies that the quantum query complexity is at least $Q(\parf) \geq n/2$. This is quadratically better than the lower bound $Q(\parf) \geq \sqrt{n}/6$ obtained with the hybrid method (\Cref{Prop:hybridBs}). It is also optimal since Deutsch's algorithm~\cite{Deu85j} can compute the parity of two bits in one quantum query (the algorithm makes no error, and hence the parity of~$n$ bits can be obtained by repeating the algorithm on~$n/2$ different pairs of bits when~$n$ is even).

\begin{proposition}[Symmetrization method applied to \parf]
  The approximate degree of the \parf\ function is equal to $\adeg(\parf) = n$. Hence, the quantum query complexity is at least~$Q(\parf) \geq n/2$.
\end{proposition}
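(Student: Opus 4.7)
The plan is to apply Minsky--Papert symmetrization (\Cref{Prop:mp}) to any polynomial approximating $\parf$ and then use a root-counting argument on the resulting univariate polynomial. Since $\parf$ is permutation-invariant (its output depends only on the Hamming weight $|x|$), symmetrization should preserve all the information relevant to approximation, and the degree bound will come for free.

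More concretely, suppose $p(x_1,\dots,x_n)$ is a multilinear polynomial with $\abs{p(x) - \parf(x)} \leq 1/3$ for all $x \in \rn^n$. By \Cref{Prop:mp}, there exists a univariate polynomial $\psym$ of degree at most $\deg(p)$ such that $\psym(k) = \ex_{x : \abs{x} = k}{p(x)}$ for all $k \in \set{0,\dots,n}$. Because $\parf(x)$ depends only on $\abs{x} \bmod 2$, averaging the approximation bound over inputs of Hamming weight $k$ gives $\abs{\psym(k) - (k \bmod 2)} \leq 1/3$ for every integer $k \in \set{0,\dots,n}$. Thus $\psym(k) \leq 1/3$ when $k$ is even and $\psym(k) \geq 2/3$ when $k$ is odd (and vice versa).

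Next I would invoke the intermediate value theorem. On each of the $n$ consecutive intervals $(k,k+1)$ with $k \in \set{0,\dots,n-1}$, the continuous function $\psym(z) - 1/2$ changes sign between the endpoints, so it has at least one real root inside. This produces $n$ distinct real roots of the polynomial $\psym(z) - 1/2$. Since a nonzero real polynomial of degree $d$ has at most $d$ real roots, we conclude $\deg(\psym) \geq n$, and hence $\deg(p) \geq n$. As $p$ was an arbitrary approximating polynomial, this shows $\adeg(\parf) \geq n$; combined with the trivial bound $\adeg(\parf) \leq \deg(\parf) \leq n$, we obtain equality. Applying \Cref{Cor:poly} yields $Q(\parf) \geq \adeg(\parf)/2 = n/2$.

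There is no real obstacle here: the only thing to verify carefully is that the sign-change argument is legitimate, i.e., that the $1/3$ approximation error at the integer points forces a genuine crossing of $1/2$ strictly between them. This is immediate from the fact that the intervals $[0,1/3]$ and $[2/3,1]$ are disjoint and lie on opposite sides of $1/2$. Everything else is packaged by \Cref{Prop:mp} and \Cref{Cor:poly}.
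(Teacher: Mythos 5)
Your proof is correct and follows essentially the same route as the paper: Minsky--Papert symmetrization, then a sign-change/root-counting argument on the symmetrized univariate polynomial (you use $\psym - 1/2$; the paper uses $1 - 2\psym$, which is the same up to an affine change), finishing with \Cref{Cor:poly}.
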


\begin{proof}
  Consider any multilinear polynomial $p(x_1,\dots,x_n)$ that approximates the \parf\ function. The proof consists of showing that, necessarily, $\deg(p) \geq \deg(\psym) \geq n$. The first inequality is immediate by \Cref{Prop:mp}. The second inequality uses the property that,
    \[\abs{\psym(k)} \leq 1/3 \ \ \mathrm{for} \ k \in \set{0,2,4,\dots} \quad\mathrm{and}\quad \abs{\psym(k) - 1} \leq 1/3 \ \ \mathrm{for} \ k \in \set{1,3,5,\dots}\]
  since $\parf$ evaluates to $0$ on inputs with an even Hamming weight, and to $1$ on inputs with an odd Hamming weight. A possible representation of the polynomial~$\psym$ on the interval~$[0,n]$ is shown in \Cref{Fig:poly}. The above property implies that the polynomial $1-2\psym$ changes sign at least~$n$ times over the interval $[0,n]$. Hence, its number of roots must be at least~$n$ and~$\deg(\psym) \geq n$.
\end{proof}

\begin{figure}
  \centering
  \includegraphics[width=0.8\textwidth,trim=5cm 2.5cm 1.5cm 4.5cm,]{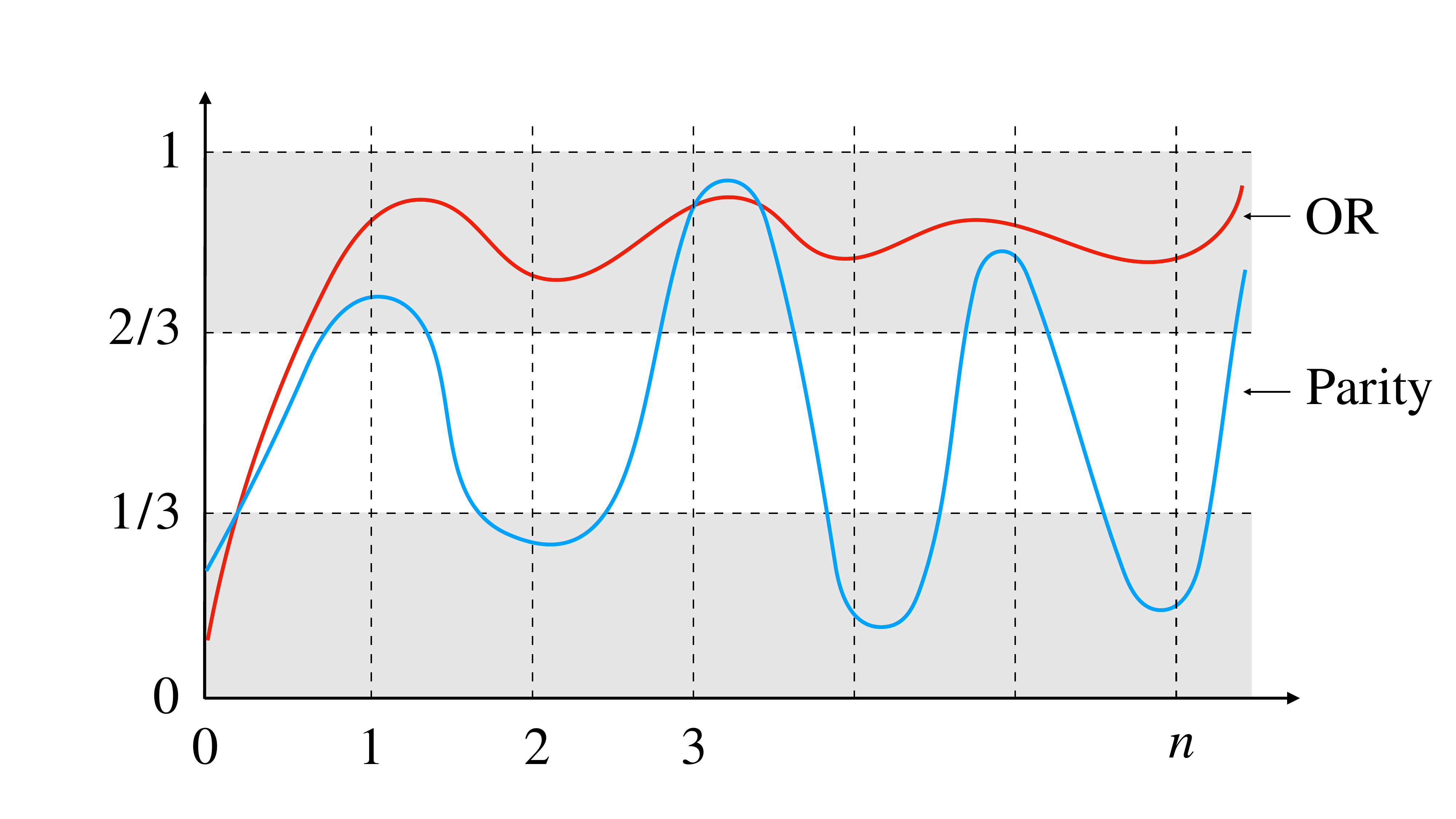}
  \caption{Possible representations of symmetrized polynomials~$\psym$ obeying the constraints for the \orf\ and \parf\ functions.}
  \label{Fig:poly}
\end{figure}

The next application provides another proof that the quantum query complexity of~\orf\ is~$\om{\sqrt{n}}$. The symmetrized polynomial $\psym$ is slightly harder to analyze here, as it requires using an inequality from approximation theory.

\begin{proposition}[Symmetrization method applied to \orf]
  \label{Prop:ordeg}
  The approximate degree of the \orf\ function is at least~$\adeg(\orf) \geq \sqrt{n/6}$. Hence, the quantum query complexity is at least~$Q(\orf) \geq \sqrt{n/24}$.
\end{proposition}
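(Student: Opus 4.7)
The plan is to reduce the approximate-degree question for $\orf$ to a univariate polynomial approximation problem via Minsky--Papert symmetrization, and then to extract the degree lower bound using Markov's inequality from classical approximation theory.

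Let $p$ be a multilinear polynomial approximating $\orf$ to error $1/3$ of degree $d = \deg(p) \leq \adeg(\orf)$. Applying \Cref{Prop:mp} produces a univariate polynomial $\psym$ of degree at most $d$ satisfying $\psym(k) = \ex_{x \in \rn^n : \abs{x} = k}{p(x)}$ for every $k \in \set{0, 1, \dots, n}$. Because $\orf(x) = 0$ precisely when $\abs{x} = 0$, the approximation guarantee translates to $\psym(0) \in [-1/3, 1/3]$ while $\psym(k) \in [2/3, 4/3]$ for $k \in \set{1, \dots, n}$.

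The key observation is that $\psym$ must jump by at least $1/3$ between two consecutive integer points: $\psym(1) - \psym(0) \geq 1/3$. By the mean value theorem, there exists $\xi \in [0, 1]$ with $\abs{\psym'(\xi)} \geq 1/3$. Markov's inequality from approximation theory asserts that any polynomial $q$ of degree $d$ bounded by $M$ on the interval $[0, n]$ satisfies $\max_{x \in [0, n]} \abs{q'(x)} \leq 2Md^2/n$. Combining this with the derivative estimate at $\xi$ yields a bound of the form $d \geq \sqrt{n/(6M)}$; tracking constants carefully produces $\adeg(\orf) \geq \sqrt{n/6}$. The quantum query lower bound $Q(\orf) \geq \sqrt{n/24}$ then follows immediately from \Cref{Cor:poly}.

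The main obstacle is establishing the uniform upper bound on $\psym$ over the full interval $[0, n]$ needed to invoke Markov's inequality: the symmetrization procedure only directly controls $\psym$ at the $n+1$ integer points of $\set{0, 1, \dots, n}$, whereas Markov's inequality requires a supremum bound throughout the continuous interval. I would close this discrete-to-continuous gap using an Ehlich--Zeller--Rivlin--Cheney type theorem, which guarantees that a polynomial of degree $O(\sqrt{n})$ bounded at every integer of $[0, n]$ remains bounded, up to a universal constant factor, on the entire interval. With such an extension lemma in hand, the constants line up to give exactly $\sqrt{n/6}$.
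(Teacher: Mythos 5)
Your argument is essentially the same as the paper's: Minsky--Papert symmetrization, the mean value theorem applied to the jump from $\psym(0)$ to $\psym(1)$, and an Ehlich--Zeller/Rivlin--Cheney type bound converting the large derivative into a degree lower bound. The only difference is that the paper cites a single packaged form of the Ehlich--Zeller and Rivlin--Cheney theorem (a polynomial bounded at integer points of $[0,n]$ with a large derivative somewhere must have degree $\geq \sqrt{cn/(c+b-a)}$), which already fuses the discrete-to-continuous extension with the Markov-type derivative bound, so the two-step decomposition you describe is precisely the internal proof of the lemma the paper invokes.
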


\begin{proof}
  Consider any multilinear polynomial $p$ that approximates the \orf\ function. The polynomial $\psym(k)$ derived from \Cref{Prop:mp} has the property that,
    \[\abs{\psym(0)} \leq 1/3\quad \mathrm{and}\quad  \abs{\psym(k) - 1} \leq 1/3 \ \ \mathrm{for} \ k \in \set{1,\dots,n}\]
  since $\orf$ evaluates to $0$ on inputs of Hamming weight $\abs{x} = 0$, and to $1$ on inputs of Hamming weight $\abs{x} \in \set{1,\dots,n}$. A possible representation of the polynomial~$\psym$ on the interval~$[0,n]$ is shown in \Cref{Fig:poly}. By the Mean Value Theorem, there must exist a real $x \in [0,1]$ such that the derivative at~$x$ of the function~$\psym$ is at least $\psym'(x) \geq 1/3$. We evoke the following lower bound on the degree of polynomials with such large derivatives.
  \begin{lemma}[Ehlich-Zeller and Rivlin-Cheney Theorem]
    Let $a,b,c \in \R$ (with $a < b$ and $c > 0$), $n \in \N$ and $p : \R \ra \R$ be a polynomial such that $p(k) \in [a,b]$ for all integers $k \in \set{0,1,\dots,n}$ and $|p'(x)| \geq c$ for some real $x \in [0,n]$. Then, $\deg(p) \geq \sqrt{cn/(c + b-a)}$.
  \end{lemma}
  A direct application of this lemma to the polynomial~$\psym$ with $a = -1/3$, $b = 4/3$ and $c = 1/3$ leads to the conclusion that $\deg(\psym) \geq \sqrt{n/6}$.
\end{proof}

The next application extends the previous result to a general lower bound on the approximate degree in terms of the block sensitivity (\Cref{Def:bs}).

\begin{proposition}[Symmetrization method applied to block sensitivity]
  \label{Prop:bsPoly}
  The approximate degree of any function $f : \rn^n \ra \rn$ is at least~$\adeg(f) \geq \sqrt{\bs(f)/6}$. Hence, the quantum query complexity is at least~$Q(f) \geq \sqrt{\bs(f)/24}$.
\end{proposition}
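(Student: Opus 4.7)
The plan is to reduce block sensitivity to a ``block-restricted OR-like'' situation, and then mimic the symmetrization argument used for \Cref{Prop:ordeg}. Fix an approximating polynomial $p$ for $f$ of degree $d = \adeg(f)$, and fix an input $x$ with disjoint blocks $B_1, \dots, B_s$ (where $s = \bs(f)$) such that $f(x^{B_j}) \neq f(x)$ for all $j$. Without loss of generality, assume $f(x) = 0$ (otherwise consider $1 - p$).

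First I would perform a \emph{block restriction}. Define a polynomial $q(y_1,\dots,y_s)$ in $s$ Boolean variables by substituting, for each $i \in \ind$, the value $y_j$ in place of $x_i$ whenever $i \in B_j$ and $x_i = 0$, the value $1 - y_j$ whenever $i \in B_j$ and $x_i = 1$, and the constant $x_i$ whenever $i$ lies outside every block. Each substitution is affine in a single $y_j$, and because the blocks are disjoint, different variables $y_j$ never appear together except through products that survive from $p$; thus $\deg(q) \leq \deg(p)$, and multilinearizing using $y_j^2 = y_j$ only reduces the degree. Crucially, $q(0,\dots,0) = p(x)$ and $q(e_j) = p(x^{B_j})$, so $q$ evaluates near $f(x) = 0$ at the origin and near $f(x^{B_j}) = 1$ at each unit vector. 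On any other input $y \in \rn^s$, $q(y) = p(x')$ for some $x' \in \rn^n$, so $q(y) \in [-1/3, 4/3]$.

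Next I would apply \Cref{Prop:mp} to $q$, obtaining a univariate polynomial $q_{\mathrm{sym}}$ with $\deg(q_{\mathrm{sym}}) \leq \deg(q) \leq d$ and $q_{\mathrm{sym}}(k) = \ex_{y \in \rn^s : |y| = k}{q(y)}$. The constraints on $q$ translate into: $|q_{\mathrm{sym}}(0)| \leq 1/3$, $|q_{\mathrm{sym}}(1) - 1| \leq 1/3$ (since $q_{\mathrm{sym}}(1)$ is an average of $q(e_j)$'s, each in $[2/3, 4/3]$), and $q_{\mathrm{sym}}(k) \in [-1/3, 4/3]$ for all $k \in \{0, 1, \dots, s\}$. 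By the Mean Value Theorem there exists $\xi \in [0,1]$ with $q_{\mathrm{sym}}'(\xi) \geq 1/3$. Applying the Ehlich-Zeller/Rivlin-Cheney theorem with $a = -1/3$, $b = 4/3$, $c = 1/3$, and interval length $s$ yields $\deg(q_{\mathrm{sym}}) \geq \sqrt{(1/3)\, s / (1/3 + 4/3 + 1/3)} = \sqrt{s/6}$. The bound on $Q(f)$ then follows from \Cref{Cor:poly}.

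The main obstacle I anticipate is the block restriction step: one must be careful that substituting a single new variable $y_j$ for all coordinates in $B_j$ does not inadvertently increase the degree, and that the multilinearization $y_j^2 \mapsto y_j$ genuinely preserves the values of $q$ on the Boolean cube. Disjointness of the blocks is what saves us here, since each monomial of $p$ contributes, after substitution, a product of powers of distinct $y_j$'s whose total exponent count is at most the original monomial's degree; reducing each $y_j^k$ to $y_j$ leaves the Boolean-cube values unchanged and only shrinks the degree.
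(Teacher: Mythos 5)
Your proof is correct and follows essentially the same route as the paper: restrict $p$ to the blocks via the affine substitution $x_i \mapsto x_i(1-y_j)+(1-x_i)y_j$ (which is exactly your case split), symmetrize with \Cref{Prop:mp}, and apply the Ehlich--Zeller/Rivlin--Cheney bound with the same parameters $a=-1/3$, $b=4/3$, $c=1/3$. The only difference is that you spell out the multilinearization step and the Mean Value Theorem invocation explicitly, both of which the paper leaves implicit.
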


\begin{proof}
  Let $x \in \rn^n$ be an input on which $f$ attains its block sensitivity. Suppose, without loss of generality, that $f(x) = 0$ and fix $s = \bs(f)$ disjoint blocks $B_1,\dots,B_s$ with $f(x^{B_j}) = 1$ for all~$j$. We define the function $\pi : \rn^s \ra \rn^n$ that associates with each $y \in \rn^s$ the bitstring obtained by flipping all blocks in~$x$ indexed by~$y$, i.e., $\pi(y)_i = 1 - x_i$ if $i \in B_j$ and $y_j = 1$, and $\pi(y)_i = x_i$ if $i \in B_j$ and $y_j = 0$.

  Given any polynomial~$p$ that approximates~$f$, we can construct another multilinear polynomial $q(y_1,\dots,y_s)$, over~$s$ variables, such that
    \[q(y_1,\dots,y_s) = p(\pi(y))\]
  for all $y \in \rn^s$ and $\deg(q) \leq \deg(p)$ (it suffices to replace the $i$-th variable in $p$ with $x_i(1-y_j)+(1-x_i)y_j$ if $i \in B_j$). In particular,~$q(y)$ evaluates to~$p(x)$ on the all-$0$ string, and to~$p(x^{B_j})$ on the string~$y$ with a single~$1$ at position~$j$. Hence, its symmetrized polynomial must satisfy,
    \[\abs{q_{\mathrm{sym}}(0)} \leq 1/3,\quad
    \abs{q_{\mathrm{sym}}(1) - 1} \leq 1/3,\quad
    q_{\mathrm{sym}}(k) \in [-1/3,4/3] \ \ \mathrm{for} \ k \in \set{2,\dots,s}.\]
  The Ehlich-Zeller and Rivlin-Cheney Theorem yields that $\deg(q_{\mathrm{sym}}) \geq \sqrt{s/6} = \sqrt{\bs(f)/6}$.
\end{proof}

The reader interested in problems that are more challenging to symmetrize can continue her reading with the lower bounds for the \colf~\cite{AS04j}, and \aotf~\cite{Kre21j} problems.


\fakeparagraph{Application 3: Dual polynomials.}
We conclude with a more recent and powerful technique for lower bounding the approximate degree.
For convenience, in this section, we express the domain and range of Boolean functions as,
  \[f : \rnn^n \ra \rnn.\]
This amounts to replacing the bit $b \in \rn$ with $1-2b$, which is an operation that preserves the (approximate) degree.

The central idea of the dual polynomial method is to view the approximate degree as being given by the following pair of primal-dual linear programs.

\switchAMS{
  \begin{gather*}
    \textit{Primal linear program}  \\
    \boxed{\begin{array}{lll}
      \min_{\eps,p}  &     \eps    \\[2mm]
      \mbox{s.t.} & \abs{p(x) - f(x)} \leq \eps & \forall x \in \rnn^n\\[2mm]
      & \deg(p) < d &
     \end{array}} \\[5mm]
     \textit{Dual linear program} \\
     \boxed{\begin{array}{ll@{\hspace{0mm}}l}
      \max_{\phi} &    \sum_{x \in \rnn^n} \phi(x) \cdot f(x) & \\[2mm]
      \mbox{s.t.} & \sum_x |\phi(x)| = 1& \\[2mm]
      & \sum_x \phi(x) \cdot p(x) = 0 & \forall p, \deg(p) < d
      \end{array}}
  \end{gather*}
}{
\[\begin{array}{@{}cc@{}}
    \textit{Primal linear program} & \textit{Dual linear program} \\
    \boxed{\begin{array}{lll}
      \min_{\eps,p}  &     \eps    \\[2mm]
      \mbox{s.t.} & \abs{p(x) - f(x)} \leq \eps & \forall x \in \rnn^n\\[2mm]
      & \deg(p) < d &
     \end{array}}
     &
     \boxed{\begin{array}{ll@{\hspace{0mm}}l}
        \max_{\phi} &    \sum_{x \in \rnn^n} \phi(x) \cdot f(x) & \\[2mm]
        \mbox{s.t.} & \sum_x |\phi(x)| = 1& \\[2mm]
        & \sum_x \phi(x) \cdot p(x) = 0 & \forall p, \deg(p) < d
        \end{array}}
\end{array}\]
}

The variables of the primal program are the approximation parameter~$\eps$ and the $\binom{n}{< d}$ coefficients needed to represent a polynomial~$p : \rnn^n \ra \rnn$ of degree less than~$d$. The variables of the dual program are the~$2^n$ values~$\phi(x)$ needed to specify a function~$\phi : \rnn^n \ra \R$.

It is straightforward to relate the primal program with the approximate degree of~$f$: for a fixed value of~$d$, the approximate degree is \emph{at most}~$\adeg(f) < d$ if and only if the optimal value is at most~$\eps \leq 1/3$. The dual program is more interesting to interpret: by weak duality, the approximate degree is \emph{at least}~$\adeg(f) \geq d$ if one can identify a so-called \emph{dual polynomial}~$\phi : \rnn^n \ra \R$ satisfying the next conditions.

\begin{proposition}[Method of Dual Polynomials]
  \label{Prop:dualpoly}
  Let $f : \rnn^n \ra \rnn$ be a Boolean function and $d \in \set{0,\dots,n}$ be an integer. Suppose that there exists a real-valued function $\phi : \rnn^n \ra \R$ such that,
  \begin{itemize}
    \item (Correlation) $\sum_{x \in \rnn^n} \phi(x) \cdot f(x) > 1/3$,
    \item (Normalization) $\sum_{x \in \rnn^n} |\phi(x)| = 1$,
    \item (Pure high degree) $\sum_{x \in \rnn^n} \phi(x) \cdot p(x) = 0$ for all polynomials $p : \rnn^n \ra \rnn$ of degree at most $\deg(p) < d$.
  \end{itemize}
  Then the approximate degree of~$f$ must be at least~$\adeg(f) \geq d$.
\end{proposition}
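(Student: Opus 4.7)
My plan is to prove the proposition by contradiction via a Hölder-style duality argument that mirrors how weak duality acts between the primal and dual linear programs displayed immediately before the statement. Suppose, toward contradiction, that $\adeg(f) < d$. Then by the definition of the approximate degree there exists a polynomial $p : \rnn^n \ra \R$ with $\deg(p) < d$ and $|p(x) - f(x)| \leq 1/3$ for every $x \in \rnn^n$ (and one may take $p$ multilinear, since $x_i^2 = 1$ on the Boolean cube $\rnn^n$ does not increase the degree when reducing).

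The core manipulation is to split the correlation of $\phi$ with $f$ into a ``low-degree piece'' plus a ``pointwise error piece'':
\[\sum_{x \in \rnn^n} \phi(x) f(x) \;=\; \sum_{x \in \rnn^n} \phi(x) p(x) \;+\; \sum_{x \in \rnn^n} \phi(x) \bigl(f(x) - p(x)\bigr).\]
The pure high degree condition kills the first sum outright, since $\deg(p) < d$. The second sum is then controlled by the triangle inequality and the pointwise approximation guarantee:
\[\Bigl|\sum_{x \in \rnn^n} \phi(x)\bigl(f(x) - p(x)\bigr)\Bigr| \;\leq\; \sum_{x \in \rnn^n} |\phi(x)| \cdot |f(x) - p(x)| \;\leq\; \tfrac{1}{3} \sum_{x \in \rnn^n} |\phi(x)| \;=\; \tfrac{1}{3},\]
where the last equality is the normalization hypothesis. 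Combining these two facts yields $\sum_{x} \phi(x) f(x) \leq 1/3$, which contradicts the strict correlation hypothesis $\sum_{x} \phi(x) f(x) > 1/3$, so $\adeg(f) \geq d$.

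There is no real obstacle in the argument; it is a clean weak-duality calculation in which the three hypotheses on $\phi$ play distinct and transparent roles: pure high degree eliminates the contribution of $p$, normalization caps the $\ell_1$-mass of $\phi$, and correlation provides the lower bound that gets violated. The only point requiring mild care is to invoke $\adeg(f) < d$ (strict) in the contradiction hypothesis so that the approximating polynomial $p$ automatically has degree strictly less than $d$ and is therefore annihilated by the pure high degree condition.
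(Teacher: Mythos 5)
Your proof is correct and is essentially the paper's argument made explicit: the paper simply cites ``weak duality'' of the displayed primal--dual pair, and your decomposition $\sum_x \phi(x) f(x) = \sum_x \phi(x) p(x) + \sum_x \phi(x)(f(x)-p(x))$ together with the pure-high-degree annihilation and the $\ell_1$/$\ell_\infty$ (Hölder) bound is exactly the standard calculation that proves that weak duality statement. Nothing is missing; the multilinearity remark is harmless but unnecessary, since the pure-high-degree condition already applies to every low-degree polynomial.
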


In words,~$2^n \phi$ must not deviate significantly from~$f$ (correlation and normalization), while having no monomial of degree less than~$d$ (pure high degree). It suffices to exhibit one such function~$\phi$ to certify that the approximate degree of~$f$ is at least~$d$. A simple case of application is for \parf.

\begin{proposition}[Dual polynomial method applied to \parf]
  The approximate degree of the \parf\ function is $\adeg(\parf) = n$.
\end{proposition}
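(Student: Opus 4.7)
I will apply the Method of Dual Polynomials (\Cref{Prop:dualpoly}) with $d = n$, so the job is to exhibit a single function $\phi : \rnn^n \to \R$ witnessing $\adeg(\parf) \geq n$. Combined with the trivial upper bound $\adeg(\parf) \leq \deg(\parf) \leq n$, this yields equality.

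\textbf{The candidate.} Over the $\rnn$ encoding, the parity function is exactly the full monomial $\parf(x) = x_1 x_2 \cdots x_n$. This suggests the guess
\[\phi(x) = \frac{1}{2^n}\, \parf(x) = \frac{1}{2^n}\prod_{i=1}^n x_i.\]
The strategy is then just to verify the three dual conditions directly.

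\textbf{Verifying the conditions.} Normalization is immediate since $|\phi(x)| = 2^{-n}$ for every $x \in \rnn^n$, so the $2^n$ terms sum to $1$. Correlation is equally easy: $\phi(x) \cdot \parf(x) = 2^{-n} \parf(x)^2 = 2^{-n}$, so the sum equals $1 > 1/3$. The only condition requiring a small calculation is pure high degree. By linearity it suffices to check that
\[\sum_{x \in \rnn^n} \phi(x) \prod_{i \in S} x_i \;=\; \frac{1}{2^n} \sum_{x \in \rnn^n} \prod_{i \notin S} x_i \;=\; 0\]
for every $S \subsetneq \ind$, where I used $x_i^2 = 1$ to collapse the indices in $S$. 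Since $S$ is a proper subset, there is some $j \notin S$, and pairing each $x$ with the string obtained by flipping $x_j$ shows the sum vanishes.

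\textbf{Conclusion and obstacles.} I expect no real obstacle here: the parity function is special precisely because it \emph{is} the top-degree monomial in the $\rnn$ basis, so it is its own dual witness. All three conditions reduce to the orthogonality of characters on the Boolean cube. \Cref{Prop:dualpoly} then gives $\adeg(\parf) \geq n$, matching $\adeg(\parf) \leq \deg(\parf) = n$ from the explicit representation $\parf(x) = x_1 \cdots x_n$.
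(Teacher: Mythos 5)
Your proposal is correct and uses exactly the same dual witness $\phi(x) = 2^{-n}x_1\cdots x_n$ as the paper, which likewise observes that $\parf$ over $\rnn$ is the full monomial and hence its own (normalized) dual. You spell out the pure-high-degree verification via character orthogonality where the paper compresses it to $\deg(\phi) = n$, but the argument is the same.
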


\begin{proof}
  The polynomial representation of the \parf\ function over $\rnn$ is given by the degree-$n$ monomial $\parf(x_1,\dots,x_n) = x_1 \cdots x_n$. It is easy to check that $\phi(x) = \frac{1}{2^n} x_1 \cdots x_n$ is a valid dual polynomial: $\sum_x \phi(x) \cdot f(x) = 1$, $\sum_x \abs{\phi(x)} = 1$ and $\deg(\phi) = n$.
\end{proof}

Unfortunately, we will not provide additional examples of dual polynomials, as they quickly become complicated to construct (even for the \orf\ function). The reader is invited to consult the survey by Bun and Thaler~\cite{BT22j} for more details and intuition on this method.

\section{The Recording Method}
\label{Sec:record}

The \emph{recording method} (also called the \emph{compressed oracle} technique) was introduced in a recent work by Zhandry~\cite{Zha19c}. It was originally intended for security proofs in the quantum random oracle model (QROM), where the input $x \in \set{0,\dots,m-1}^n$ is often non-Boolean ($m > 2$) and is meant to represent a random hash functions $H(i) = x_i$. Since then, it has also proven useful in query lower bounds, although there are some limitations to the type of problems it applies to.

The recording method is best suited for problems that are hard \emph{on average} when the input is drawn from some sufficiently \emph{unstructured} distribution. It also requires the problem to be based on a \emph{local property}, in the sense that the output is determined by a small subset of input coordinates satisfying a certain predicate. Some of these limitations have been partially lifted since then, but we will not touch upon these improvements here. Our presentation follows the approach in~\cite{HM21c}.

A typical case of application, which will be developed in the applications section, is the problem of finding a coordinate equals to~$x_i = 1$ in a random input $x$ (with alphabet size $m \approx n$). This is a variant of the \orf\ problem called (preimage) \srcf, also solved with~$\bo{\sqrt{n}}$ queries by Grover's algorithm. Here, the input is drawn from the uniform distribution, so each coordinate is independently a solution with probability~$1/m$.


\subsection{Technique}

The recording method starts by placing a certain hard distribution~$\dis$ on the input $x$. The most well-understood case is when the coordinates~$x_i$ are \emph{independent and identically distributed} under~$\dis$. To handle some problems of interest, it is best to extend the input alphabet beyond the Boolean domain.
Hence, we will detail the method when the input is drawn from the distribution,
  \begin{center}
    $\unif:$ uniform distribution on $\set{0,\dots,m-1}^n = \Sigma^n$
  \end{center}
for some integer $m \geq 2$. The quantum query framework stated in~\Cref{Sec:queryAlgo} must first be extended in three directions:

\begin{enumerate}[label={},itemindent=-2em,leftmargin=2em]
  \item {\bf (Large input alphabet)}\, The algorithm is given access to an input $x \in \set{0,\dots,m-1}^n$ through the oracle $\ora_x \ket{i,b} = \ket{i,b + x_i \bmod m}$, where $i \in \ind$ and $b,x_i \in \set{0,\dots,m-1}$. Equivalently, it is given access to the phase oracle $\ora^{\pm}_x \ket{i,b} = \omega^{b x_i} \ket{i,b}$, where $\omega = e^{2\mathbf{i}\pi/m}$ is the $m$-th root of unity.
  \item {\bf (Relational problems)}\, The problem can have multiple valid solutions: each input $x$ is associated with a set $\rel_x \subseteq \ind$ of solutions, and the algorithm succeeds if it outputs any element from that set.
  \item {\bf (Average-case output condition)}\, The algorithm must output a valid solution with probability at least~$2/3$, where the randomness is both over the actions of the algorithm (e.g., the outcome of a measurement) \emph{and} the distribution~$\dis$ of the input.
\end{enumerate}

We make a few comments on the above definitions. First, it will be more convenient to use the phase oracle $\ora^{\pm}_x \ket{i,b} = \omega^{b x_i} \ket{i,b}$ in the recording method. Its equivalence with the oracle $\ora_x$ follows by the same arguments as in \Cref{fact:phaseO}.
Next, the second item offers more flexibility in the type of problems that can be considered, as the solution need not be Boolean, nor unique. The set of valid solutions is now represented by a \emph{relation} $\rel = \set{(x,i) : x \in \set{0,\dots,m-1}^n, i \in \rel_x}$, as opposed to a function~$f(x)$ (which would restrict the relation to singletons $\rel_x = \set{f(x)}$). For simplifying the model, we assume that the solutions can be encoded as numbers from the set $\ind$, but other sets can work as well (the \colf\ problem, defined in the applications section, would require encoding the solutions over $\ind \times \ind$).
Finally, the average-case condition is also a relaxation of the output condition stated in \Cref{Sec:queryAlgo}. Here, the algorithm must perform well for a \emph{large fraction} of inputs (under the given distribution), instead of being successful on all of them. The complexity of the optimal algorithm under a given distribution is called the average-case quantum query complexity.

\begin{definition}[Average-case query complexity]
  The \emph{average-case quantum query complexity}~$Q_{\dis}(\rel)$ of a relation~$\rel$ under a distribution~$\dis$ is the smallest integer $T$ such that there exists a quantum algorithm with query complexity $T$ that, given an input~$x$ sampled according to~$\dis$, outputs~$y \in \rel_x$ with probability at least~$2/3$. Similarly, we define $R_{\dis}(\rel)$ as the average-case randomized query complexity.
\end{definition}

\fakeparagraph{Query record.}
The core idea of the recording method is to construct a quantum object that keeps track of what queries an algorithm has made. If the queries were classical, it would suffice to take the record $R_t = ((i_1,x_{i_1}),(i_2,x_{i_2}),\dots,(i_t,x_{i_t}))$ of all the query-answer pairs seen by the algorithm after~$t$ queries (this is a random variable, which depends on the random input~$x$ and the random actions of the algorithm). This object is very helpful in quantifying the knowledge gained by an algorithm toward solving a problem. However, it becomes more difficult to define a similar record for quantum queries. For instance, a single quantum query can learn information about the whole input by querying all coordinates in superposition. The solution will be to construct a record that is itself a quantum state -- entangled with the state of the algorithm. The construction proceeds in three steps:

\begin{enumerate}
  \item \emph{Input purification:} the input distribution is encoded into a bipartite quantum state, one system being the state of the algorithm and the other being a purification register.
  \item \emph{Record state:} the purification register is mapped to a different basis in which it can be interpreted as a quantum record.
  \item \emph{Recording oracle:} the query operator is modified to operate directly on the record state with some desirable properties.
\end{enumerate}

We detail each step of the construction below. The reader can refer in parallel to the pictures in \Cref{Fig:purif,Fig:rec}, which summarize the main ideas of the process.

\fakeparagraph{1. Input purification.}
We first address the question of representing the state of an algorithm whose input is random. This will also be useful in \Cref{Sec:adver} when describing the adversary method. Recall that we denote the state of an algorithm after~$t$ queries on a fixed input~$x$ as~$\psx{t}$. There are two equivalent ways of representing the average state of the algorithm when~$x$ is chosen randomly with some probability $\dis(x)$,
  \[\rho^t = \sum_x \dis(x) \proj{\psi_x^t} \quad \longleftrightarrow \quad \pst{t} = \sum_x \sqrt{\dis(x)} \psx{t} \otimes \ket{x}.\]
The first representation uses the density matrix formalism. The system is described by the pure-state ensemble $\set{\dis(x),\psx{t}}_x$ corresponding to the density operator $\rho^t$. The second representation is a purification of $\rho^t$, where the purification register holds a copy of the input~$x$. In this section, we will work with the latter representation, as it retains the \emph{joint state} of the algorithm and input distribution (whereas~$\rho^t$ encodes only the marginal state on the algorithm registers).

We extend the formalism given in \Cref{Def:canon} to define a query algorithm using the joint state $\pst{t}$. The purification register acts as a control state for the joint oracle~$\ora^{\pm}$ defined as: $\ora^{\pm} \pt*{\ket{i,b} \otimes \ket{x}} = \pt*{\ora^{\pm}_x \ket{i,b}} \otimes \ket{x} = \omega^{b x_i} \ket{i,b} \otimes \ket{x}$. The unitaries $U_t$ applied by the algorithm cannot depend on $x$, hence they are extended to act as the identity on the purification register. The new formalism is given in the next definition and displayed in \Cref{Fig:purif}.

\begin{definition}[Joint state of a quantum query algorithm with input distribution]
  \label{Def:joint}
  A (memoryless) \emph{quantum query algorithm} with query complexity $T$ and input alphabet $\Sigma = \set{0,\dots,m-1}$ is a sequence $U_0,\dots,U_T$ of unitary operators acting on the Hilbert space spanned by the vectors~$\ket{i,b}$ where $i \in \ind$ and $b \in \Sigma$.

  The \emph{joint phase oracle} is $\ora^{\pm} = \sum_{x \in \Sigma^n} \ora^{\pm}_x \otimes \proj{x}$, and the \emph{joint state $\pst{t}$ of the algorithm and input} after $t \in \set{0,\dots,T}$ queries on an input distribution $\pt*{\dis(x)}_{x \in \Sigma^n}$ is defined as,
    \[\pst{t} = (U_t\otimes\id) \ora^{\pm} \dots (U_1\otimes\id) \ora^{\pm} (U_0\otimes\id) \pt[\Big]{ \ket{0,0} \otimes \sum_{x \in \Sigma^n} \sqrt{\dis(x)} \ket{x}}.\]
  Equivalently, $\pst{t} = \sum_{x \in \Sigma^n} \sqrt{\dis(x)} \psx{t}\otimes\ket{x}$ where $\psx{t} = U_t \ora^{\pm}_x \dots U_1 \ora^{\pm}_x U_0 \ket{0,0}$ is the state of the algorithm on the input $x$. The reduced density matrix representing the state of the algorithm after $t$ queries is denoted $\rho^t$ and is equal to $\sum_{x \in \Sigma^n} \dis(x) \proj{\psi_x^t}$.

  The \emph{output} of the algorithm is the random value $i$ obtained by measuring the index register of~$\pst{T}$ in the standard basis. The (average) \emph{success probability} $\psucc^{\dis}$ of the algorithm in computing a relation~$\rel$ on the input distribution~$\dis$ is the probability of measuring a correct output for a random input: $\psucc^{\dis} = \sum_{x \in \Sigma^n, i \in \rel_x}  \norm*{\pt*{\proj{i}\otimes \id \otimes \proj{x}}\pst{T}}^2$.
\end{definition}

\begin{figure}
  \centering
  \switchAMS{
  \begin{quantikz}[column sep=4.5mm]
    \lstick{Input register: $\sum_x \sqrt{\dis(x)} \ket{x}$}  &[-2mm] \qw \slice[style=black]{$\pst{0}$} & \gate[wires=3]{\ora^{\pm}} &[-2mm] \qw \slice[style=black]{$\pst{1}$} &  \ \ldots\ \qw & \gate[wires=3]{\ora^{\pm}} &[-2mm] \qw \slice[style=black]{$\pst{T}$} & \qw & [-5mm]  \\
    \lstick{Value register: \hspace{1.65cm} $\ket{0}$}  & \gate[wires=2]{U_0} & \qw & \gate[wires=2]{U_1} &  \ \ldots\ \qw & \qw & \gate[wires=2]{U_T} & \qw & \meter{} \\
    \lstick{Index register: \hspace{1.65cm} $\ket{0}$} & \qw & \qw & \qw & \ \ldots\ \qw & \qw & \qw & \qw & \qw
  \end{quantikz}
  }{
    \begin{quantikz}
      \lstick{Input register: $\sum_x \sqrt{\dis(x)} \ket{x}$}  &[-2mm] \qw \slice[style=black]{$\pst{0}$} & \gate[wires=3]{\ora^{\pm}} & \qw \slice[style=black]{$\pst{1}$} &  \ \ldots\ \qw & \gate[wires=3]{\ora^{\pm}} & \qw \slice[style=black]{$\pst{T}$} & \qw & [-5mm]  \\
      \lstick{Value register: \hspace{1.65cm} $\ket{0}$}  & \gate[wires=2]{U_0} & \qw & \gate[wires=2]{U_1} &  \ \ldots\ \qw & \qw & \gate[wires=2]{U_T} & \qw & \meter{} \rstick{\text{Output}} \\
      \lstick{Index register: \hspace{1.65cm} $\ket{0}$} & \qw & \qw & \qw & \ \ldots\ \qw & \qw & \qw & \qw & \qw
    \end{quantikz}
  }
  \caption{Canonical form of a (memoryless) quantum query algorithm with input distribution.}
  \label{Fig:purif}
\end{figure}
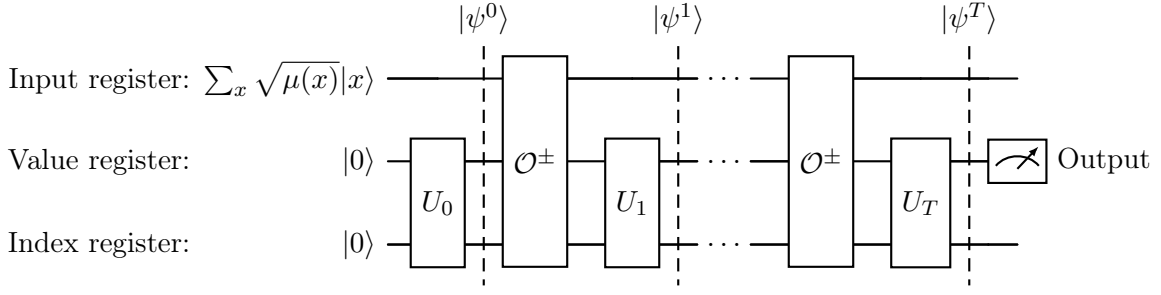

\fakeparagraph{2. Record state.}
We explain how to construct the quantum record by embedding the input register of $\pst{t}$ into a Hilbert space of higher dimension and applying a certain unitary on it. The input register is originally supported over the $m^n$ basis states $\ket{x} = \ket{x_1,\dots,x_n}$ where $x_i \in \set{0,\dots,m-1}$. The latter alphabet is augmented with a new ``empty'' symbol $\varnothing$, which will be used to represent the absence of knowledge from the algorithm about a coordinate of the input.
The \emph{record space} is the Hilbert space of dimension $(m+1)^n$ spanned by the vectors,
  \[\ket{x_1,\dots,x_n} = \ket{x_1}\otimes\cdots\otimes\ket{x_n} \quad \text{where}\quad x_i \in \set{0,\dots,m-1} \cup \set{\varnothing}.\]
Equivalently, it is the $n$-fold tensor product of the Hilbert space of dimension~$m+1$ spanned by the vectors $\ket{0}, \dots, \ket{m-1}, \ket{\varnothing}$. The input register of~$\pst{t}$ is renamed the \emph{record register} when it is interpreted as living in the record space. We aim to define a \emph{recording unitary}~$\tra$ -- operating on the record space -- that prepares a state $\psr{t} = (\id \otimes \tra) \pst{t}$ whose record register contains the \emph{record state} of the algorithm. This unitary will depend solely on the input distribution~$\unif$ and inherit its product structure. We can consider two situations that guide the choice of this operator:
\begin{itemize}
  \item \emph{(Initial state)} The initial state of the algorithm conveys no information on the input, so the record should start in the all-empty state $\ket{\varnothing}^{\otimes n}$. This corresponds to the initial joint state~$\pst{0}$ being mapped to,
     \[\pst{0} = (U_0 \ket{0,0}) \otimes \pt[\bigg]{\frac{1}{m^{n/2}} \sum_{x \in \Sigma^n} \ket{x}} \xmapsto{\id \otimes \tra} \psr{0} = (U_0 \ket{0,0}) \otimes \ket{\varnothing}^{\otimes n}.\]
  Hence, the unitary~$\tra$ shall map the uniform superposition to the state~$\ket{\varnothing}^{\otimes n}$.
  \item \emph{(Phase kickback)} If the algorithm queries the state $\ket{i,b} \otimes \pt[\big]{\frac{1}{m^{n/2}} \sum_{x \in \Sigma^n} \ket{x}}$ with~$b \neq 0$ then the post-query state is (up to the normalization factor~$1/m^{n/2}$):
  \switchAMS{
    \[\begin{array}{lrl}
      \ket{i,b} \otimes \sum_{x \in \Sigma^n} \ket{x}
       & \xmapsto{\ora^{\pm}} & \sum_{x \in \Sigma^n} \omega^{bx_i} \ket{i,b} \otimes \ket{x} \\[3mm]
       & = & \ket{i,b} \otimes \pt[\big]{\sum_{x_1\dots x_{i-1} \in \Sigma^{i-1}} \ket{x_1\dots x_{i-1}}} \\[3mm]
       & & \quad \otimes \pt[\big]{\sum_{x_i \in \Sigma} \omega^{bx_i} \ket{x_i}} \otimes \pt[\big]{\sum_{x_{i+1}\dots x_n \in \Sigma^{n-i}} \ket{x_{i+1}\dots x_n}}.
    \end{array}\]
  }{
    \[\begin{array}{lrl}
      \ket{i,b} \otimes \sum_{x \in \Sigma^n} \ket{x}
       & \xmapsto{\ora^{\pm}} & \sum_{x \in \Sigma^n} \omega^{bx_i} \ket{i,b} \otimes \ket{x} \\[3mm]
       & = & \ket{i,b} \otimes \pt[\big]{\sum_{x_1\dots x_{i-1} \in \Sigma^{i-1}} \ket{x_1\dots x_{i-1}}} \otimes \pt[\big]{\sum_{x_i \in \Sigma} \omega^{bx_i} \ket{x_i}} \\[3mm]
       & & \hspace{41mm} \otimes \pt[\big]{\sum_{x_{i+1}\dots x_n \in \Sigma^{n-i}} \ket{x_{i+1}\dots x_n}}.
    \end{array}\]
  }
  Hence, a query can equivalently be seen as a modification of the record state, rather than of the state of the algorithm. Furthermore, when querying the index~$i$, only the $i$-th subsystem of the record register is modified and becomes \emph{orthogonal} to the initial uniform superposition. This provides a natural criterion for including an input coordinate in the record state: if the state of its register is orthogonal to the uniform superposition, then the unitary~$\tra$ should keep it intact into the record; otherwise, it should replace it with an empty record,
    \[\sum_{x \in \Sigma^n} \omega^{bx_i} \ket{i,b} \otimes \ket{x} \xmapsto{\id \otimes \tra} \ket{i,b} \otimes \ket{\varnothing}^{\otimes i-1} \otimes \pt[\big]{\sum_{x_i \in \Sigma} \omega^{bx_i} \ket{x_i}} \otimes \ket{\varnothing}^{\otimes (n-i)}.\]
\end{itemize}

It is easy to find a unitary~$\tra$ that satisfies all of the above conditions. Due to the symmetries of the input distribution~$\unif$, it suffices to take the tensor product $\tra = \tra_1 \otimes \cdots \otimes \tra_n$ of~$n$ identical operators~$\tra_i$ acting on each subsystem of the record register as follows.

\begin{definition}[Recording operator]
  \label{Def:tra}
  Let $\tra_i$ denote the unitary and Hermitian operator acting on the $i$-th subsystem of the record space as follows:
    \[
    \tra_i :
    \left\{
        \begin{array}{lll}
            \frac{1}{\sqrt{m}} \sum_{x_i \in \Sigma} \ket{x_i} & \mapsto & \ket{\varnothing}, \\[3mm]
            \frac{1}{\sqrt{m}} \sum_{x_i \in \Sigma} \omega^{b x_i} \ket{x_i} & \mapsto & \frac{1}{\sqrt{m}} \sum_{x_i \in \Sigma} \omega^{b x_i} \ket{x_i} \quad \mbox{if $b \in \set{1,\dots,m-1}$}, \\[3mm]
            \ket{\varnothing} & \mapsto & \frac{1}{\sqrt{m}} \sum_{x_i \in \Sigma} \ket{x_i}.
        \end{array}
    \right.
    \]
  Given the joint state $\pst{t}$ of an algorithm and input distribution~$\unif$ (\Cref{Def:joint}), we define the \emph{joint state $\psr{t}$ of the algorithm and record} as,
    \[\psr{t} = \pt*{\id \otimes (\tra_1 \otimes \cdots \otimes \tra_n)} \pst{t} = \pt*{\id \otimes \tra} \pst{t}\]
  where $\tra = \tra_1 \otimes \cdots \otimes \tra_n$ is the \emph{recording operator} acting on the record space.
\end{definition}

As a first simple observation, the size of the quantum record (i.e., the maximum number of non-$\varnothing$ entries in the basis states over which it is supported) cannot be larger than the number of quantum queries made so far.

\begin{fact}[Record size]
  \label{Fact:rec}
  The state $\psr{t} = \pt*{\id \otimes \tra} \pst{t}$ after~$t$ queries is supported only over basis states $\ket{i,b}\otimes\ket{x_1,\dots,x_n}$ whose record size is at most
  $\abs{\set{j : x_j \neq \varnothing}} \leq t$.
\end{fact}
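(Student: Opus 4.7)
The plan is to proceed by induction on $t$, using a recursive update rule for $\psr{t}$. Since each $\tra_i$ is both unitary and Hermitian (so $\tra_i^2 = \id$, hence $\tra^2 = \id$) and since $U_{t+1} \otimes \id$ commutes with $\id \otimes \tra$ (they act on disjoint registers), inserting $\tra\cdot\tra = \id$ between $\ora^{\pm}$ and $\pst{t}$ yields
\[
  \psr{t+1} = (U_{t+1} \otimes \id)\, \rec\, \psr{t}, \qquad \rec := (\id \otimes \tra)\, \ora^{\pm}\, (\id \otimes \tra).
\]
The base case $t = 0$ is immediate: on each subsystem the uniform superposition $\frac{1}{\sqrt m}\sum_{x_i \in \Sigma}\ket{x_i}$ is mapped by $\tra_i$ to $\ket{\varnothing}$ via the first line of \Cref{Def:tra}, so $\psr{0} = (U_0\ket{0,0}) \otimes \ket{\varnothing}^{\otimes n}$ has record size~$0$.

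For the induction step, since $U_{t+1}$ acts as the identity on the record register, it suffices to show that on every basis state $\ket{i,b} \otimes \ket{y_1,\dots,y_n}$, the operator $\rec$ produces a superposition whose basis states have record size at most one more than $\abs{\set{j : y_j \neq \varnothing}}$. The key structural observation is that $\ora^{\pm}$, conditioned on the index register holding $i$, acts nontrivially only on the $i$-th record subsystem. Combined with $\tra_j^2 = \id$ for $j \neq i$, this leaves every position $j \neq i$ untouched by $\rec$; only the $i$-th record subsystem evolves, under the operator $\tra_i P_b \tra_i$, where $P_b : \ket{x_i} \mapsto \omega^{b x_i} \ket{x_i}$ denotes the diagonal phase (with the natural convention $P_b \ket{\varnothing} = \ket{\varnothing}$).

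The remaining step is a short case analysis of $\tra_i P_b \tra_i \ket{y_i}$ using \Cref{Def:tra}. If $b = 0$ the phase is trivial and position $i$ is unchanged. If $b \neq 0$ and $y_i = \varnothing$, a direct computation collapses to $\tra_i P_b \tra_i \ket{\varnothing} = \tra_i P_b \ket{\hat 0} = \tra_i \ket{\hat b} = \ket{\hat b}$, a superposition over $\Sigma$ with no $\ket{\varnothing}$ component, so the record at position $i$ grows from~$0$ to exactly~$1$. If $b \neq 0$ and $y_i \in \Sigma$, expanding $\ket{y_i}$ in the Fourier basis and using that $P_b$ cyclically shifts Fourier modes yields a superposition containing a $\ket{\varnothing}$ term along with Fourier states $\ket{\hat c}$ for $c \neq 0$, each supported on $\set{\ket{x} : x \in \Sigma}$. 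In every case, position~$i$ ends up in $\spn\bigl(\set{\ket{\varnothing}} \cup \set{\ket{x} : x \in \Sigma}\bigr)$ and therefore contributes at most~$1$ to the record size of each output basis state, so $\rec$ increases the total record size by at most one. This closes the induction.

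The most delicate point is the subcase $b \neq 0$, $y_i \in \Sigma$, which reduces to a routine Fourier manipulation; the substance is simply to verify that no amplitude escapes the enlarged $(m+1)$-dimensional record subspace at position~$i$, so at most one fresh non-$\varnothing$ coordinate can be created per query.
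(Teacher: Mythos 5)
Your proof is correct, and it takes a genuinely different route from the paper's. The paper argues \emph{globally}: it observes that the joint state $\pst{t}$ (before any conjugation by $\tra$) admits a Fourier decomposition
$\pst{t} = \sum_{i,b,c} \alpha_{i,b,c} \ket{i,b} \otimes \sum_{x} \omega^{c_1 x_1 + \dots + c_n x_n}\ket{x}$
in which $\alpha_{i,b,c} \neq 0$ only if $\abs{\set{j : c_j \neq 0}} \leq t$ (each query multiplies by a phase $\omega^{b x_i}$, contributing to a single $c_j$), and then applies $\tra = \tra_1 \otimes \cdots \otimes \tra_n$ to each Fourier product state, noting it maps the constant mode to $\sqrt m\,\ket{\varnothing}$ and fixes the nontrivial modes. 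You instead argue \emph{inductively on $t$} directly in the record picture: you first derive the recursion $\psr{t+1} = (U_{t+1}\otimes\id)\rec\,\psr{t}$ (which is precisely the ``indistinguishability'' statement the paper separately records in \Cref{Thm:recording}, appearing \emph{after} \Cref{Fact:rec}), and then do a local case analysis of $\tra_i P_b \tra_i$ to show each application of $\rec$ increases the record size by at most one. Both arguments are sound; your route is a bit heavier in setup (it effectively re-proves part of \Cref{Thm:recording} inline, which you do correctly, so there is no circularity), but it has the virtue of staying entirely in the record basis and of isolating the single-subsystem operator $\tra_i P_b \tra_i$ whose action is exactly what \Cref{Thm:recording} later describes. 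The paper's Fourier-decomposition argument avoids the recursion and is closer to a one-shot observation, but leaves the claim $\abs{\set{j : c_j \neq 0}} \leq t$ at the level of ``tracking the action of the oracle'', which is itself an implicit induction of about the same depth as yours.
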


\begin{proof}
  By tracking the action of the phase oracle~$\ora^{\pm}$ during~$t$ queries, one can see that the joint state of the algorithm and input admits a decomposition of the form,
    \[\pst{t} = \sum_{i \in \set{1,\dots,n},b \in \Sigma,c \in \Sigma^n} \alpha_{i,b,c} \ket{i,b} \otimes \sum_{x \in \Sigma^n} \omega^{c_1 x_1 + \dots + c_n x_n} \ket{x}\]
  for some complex coefficients~$\alpha_{i,b,c}$ that can be non-zero only when~$\abs{\set{j : c_j \neq 0}} \leq t$. For a given~$c$, the state $\sum_{x \in \Sigma^n} \omega^{c_1 x_1 + \dots + c_n x_n} \ket{x} = \pt{\sum_{x_1 \in \Sigma} \omega^{c_1 x_1} \ket{x_1}} \otimes \cdots \otimes \pt{\sum_{x_n \in \Sigma} \omega^{c_n x_n} \ket{x_n}}$ is mapped by the recording operator~$\tra$ to the product state~$\ket{R_1}\otimes\dots\otimes\ket{R_n}$ where $\ket{R_j} = \sqrt{m}\ket{\varnothing}$ if $c_j = 0$, and $\ket{R_j} = \sum_{x_j \in \Sigma} \omega^{c_j x_j} \ket{x_j}$ otherwise. Hence, $\psr{t} = \pt*{\id \otimes \tra} \pst{t}$ is supported over basis states that have at most~$t$ coordinates different from~$\varnothing$.
\end{proof}

The quantum record can behave very differently from its classical counterpart. For instance, its size can decrease over time if the algorithm uses query parameters~$(i,b)$ that make the $i$-th subsystem return to the uniform superposition. This phenomenon is unavoidable to preserve the reversibility of the computation. The purpose of the next section is to better understand how the record evolves after each query.

\fakeparagraph{3. Recording oracle.}
It is not clear yet that the quantum record has any meaningful properties beyond that of \Cref{Fact:rec}. We provide a different viewpoint on the construction of the quantum record, which better explains its evolution during the computation. To that end, we define a third query operator -- the \emph{recording oracle}~$\rec$ -- that conjugates the phase oracle~$\ora^{\pm}$ with the recording operator~$\id \otimes \tra$. Unlike the binary and phase oracles, the recording oracle depends on the chosen input distribution encoded into~$\tra$.

\begin{definition}[Recording oracle]
  The \emph{recording oracle} for the input distribution~$\unif$ is the unitary operator acting on the joint state of the algorithm and record as follows,
    \[\rec = \pt*{\id \otimes \tra} \ora^{\pm} \pt*{\id \otimes \tra},\]
  where the phase oracle is extended to the record space as $\ora^{\pm} = \sum_{x \in \pt{\Sigma \cup \set{\varnothing}}^n} \ora^{\pm}_x \otimes \proj{x}$ with $\ora^{\pm}_x \ket{i,b} = \omega^{bx_i} \ket{i,b}$ when $x_i \in \Sigma$, and $\ora^{\pm}_x \ket{i,b} = \ket{i,b}$ when~$x_i = \varnothing$.
\end{definition}

The need to specify the behavior of the phase oracle on~$x_i = \varnothing$ is an artifact of the construction. This case will never occur in practice, but it makes the analysis simpler.

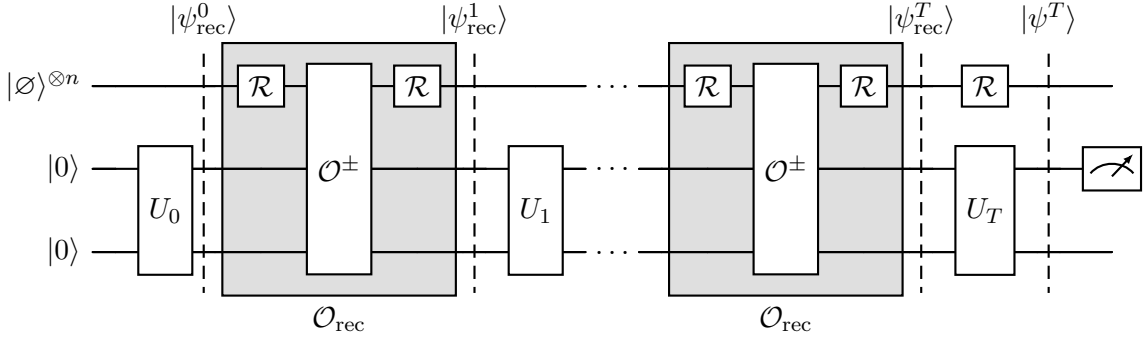
\begin{figure}
  \centering
  \switchAMS{    
  \begin{quantikz}[column sep = 2.5mm]
    \lstick{$\ket{\varnothing}^{\otimes n}$} &[-3mm] \qw
    & \qw \slice[style=black, label style={yshift=1mm}]{$\psr{0}$}
    & \qw & \gate{\tra} \gategroup[3,steps=3,style={fill=gray!25,inner xsep=2pt},background,label style={label position=below,anchor=north,yshift=-0.2cm}]{{$\rec$}}
    & \gate[wires=3]{\ora^{\pm}} & \gate{\tra}
    & \qw \slice[style=black, label style={yshift=1mm}]{$\psr{1}$}
    & \qw &[-2mm] \qw & \ \ldots\ \qw & \qw
    & \gate{\tra} \gategroup[3,steps=3,style={fill=gray!25,inner xsep=2pt},background,label style={label position=below,anchor=north,yshift=-0.2cm}]{{$\rec$}}
    & \gate[wires=3]{\ora^{\pm}} & \gate{\tra}
    & \qw \slice[style=black, label style={yshift=1mm}]{$\psr{T}$}
    & \qw &[-2mm] \gate{\tra} & \qw \slice[style=black, label style={yshift=1mm}]{$\pst{T}$}
    &[-2mm] \qw &[-1mm] \qw \\
    \lstick{$\ket{0}$} & \qw & \gate[wires=2]{U_0} & \qw & \qw & \qw & \qw & \qw & \qw & \gate[wires=2]{U_1} & \ \ldots\ \qw & \qw & \qw & \qw & \qw & \qw & \qw & \gate[wires=2]{U_T} &  \qw & \qw & \meter{}  \\
    \lstick{$\ket{0}$} & \qw & \qw & \qw & \qw &  \qw & \qw & \qw & \qw & \qw & \ \ldots\ \qw & \qw & \qw & \qw & \qw & \qw & \qw & \qw & \qw & \qw & \qw
  \end{quantikz}
  }{
    \begin{quantikz}[column sep = 3mm]
      \lstick{$\ket{\varnothing}^{\otimes n}$} & \qw
      & \qw \slice[style=black, label style={yshift=1mm}]{$\psr{0}$}
      & \qw & \gate{\tra} \gategroup[3,steps=3,style={fill=gray!25,inner xsep=2pt},background,label style={label position=below,anchor=north,yshift=-0.2cm}]{{$\rec$}}
      & \gate[wires=3]{\ora^{\pm}} & \gate{\tra}
      & \qw \slice[style=black, label style={yshift=1mm}]{$\psr{1}$}
      & \qw & \qw & \ \ldots\ \qw & \qw
      & \gate{\tra} \gategroup[3,steps=3,style={fill=gray!25,inner xsep=2pt},background,label style={label position=below,anchor=north,yshift=-0.2cm}]{{$\rec$}}
      & \gate[wires=3]{\ora^{\pm}} & \gate{\tra}
      & \qw \slice[style=black, label style={yshift=1mm}]{$\psr{T}$}
      & \qw & \gate{\tra} & \qw \slice[style=black, label style={yshift=1mm}]{$\pst{T}$}
      & \qw & \qw \\
      \lstick{$\ket{0}$} & \qw & \gate[wires=2]{U_0} & \qw & \qw & \qw & \qw & \qw & \qw & \gate[wires=2]{U_1} & \ \ldots\ \qw & \qw & \qw & \qw & \qw & \qw & \qw & \gate[wires=2]{U_T} &  \qw & \qw & \meter{}  \\
      \lstick{$\ket{0}$} & \qw & \qw & \qw & \qw &  \qw & \qw & \qw & \qw & \qw & \ \ldots\ \qw & \qw & \qw & \qw & \qw & \qw & \qw & \qw & \qw & \qw & \qw
    \end{quantikz}
  }
  \caption{Canonical form of a (memoryless) quantum query algorithm with recording oracle.}
  \label{Fig:rec}
\end{figure}

The recording oracle~$\rec$ has two central properties that are at the core of the recording method. First, the joint state~$\psr{t}$ can equally be viewed as the state obtained by replacing the phase oracle with the recording oracle in the algorithm. This is represented in \Cref{Fig:rec}. Second, the evolution of the record state upon querying~$\rec$ follows \emph{almost} the same rules as a classical record: if a coordinate is not in the record prior to the query then a uniform superposition is recorded; and if it is already in the record, the state remains (almost) unchanged.

\begin{theorem}[Recording method]
  \label{Thm:recording}
  Consider a quantum algorithm that accesses a random input drawn from the uniform distribution~$\unif$. Let~$\pst{t}$ denote the joint state of the algorithm and input after~$t$ queries (\Cref{Def:joint}).
  Then,
  \begin{itemize}
    \item (Indistinguishability) The joint state $\psr{t} = (\id \otimes \tra) \pst{t}$ of the algorithm and record is also equal to
    $\psr{t} = (U_t \otimes \id) \rec (U_{t-1} \otimes \id) \rec (U_0 \otimes \id) \pt{\ket{0,0}\otimes\ket{\varnothing,\dots,\varnothing}}.$
    \item (Recording action) The action of the recording oracle~$\rec$ on a basis state~$\ket{i,b} \otimes \ket{x_1,\dots,x_n}$ where $i \in \ind$, $b \in \Sigma \setminus \set{0}$ and $x \in \pt*{\Sigma \cup \set{\varnothing}}^m$ is given by the equations,
  \end{itemize}
  \switchAMS{
    \begin{align*}
    & \rec \ket{i,b} \otimes \ket{\dots x_{i-1},x_i,x_{i+1} \dots} = \\[2mm]
    & \qquad \left\{
      \begin{array}{l@{\hspace{5mm}}l}
        \ket{i,b} \otimes \ket{\dots x_{i-1}}\pt[\Big]{\frac{1}{\sqrt{m}} \sum_{x_i' \in \Sigma} \omega^{bx_i'}\ket{x_i'}}\ket{x_{i+1} \dots} & \text{if $x_i = \varnothing$} \\[4mm]
        \ket{i,b} \otimes \ket{\dots x_{i-1}}\pt[\big]{\omega^{bx_i}\ket{x_i} + \ket{\mathrm{error}_{x_i}}}\ket{x_{i+1}\dots} & \text{if $x_i \in \Sigma$}
      \end{array}
  \right. \\[2mm]
  & \text{where $\ket{\mathrm{error}_{x_i}} = \frac{\omega^{bx_i}}{\sqrt{m}}\ket{\varnothing} + \sum\limits_{x_i' \in \Sigma} \frac{1-\omega^{bx_i}-\omega^{bx_i'}}{m} \ket{x_i'}$.}
\end{align*}
  }{
    \[(x_i = \varnothing)\quad \rec \ket{i,b} \otimes \ket{\dots x_{i-1},\varnothing,x_{i+1}\dots}
    = \ket{i,b} \otimes \ket{\dots x_{i-1}}\pt[\Big]{\frac{1}{\sqrt{m}} \sum_{x_i' \in \Sigma} \omega^{bx_i'}\ket{x_i'}}\ket{x_{i+1} \dots},\]
    \[(x_i \in \Sigma)\quad \rec \ket{i,b} \otimes \ket{\dots x_{i-1},x_i,x_{i+1} \dots}
    = \ket{i,b} \otimes \ket{\dots x_{i-1}}\pt[\big]{\omega^{bx_i}\ket{x_i} + \ket{\mathrm{error}_{x_i}}}\ket{x_{i+1}\dots},\]
    \null\hfill where $\ket{\mathrm{error}_{x_i}} = \frac{\omega^{bx_i}}{\sqrt{m}}\ket{\varnothing} + \sum\limits_{x_i' \in \Sigma} \frac{1-\omega^{bx_i}-\omega^{bx_i'}}{m} \ket{x_i'}$.
  }
  
  \hspace*{3mm} If~$b = 0$ then $\rec$ makes no change to the state~$\ket{i,b} \otimes \ket{x_1,\dots,x_n}$.
\end{theorem}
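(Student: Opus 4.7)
The plan is to handle the two bullets separately, with the first being a pure operator-manipulation and the second a direct unpacking of the definition $\rec = (\id \otimes \tra) \ora^{\pm} (\id \otimes \tra)$.

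For the indistinguishability claim, I would use three elementary facts: (i) $\tra$ is both unitary and Hermitian on each subsystem by inspection of \Cref{Def:tra}, so $\tra^2 = \id$; (ii) the operators $U_t \otimes \id$ and $\id \otimes \tra$ commute since they act on disjoint registers; and (iii) the defining equation $\tra_i : \frac{1}{\sqrt{m}}\sum_{x_i} \ket{x_i} \mapsto \ket{\varnothing}$ gives $(\id \otimes \tra)(\ket{0,0} \otimes \frac{1}{m^{n/2}}\sum_x \ket{x}) = \ket{0,0}\otimes \ket{\varnothing}^{\otimes n}$. Starting from $\psr{t} = (\id \otimes \tra)\pst{t}$, I would push $(\id \otimes \tra)$ leftward through each $U_j \otimes \id$, and insert $(\id \otimes \tra)(\id \otimes \tra) = \id$ on each side of every $\ora^{\pm}$. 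Each such insertion groups an $\ora^{\pm}$ with the two adjacent $(\id \otimes \tra)$ factors into a $\rec$, while the remaining $(\id \otimes \tra)$ continues traveling to the right until it meets the initial state and produces $\ket{\varnothing}^{\otimes n}$ via (iii). This yields exactly the claimed formula.

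For the recording action, I would exploit the product structure $\tra = \tra_1 \otimes \cdots \otimes \tra_n$ together with the fact that $\ora^{\pm}$ at a basis state $\ket{i,b}$ of the query register touches only the $i$-th record subsystem. All subsystems $j \neq i$ therefore experience $\tra_j \cdot \id \cdot \tra_j = \id$ and come out unchanged, confirming that the non-$i$ factors of $\ket{x_1,\dots,x_n}$ are preserved. To analyze the $i$-th subsystem cleanly, I would introduce the Fourier basis $\ket{\hat{c}} = \frac{1}{\sqrt{m}}\sum_{y \in \Sigma} \omega^{cy}\ket{y}$ for $c \in \Sigma$, in which $\tra_i$ is simply the swap $\ket{\hat 0} \leftrightarrow \ket{\varnothing}$ fixing every other $\ket{\hat c}$ (this is exactly how \Cref{Def:tra} is set up), and in which the phase oracle acts by the shift $\ket{\hat c} \mapsto \ket{\widehat{c+b}}$ while fixing $\ket{\varnothing}$. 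The $b = 0$ case is then immediate. For $b \neq 0$ and $x_i = \varnothing$, the chain is $\varnothing \xrightarrow{\tra_i} \hat 0 \xrightarrow{\ora^{\pm}} \hat b \xrightarrow{\tra_i} \hat b$, and translating back to the standard basis gives the first formula. For $b \neq 0$ and $x_i \in \Sigma$, I would expand $\ket{x_i} = \frac{1}{\sqrt m}\sum_c \omega^{-cx_i}\ket{\hat c}$, push each Fourier component through the three operators, and re-sum.

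The main obstacle is the bookkeeping in the last case: distinguishing what happens to the $c = 0$ component (which visits $\ket{\varnothing}$ between the two $\tra_i$'s and returns) from the $c = -b$ component (which lands on $\ket{\hat 0}$ after $\ora^{\pm}$ and is swapped to $\ket{\varnothing}$), while all remaining Fourier components simply shift $\ket{\hat c} \mapsto \ket{\widehat{c+b}}$. Careful collection of these contributions in the standard basis produces the leading $\omega^{bx_i}\ket{x_i}$ term plus the specific error vector $\ket{\mathrm{error}_{x_i}}$; in particular the $\frac{\omega^{bx_i}}{\sqrt m}\ket{\varnothing}$ part arises from the $c = -b$ component, and the coefficient $\frac{1-\omega^{bx_i}-\omega^{bx_i'}}{m}$ of $\ket{x_i'}$ emerges after combining the residual $\ket{\hat 0}$ and $\ket{\hat b}$ pieces and expanding them back in the standard basis. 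Checking that the resulting vector has unit norm (it must, since $\rec$ is unitary) provides a useful sanity check.
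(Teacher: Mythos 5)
Your argument for the indistinguishability claim is essentially the paper's: the paper does it by induction on $t$, while you unroll the product directly, but both rest on exactly the same three facts ($\tra^2 = \id$, commutativity of $\id\otimes\tra$ with $U_j\otimes\id$, and $(\id\otimes\tra)$ mapping the uniform superposition to $\ket{\varnothing}^{\otimes n}$), so this is a cosmetic difference at most.

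For the recording action, your route is genuinely different and, I'd say, cleaner. The paper carries out the three-step computation $\tra_i \ora^{\pm} \tra_i$ entirely in the standard basis: it expands $\tra_i\ket{x_i} = \ket{x_i} + \frac{1}{\sqrt m}\ket{\varnothing} - \frac{1}{m}\sum_{x_i'}\ket{x_i'}$ directly, applies the phases, and collects terms. You instead pass to the Fourier basis $\ket{\hat c} = \frac{1}{\sqrt m}\sum_y \omega^{cy}\ket{y}$, where $\tra_i$ is the transposition $\ket{\hat 0}\leftrightarrow\ket{\varnothing}$ and $\ora^{\pm}$ (on the relevant block) is the cyclic shift $\ket{\hat c}\mapsto\ket{\widehat{c+b}}$ fixing $\ket{\varnothing}$. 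The $b=0$ and $x_i=\varnothing$ cases then become one-liners, and the $x_i\in\Sigma$ case reduces to tracking which Fourier components land in $\{\hat 0,\hat b,\varnothing\}$; re-expanding in the standard basis, the $c=-b$ component produces $\frac{\omega^{bx_i}}{\sqrt m}\ket{\varnothing}$ and the residual $\hat 0$ and $\hat b$ pieces combine to give exactly the $\frac{1-\omega^{bx_i}-\omega^{bx_i'}}{m}$ coefficients (I checked this; it matches). What your approach buys is conceptual clarity: it makes visible \emph{why} the error term is an $O(1/\sqrt m)$ perturbation---it comes only from the two boundary Fourier modes $\hat 0$ and $\hat b$ that touch $\ket{\varnothing}$, while all other modes just shift---whereas the paper's direct expansion is shorter to write but gives no such picture. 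Your unitarity sanity check is a good idea; both derivations pass it. No gaps.
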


\begin{proof}[Proof of the indistinguishability]
  The proof is by induction on~$t$. The base case ($t=0$) uses that~$U_0$ and~$\tra$ are acting on different registers, hence their order can be inverted: $\psr{0} = (\id \otimes \tra) \pst{0} = (\id \otimes \tra) (U_0 \otimes \id) \pt[\big]{\ket{0,0} \otimes \frac{1}{m^{n/2}} \sum_{x \in \Sigma^n} \ket{x}} = (U_0 \otimes \id) (\id \otimes \tra)  \pt[\big]{\ket{0,0} \otimes \frac{1}{m^{n/2}} \sum_{x \in \Sigma^n} \ket{x}} =  (U_0 \otimes \id)  (\ket{0,0}\otimes \allowbreak \ket{\varnothing,\dots,\varnothing})$. The induction step applies the same argument to $U_{t+1}$ and uses the fact that~$\tra$ squares to the identity: $\psr{t+1} = (\id \otimes \tra) (U_{t+1} \otimes \id) \ora^{\pm} \pst{t} = (U_{t+1} \otimes \id) (\id \otimes \tra) \ora^{\pm} \pst{t} = (U_{t+1} \otimes \id) (\id \otimes \tra) \ora^{\pm} (\id \otimes \tra) (\id \otimes \tra)\pst{t} = (U_{t+1} \otimes \id) \rec \psr{t}$.
\end{proof}

\begin{proof}[Proof of the recording action]
  We assume~$b \neq 0$, as it is easy to see that $\rec$ acts as the identity otherwise. We decompose the action of $\rec = \pt*{\id \otimes \tra} \ora^{\pm} \pt*{\id \otimes \tra}$ when $x_i = \varnothing$,
  \switchAMS{
    \begin{align*}
      & \ket{i,b} \otimes \ket{\dots x_{i-1}, \varnothing,x_{i+1} \dots} \\
      & \hspace{2cm} \xmapsto{\id \otimes \tra} \ket{i,b} \otimes (\dots\otimes \tra_{i-1}\ket{x_{i-1}})\pt[\Big]{\frac{1}{\sqrt{m}} \sum_{x_i' \in \Sigma}\ket{x_i'}}(\tra_{i+1} \ket{x_{i+1}} \otimes \dots) \\
      & \hspace{2cm} \xmapsto{\hspace{-0.2mm}\ora^{\pm}\hspace{-1.7mm}} \ket{i,b} \otimes (\dots \otimes \tra_{i-1}\ket{x_{i-1}})\pt[\Big]{\frac{1}{\sqrt{m}} \sum_{x_i' \in \Sigma} \omega^{bx_i'}\ket{x_i'}}(\tra_{i+1} \ket{x_{i+1}} \otimes \dots) \\
      & \hspace{2cm} \xmapsto{\id \otimes \tra} \ket{i,b} \otimes \ket{\dots x_{i-1}}\pt[\Big]{\frac{1}{\sqrt{m}} \sum_{x_i' \in \Sigma} \omega^{bx_i'}\ket{x_i'}}\ket{x_{i+1} \dots}.
    \end{align*}
  }{
    \begin{align*}
      \ket{i,b} \otimes \ket{\dots x_{i-1}, \varnothing,x_{i+1} \dots}
      & \xmapsto{\id \otimes \tra} \ket{i,b} \otimes (\dots\otimes \tra_{i-1}\ket{x_{i-1}})\pt[\Big]{\frac{1}{\sqrt{m}} \sum_{x_i' \in \Sigma}\ket{x_i'}}(\tra_{i+1} \ket{x_{i+1}} \otimes \dots) \\
      & \xmapsto{\hspace{-0.2mm}\ora^{\pm}\hspace{-1.7mm}} \ket{i,b} \otimes (\dots \otimes \tra_{i-1}\ket{x_{i-1}})\pt[\Big]{\frac{1}{\sqrt{m}} \sum_{x_i' \in \Sigma} \omega^{bx_i'}\ket{x_i'}}(\tra_{i+1} \ket{x_{i+1}} \otimes \dots) \\
      & \xmapsto{\id \otimes \tra} \ket{i,b} \otimes \ket{\dots x_{i-1}}\pt[\Big]{\frac{1}{\sqrt{m}} \sum_{x_i' \in \Sigma} \omega^{bx_i'}\ket{x_i'}}\ket{x_{i+1} \dots}.
    \end{align*}
  }
  The last step uses that $\tra_j^2 = \id$ for all $j$, hence all registers are restored to their initial basis state, except the $i$-th record register.

  We now consider the case of $x_i \in \Sigma$. For the ease of notation, we only track the evolution of the $i$-th record register (the other registers do not change, for the same reasons as above),
    \begin{align*}
      \ket{x_i}
      & \xmapsto{\tra_i} \ket{x_i} + \frac{1}{\sqrt{m}} \ket{\varnothing} - \frac{1}{n}\sum_{x_i' \in \Sigma} \ket{x_i'} \\
      & \xmapsto{\ora^{\pm}} \omega^{b x_i} \ket{x_i} + \frac{1}{\sqrt{m}} \ket{\varnothing} - \frac{1}{n}\sum_{x_i' \in \Sigma} \omega^{b x_i'}\ket{x_i'} \\
      & \xmapsto{\tra_i} \omega^{bx_i}\ket{x_i} + \frac{\omega^{bx_i}}{\sqrt{m}} \ket{\varnothing} - \frac{1}{m}\sum_{x_i' \in \Sigma} \omega^{bx_i}\ket{x_i'} + \frac{1}{m} \sum_{x_i' \in \Sigma} \ket{x_i'} - \frac{1}{m}\sum_{x_i' \in \Sigma} \omega^{bx_i'}\ket{x_i'}.
    \end{align*}
  The first step has been obtained by rewriting the action of the recording operator (\Cref{Def:tra}) in the standard basis.
\end{proof}


  \subsection{Applications}

We illustrate the recording method on two problems that are ubiquitous in cryptography: finding a preimage or a collision in a random (hash) function. Beyond establishing the query complexity of these problems, the recording method will give very tight upper bounds on the best average success probability that can be achieved with a given number of quantum queries. This refinement is useful, for instance, when choosing the security parameters (e.g., key length) of cryptographic schemes based on hash functions.


\fakeparagraph{Application 1: The \srcf\ problem.}
This problem is a variant of \orf, where instead of deciding if the input contains a particular value (e.g., the bit~$1$), the task is to find such a value (when it exists). We will be studying the hardness of this problem under the uniform input distribution.

\begin{definition}[\srcf]
  The $\srcf$ problem is to find an index $i \in \ind$ such that~$x_i = 1$ in an input~$x \in \set{0,\dots,m-1}^n$.
\end{definition}

The lower bound will be given as a function of the alphabet size~$m$. The input size~$n$ appears indirectly in the result, as it constrains the values of~$m$ for which the problem is non-trivial under the uniform distribution (for instance, if $m \gg n$ then a random input has no solution at all with high probability). If it helps the reader, one can fix~$m = n/2$ since it guarantees the existence of two solutions in expectation, and the probability of having no solution is small $(1-2/n)^n \leq e^{-2} \leq 1/7$ (this event can be ignored as long as its probability is below the allowed failure probability, e.g.~$1/3$).

It is straightforward to argue that classical algorithms need~$\om{m}$ queries to solve the $\srcf$ problem. We provide a detailed proof that will help understanding the quantum case afterward.

\begin{proposition}[Classical recording method applied to \srcf]
  The average-case randomized complexity of the \srcf\ problem under the uniform distribution is at least~$R_{\unif}(\srcf) \geq 2m/3-1$.
\end{proposition}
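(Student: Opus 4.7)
The plan is to upper bound the success probability of an arbitrary $T$-query randomized algorithm by $(T+1)/m$, from which the claimed bound $T \geq 2m/3 - 1$ follows immediately by setting $\psucc \geq 2/3$. I would proceed by analyzing the (classical) record of query-answer pairs accumulated during the execution, which mirrors the quantum record that is introduced in the remainder of the section.

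First I would fix any randomized algorithm making $T$ queries, denoting by $Q \subseteq \ind$ the (random) set of indices that it queries (so $|Q| \leq T$) and by $i^*$ its output. I split the success event $\{x_{i^*} = 1\}$ into the two complementary cases $i^* \in Q$ and $i^* \notin Q$: either the algorithm outputs an index it has already queried, in which case a value of $1$ must appear somewhere in the record, or it outputs an unqueried index, in which case it is essentially guessing.

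For the first case, I would observe that $\{i^* \in Q,\ x_{i^*} = 1\} \subseteq \{\exists i \in Q : x_i = 1\}$. Since $x$ is uniform on $\Sigma^n$, the coordinates revealed by successive (adaptive) queries are i.i.d.\ uniform on $\Sigma$, each equal to $1$ with probability $1/m$, so a union bound over the at most $T$ queried indices gives a bound of $T/m$. For the second case, the principle of deferred decisions implies that conditional on $i^* \notin Q$, the coordinate $x_{i^*}$ is uniform on $\Sigma$ and independent of the algorithm's transcript, hence $\Pr[x_{i^*} = 1 \mid i^* \notin Q] = 1/m$. Adding both contributions yields $\psucc \leq (T+1)/m$, and requiring $\psucc \geq 2/3$ forces $T \geq 2m/3 - 1$.

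There is no real obstacle here: the entire argument combines a union bound with the principle of deferred decisions. The reason for phrasing it this way is that it cleanly separates the two mechanisms that the subsequent quantum proof will need to bound in turn: the weight that the record contains a solution (playing the role of the $T/m$ term from the union bound) and the weight on an output that is not recorded (playing the role of the $1/m$ guessing term).
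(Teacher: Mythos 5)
Your argument is correct and matches the paper's: both bound the success probability by $(T+1)/m$ via the same split between the event that the algorithm has observed a coordinate equal to $1$ (probability at most $T/m$) and the event that it guesses an unqueried index correctly (probability at most $1/m$). The only cosmetic difference is that the paper unrolls the $T/m$ bound as a step-by-step progress inequality $\pr{E_{t+1}} \leq \pr{E_t} + 1/m$ to foreshadow the quantum recording argument, whereas you obtain it in one shot via a union bound.
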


\begin{proof}
  We measure the progress of an algorithm by the probability of the event $E_t$: ``the algorithm queries at least one coordinate equals to $x_i = 1$ during its first $t$ queries''. The probability that the $t+1$-th query triggers the event is at most $\pr{E_{t+1} \given \br{E_t}} \leq 1/m$, irrespective of the behavior of the algorithm (since the $x_i$'s are independent, identically distributed random variables). Hence, the progress behaves as follows,
    \begin{itemize}
      \item (Initial condition) $\pr{E_0} = 0$,
      \item (Progress evolution) $\pr{E_{t+1}} = \pr{E_{t}} + \pr{E_{t+1} \given \br{E_t}} \cdot \pr{\br{E_t}} \leq \pr{E_t} + 1/m$.
    \end{itemize}
  It remains to relate the average success probability $\psucc^{\unif}$ of an algorithm making~$T$ queries to its final progress $\pr{E_T}$. If the algorithm never queried a coordinate equal to $1$ (i.e., the event~$E_T$ does not happen), then it is left to guess randomly where such a coordinate can be. This is somewhat the same as the event $E_{T+1}$ given $\br{E_T}$, except that the algorithm cannot see the result of the last query.
    \begin{itemize}
      \item (Final condition) $\psucc^{\unif} \leq \pr{E_T} + 1/m$.
    \end{itemize}
  By combining the three bullet points together, we obtain that the best possible success probability after $T$ queries must be at most $\psucc^{\unif} \leq (T+1)/m$. In particular, succeeding with average probability at least~$\psucc^{\unif} \geq 2/3$ requires making at least~$T \geq 2m/3-1$ queries.
\end{proof}

We are now going to mimic the above proof in the quantum query model, using the recording method formalism. The main challenge is to adapt the progress measure when the query record (hence, the event $E_t$) is no longer properly defined as a random variable. The solution is to measure the progress as the \emph{total amplitude} (norm) of the part of the quantum record that contains a solution. The quadratic decrease in the lower bound can be traced back to manipulating amplitudes in the proof, rather than probabilities (squared norm).

\begin{proposition}[Quantum recording method applied to \srcf]
  \label{Prop:search}
  The average-case quantum complexity of the \srcf\ problem under the uniform distribution is at least~$Q_{\unif}(\srcf) \geq \sqrt{m/15}-\sqrt{1/5}$.
\end{proposition}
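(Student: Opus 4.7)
The plan is to mirror the three-step classical argument (initial condition, progress evolution, final condition) but at the level of amplitudes of the quantum record. Concretely, define the solution projector $\Pi = \id \otimes \Pi^R$, where $\Pi^R$ projects the record space onto basis states $\ket{x_1,\dots,x_n}$ having at least one coordinate equal to $1$, and track the progress measure $\Delta_t = \norm{\Pi \psr{t}}$. The quadratic quantum improvement over the classical $\om{m}$ bound will arise from working with norms rather than probabilities.

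For the initial condition, the indistinguishability part of \Cref{Thm:recording} gives $\psr{0} = (U_0\ket{0,0}) \otimes \ket{\varnothing}^{\otimes n}$, so $\Delta_0 = 0$. For the progress evolution, apply the same theorem to write $\psr{t+1} = (U_{t+1}\otimes \id)\rec\psr{t}$, and use the fact that $\Pi$ commutes with unitaries acting only on the algorithm registers to get $\Delta_{t+1} = \norm{\Pi\rec\psr{t}}$. Decomposing along $\Pi$ and $\id - \Pi$ and applying the triangle inequality yields $\Delta_{t+1} \leq \Delta_t + \norm{\Pi\rec(\id-\Pi)\psr{t}}$. The crucial step is to bound the second term by $\sqrt{10/m}$ using the recording action: on any basis state $\ket{i,b,r}$ with $r$ containing no $1$, the only way $\rec$ can introduce a $1$ into the record is at the queried position $i$, either with amplitude $\omega^b/\sqrt{m}$ (when $r_i = \varnothing$) or with the error amplitude $(1-\omega^{br_i}-\omega^b)/m$ of magnitude at most $3/m$ (when $r_i \in \Sigma \setminus \set{1}$). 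Cauchy--Schwarz over the $m$ possible values of $r_i$ gives $1/m + 9(m-1)/m^2 \leq 10/m$, and iterating the recurrence yields $\Delta_T \leq T\sqrt{10/m}$.

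The final condition converts this into a bound on $\psucc^{\unif}$. Writing the win projector $\Pi_{\mathrm{win}} = \sum_i \proj{i} \otimes \id \otimes P_i$, with $P_i$ selecting inputs satisfying $x_i = 1$, we have $\sqrt{\psucc^{\unif}} = \norm{\Pi_{\mathrm{win}}\pst{T}} = \norm{\Pi_{\mathrm{win}}(\id\otimes \tra)\psr{T}}$ since $\tra^2 = \id$. Splitting along $\Pi$ and $\id - \Pi$, the $\Pi$-part contributes at most $\Delta_T$ by unitary invariance and submultiplicativity. For the $(\id-\Pi)$-part, the coefficient of $\ket{1}$ in $\tra_i \ket{r_i}$ is $1/\sqrt{m}$ when $r_i = \varnothing$ and $-1/m$ when $r_i \in \Sigma \setminus \set{1}$, and Cauchy--Schwarz over the $m$ values of $r_i \neq 1$ yields a contribution of at most $\sqrt{2/m}$. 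Combining the two halves with $\psucc^{\unif} \geq 2/3$ gives $\sqrt{2/3} \leq T\sqrt{10/m} + \sqrt{2/m}$, which solves to $T \geq \sqrt{m/15} - \sqrt{1/5}$.

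The main technical obstacle is carefully accounting for the error terms $\ket{\mathrm{error}_{x_i}}$ of the recording oracle, together with the analogous ``leakage'' amplitudes that appear when decoding the record back to the input register via $\tra_i$. Unlike the classical setting, where an unrecorded coordinate conveys no information at all, the quantum record here can leak small amplitudes between solution-free and solution-containing subspaces, both during a query and during the final decoding. Showing that each leakage has norm $\bo{1/\sqrt{m}}$, and tracking the resulting constants $10/m$ and $2/m$, is what produces the precise lower bound rather than an unspecified $\om{\sqrt{m}}$.
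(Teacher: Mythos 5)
Your proposal follows the paper's proof essentially verbatim: the same progress measure $\Delta_t = \norm{\Pi\psr{t}}$, the same decomposition into null-query / empty-record / nonempty-record components with the same per-component bounds (amplitudes $0$, $1/\sqrt{m}$, $\leq 3/m$ for the progress step; $0$, $1/\sqrt{m}$, $1/m$ for the decoding step), the same triangle-plus-Cauchy--Schwarz aggregation yielding $\sqrt{10/m}$ and $\sqrt{2/m}$, and the same final arithmetic. The approach and all constants match the paper exactly.
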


\begin{proof}
  Consider any quantum algorithm that solves the \srcf\ problem using some number~$T$ of queries. Define the progress measure $\Delta_t$ after $t \in \set{0,\dots,T}$ queries as the norm of the state obtained by projecting~$\psr{t}$ (\Cref{Def:tra}) onto the records containing at least one coordinate equals to~$x_i = 1$,
    \[\Delta_t = \norm{\Pi \psr{t}} \quad \text{where} \quad \Pi = \id \otimes \sum_{\substack{x \in \pt{\Sigma \cup \set{\varnothing}}^n, \\\exists i, x_i = 1}} \proj{x}.\]
  We show that the progress $\Delta_t$ and success probability $\psucc^{\unif}$ of the algorithm obey the following inequalities,
  \begin{itemize}
    \item (Initial condition) $\Delta_0 = 0$,
    \item (Progress evolution) $\Delta_{t+1} \leq \Delta_t + \sqrt{10/m}$,
    \item (Final condition) $\psucc^{\unif} \leq \pt{\Delta_T + \sqrt{2/m}}^2$.
  \end{itemize}
  It follows immediately that $\psucc^{\unif} \leq (\sqrt{10}T+\sqrt{2})^2/m$. Hence, succeeding with probability at least $\psucc^{\unif} \geq 2/3$ requires making at least~$T \geq \sqrt{m/15}-\sqrt{1/5}$ quantum queries.

  \textit{Proof of the initial condition.} The record is initially empty, $\psr{0} = (U_0 \ket{0,0}) \otimes \ket{\varnothing}^{\otimes n}$, hence the progress starts at $\Delta_0 = \norm{\Pi \psr{0}} = 0$.

  \textit{Proof of the progress evolution.} We first prove an analogous statement to the equality $\pr{E_{t+1}} = \pr{E_{t}} + \pr{E_{t+1} , \br{E_t}}$ used in the classical setting. We substitute the use of the law of total probability with the triangle inequality. The progress increases at most by the norm of the part of the state recording a value~$x_i = 1$ for the first time.
  \switchAMS{
    \begin{flalign*}
      && \Delta_{t+1}
        & = \norm{\Pi (U_{t+1} \otimes \id) \rec \psr{t}} & \eqname[4cm]{definition of $\psr{t+1}$} \\
      &&  & = \norm{(U_{t+1} \otimes \id) \Pi \rec \psr{t}} &  \eqname{$\Pi$ and $U_{t+1} \otimes \id$ commute}\\
      &&  & = \norm{\Pi \rec \psr{t}} & \eqname{unitary invariance of the norm}\\
      &&  & \leq \norm{\Pi \rec \Pi \psr{t}} + \norm{\Pi \rec (\id - \Pi) \psr{t}} & \eqname{triangle inequality} \\
      &&  & \leq \norm{\Pi \psr{t}} + \norm{\Pi \rec (\id - \Pi) \psr{t}} & \eqname{submultiplicativity of the norm} \\
      &&  & = \Delta_t + \norm{\Pi \rec (\id - \Pi) \psr{t}}.
    \end{flalign*}
  }{ 
    \begin{align*}
      \Delta_{t+1}
        & = \norm{\Pi (U_{t+1} \otimes \id) \rec \psr{t}} \tag{definition of $\psr{t+1}$} \\
        & = \norm{(U_{t+1} \otimes \id) \Pi \rec \psr{t}} \tag{$\Pi$ and $U_{t+1} \otimes \id$ commute}\\
        & = \norm{\Pi \rec \psr{t}} \tag{unitary invariance of the norm}\\
        & \leq \norm{\Pi \rec \Pi \psr{t}} + \norm{\Pi \rec (\id - \Pi) \psr{t}} \tag{triangle inequality} \\
        & \leq \norm{\Pi \psr{t}} + \norm{\Pi \rec (\id - \Pi) \psr{t}} \tag{submultiplicativity of the norm} \\
        & = \Delta_t + \norm{\Pi \rec (\id - \Pi) \psr{t}}.
    \end{align*}
  }
  Next, we show that $\norm{\Pi \rec (\id - \Pi) \psr{t}} \leq \sqrt{10/m} \norm{(\id - \Pi) \psr{t}}$, in analogy to the statement $\pr{E_{t+1} , \br{E_t}} = \pr{E_{t+1} \given \br{E_t}} \cdot \pr{\br{E_t}} \leq 1/m \pr{\br{E_t}}$. The proof is carried out in greater generality, replacing $(\id - \Pi) \psr{t}$ with any state $\ket{\psi} = \sum_{i,b,x} \alpha_{i,b,x} \ket{i,b} \otimes \ket{x}$ in the support of $\id - \Pi$ (i.e., no record in the support of~$\ket{\psi}$ shall contain the value $1$). We decompose such a state into $n+2$ mutually orthogonal components $\ket{\psi} = \ps{\mathrm{id}} + \ps{\varnothing} + \sum_{y \in \Sigma} \ps{y}$ defined as follows:
    \begin{itemize}
      \item[--] $\ps{\mathrm{id}} = \sum_{i,b,x : b = 0} \alpha_{i,b,x} \ket{i,b} \otimes \ket{x}$ \hfill (null query),
      \item[--] $\ps{\varnothing} = \sum_{i,b,x : x_i = \varnothing, b \neq 0} \alpha_{i,b,x} \ket{i,b}\otimes \ket{x}$ \hfill (non-null query, empty record),
      \item[--] $\ps{y} = \sum_{i,b,x : x_i = y, b \neq 0} \alpha_{i,b,x} \ket{i,b}\otimes \ket{x}$ \hfill (non-null query, nonempty record).
    \end{itemize}
  The component~$\ps{1}$ is zero by definition of~$\id - \Pi$. We analyze how much the norms of the other components decrease after applying~$\Pi \rec$. The action of the oracle~$\rec$ on a basis state is dictated by \Cref{Thm:recording}. Notice that the only way for these states to be in the support of~$\Pi$ after applying~$\rec$ is to record $x_i = 1$ at the position~$i$ indicated by the index register, since the rest of the record stays unchanged. The norms are:
    \begin{itemize}
      \item[--] $\norm{\Pi \rec \ps{\mathrm{id}}} = 0$: The state becomes zero since the record does not change when the value register holds a zero.
      \item[--] $\norm{\Pi \rec \ps{\varnothing}} = \frac{1}{\sqrt{m}} \norm{\ps{\varnothing}}$: The  state is equal to
        $\Pi \rec \ps{\varnothing} = \sum_{x_i = \varnothing, b \neq 0} \alpha_{i,b,x} \allowbreak \frac{\omega^{b}}{\sqrt{m}} \ket{i,b} \otimes \ket{x^{\set{i}}}$
      with $x^{\set{i}}_i = 1$ and $x^{\set{i}}_j = x_j$ if $j \neq i$. Thus, $\norm{\Pi \rec \ps{\varnothing}}^2 = \sum_{x_i = \varnothing, b \neq 0} \frac{\abs{\alpha_{i,b,x}}^2}{m} = \frac{1}{m} \norm{\ps{\varnothing}}^2$.
      \item[--] $\norm{\Pi \rec \ps{y}} \leq \frac{3}{m} \norm{\ps{y}}$: The state is equal to
        $\Pi \rec \ps{y} = \sum_{ x_i = y, b \neq 0} \alpha_{i,b,x} \allowbreak \frac{1-\omega^{b}-\omega^{by}}{m} \ket{i,b} \otimes \ket{x^{\set{i}}}$.
      Thus, $\norm{\Pi \rec \ps{y}}^2 = \sum_{ x_i = y, b \neq 0} \frac{|(1-\omega^{b}-\omega^{by})\alpha_{i,b,x}|^2}{m^2} \leq \frac{9}{m^2} \norm{\ps{y}}^2$.
    \end{itemize}
   Finally, using the triangle and Cauchy--Schwarz inequalities, we conclude that the overall state has norm $\norm{\Pi \rec \ket{\psi}} \leq \norm{\Pi \rec \ps{\varnothing}} + \sum_{y \in \Sigma \setminus \set{1}} \norm{\Pi \rec \ps{y}} \leq \frac{1}{\sqrt{m}} \norm{ \ps{\varnothing}} + \frac{3}{m} \sum_{y \in \Sigma \setminus \set{1}} \norm{ \ps{y}} \leq \sqrt{\frac{10}{m}} \norm{\ket{\psi}}$.

  \textit{Proof of the final condition.} The average success probability is defined as $\psucc^{\unif} = \norm{\prsucc\pst{T}}^2$ where $\prsucc$ is the projector onto the states $\ket{i,b} \otimes \ket{x}$ with $x_i = 1$. Using the relationship between the joint states $\pst{T}$ and $\psr{T}$ (\Cref{Def:tra}), we have
  \switchAMS{
  \begin{flalign*}
    && \psucc^{\unif}
      & = \norm{\prsucc(\id \otimes \tra) \psr{T}}^2 & \eqname[12mm]{since $\pst{T} = (\id \otimes \tra) \psr{T}$}\\
    &&  & \leq \pt*{\norm{\prsucc(\id \otimes \tra) \Pi \psr{T}} + \norm{\prsucc(\id \otimes \tra) (\id - \Pi) \psr{T}}}^2 \\ \intertext{\raggedleft (triangle inequality)}
    &&  & \leq \pt*{\norm{\Pi \psr{T}} + \norm{\prsucc(\id \otimes \tra) (\id - \Pi) \psr{T}}}^2 \\ \intertext{\raggedleft (submultiplicativity of the norm)}
    && & = \pt*{\Delta_T + \norm{\Pi (\id \otimes \tra) (\id - \Pi) \psr{T}}}^2.
  \end{flalign*}
  }{
    \begin{align*}
      \psucc^{\unif}
        & = \norm{\prsucc(\id \otimes \tra) \psr{T}}^2 \tag{since $\pst{T} = (\id \otimes \tra) \psr{T}$}\\
        & \leq \pt*{\norm{\prsucc(\id \otimes \tra) \Pi \psr{T}} + \norm{\prsucc(\id \otimes \tra) (\id - \Pi) \psr{T}}}^2 \tag{triangle inequality}\\
        & \leq \pt*{\norm{\Pi \psr{T}} + \norm{\prsucc(\id \otimes \tra) (\id - \Pi) \psr{T}}}^2 \tag{submultiplicativity of the norm} \\
        & = \pt*{\Delta_T + \norm{\Pi (\id \otimes \tra) (\id - \Pi) \psr{T}}}^2.
    \end{align*}
  }
   The analysis of $\norm{\Pi (\id \otimes \tra) (\id - \Pi) \psr{T}}$ follows that of $\norm{\Pi \rec (\id - \Pi) \psr{T}}$ done previously. Using the same notation as before $\ket{\psi} = \ps{\mathrm{id}} + \ps{\varnothing} + \sum_{y \in \Sigma \setminus \set{1}} \ps{y}$ for a state in the support of $\id-\Pi$, we have $\norm{\Pi (\id \otimes \tra) \ps{\mathrm{id}}} = 0$, $\norm{\Pi (\id \otimes \tra) \ps{\varnothing}} = \frac{1}{\sqrt{m}} \norm{\ps{\varnothing}}$ and $\norm{\Pi (\id \otimes \tra) \ps{y}} = \frac{1}{m} \norm{\ps{y}}$ (the action of $\tra$ in the standard basis is given, for instance, in the proof of \Cref{Thm:recording}). We conclude that $\norm{\Pi (\id \otimes \tra) \ket{\psi}} \leq \sqrt{\frac{2}{m}} \norm{\ket{\psi}}$.
\end{proof}


\fakeparagraph{Application 2: The \colf\ problem.}
We sketch a second application of the recording method for the \colf\ problem, defined as follows.

\begin{definition}[\colf]
  The $\colf$ problem is to find two distinct indices $i \neq j \in \ind$ such that~$x_i = x_j$ in an input~$x \in \set{0,\dots,m-1}^n$.
\end{definition}

Again, there is a range of parameters $n,m$ for which the problem is relevant under the uniform distribution. For instance, if~$m = n$ then there are $\frac{1}{m}\binom{n}{2} = (n-1)/2$ collision pairs in expectation, one of which can be found using~$\bo{\sqrt{m}}$ classical queries (birthday attack) or~$\bo{m^{1/3}}$ quantum queries (BHT~\cite{BHT98c} or Ambainis~\cite{Amb07j} algorithms).

The classical complexity of the \colf\ problem can be understood using the event $E_t$: ``the algorithm queries two different coordinates with equal values $x_i = x_j$ during its first $t$ queries''. The progress evolves as $\pr{E_{t+1}} = \pr{E_{t}} + t/m$ since the probability that the $t+1$-th query returns one of the~$t$ values observed before (i.e., produces a collision) is at most $t/m$. This yields that the progress after~$T$ queries is $\Delta_T = 1/m + 2/m + \dots + (T-1)/m = \bo{T^2/m}$, hence requiring $T = \om{\sqrt{m}}$ queries to make it sufficiently large. We can transpose this proof to the quantum setting using the following progress measure.

\begin{proposition}[Quantum recording progress for \colf]
  Consider any quantum algorithm with access to a random input drawn from the uniform distribution~$\unif$. Let~$\psr{t}$ denote the joint state of the algorithm and record after~$t$ queries to the input. Define the progress measure~$\Delta_t$ as the norm of the state obtained by projecting~$\psr{t}$ onto the records containing at least two equal coordinates~$x_i = x_j$,
    \[\Delta_t = \norm{\Pi \psr{t}} \quad \text{where} \quad \Pi = \id \otimes \sum_{\substack{x \in \pt{\Sigma \cup \set{\varnothing}}^n, \\\exists i \neq j, x_i = x_j \neq \varnothing}} \proj{x}.\]
  Then the progress $\Delta_t$ obeys the inequality $\Delta_{t+1} \leq \Delta_t + \sqrt{\frac{10t}{m}}$.
\end{proposition}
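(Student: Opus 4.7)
The plan is to mirror the proof of \Cref{Prop:search} for \srcf, with the key new twist that the number of ``collision-causing'' values is now governed by the current record size rather than being a fixed constant. By \Cref{Fact:rec}, the joint state $\psr{t}$ is supported on records with at most $t$ non-$\varnothing$ entries, and on the subspace $\id - \Pi$ all of those entries are furthermore distinct. The same triangle-inequality manipulation used for \srcf\ gives
\[\Delta_{t+1} = \norm{\Pi \rec \psr{t}} \leq \norm{\Pi \rec \Pi \psr{t}} + \norm{\Pi \rec (\id - \Pi) \psr{t}} \leq \Delta_t + \norm{\Pi \rec (\id - \Pi) \psr{t}},\]
so the task reduces to showing $\norm{\Pi \rec \ket{\psi}} \leq \sqrt{10t/m}\,\norm{\ket{\psi}}$ for any $\ket{\psi}$ in the support of $\id - \Pi$ restricted to records of size at most $t$. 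Applying this to $\ket{\psi} = (\id-\Pi)\psr{t}$ (whose norm is at most $1$) then closes the bound.

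To bound $\norm{\Pi \rec \ket{\psi}}$, I would decompose $\ket{\psi} = \ps{\mathrm{id}} + \ps{\varnothing} + \sum_{y \in \Sigma} \ps{y}$ exactly as in the proof of \Cref{Prop:search}, splitting basis states $\ket{i,b}\otimes\ket{x}$ according to whether $b = 0$, or $b \neq 0$ and $x_i = \varnothing$, or $b \neq 0$ and $x_i = y \in \Sigma$. The null component $\ps{\mathrm{id}}$ is annihilated because $\rec$ acts as the identity when $b = 0$. For $\ps{\varnothing}$, \Cref{Thm:recording} replaces $\ket{\varnothing}$ at position $i$ by $\frac{1}{\sqrt{m}}\sum_{x_i'} \omega^{bx_i'}\ket{x_i'}$, and projecting with $\Pi$ keeps only those $x_i'$ matching an existing non-$\varnothing$ entry of $x$. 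Since $x$ has no collision and record size at most $t$, there are at most $t$ bad values, giving $\norm{\Pi \rec \ps{\varnothing}}^2 \leq (t/m)\,\norm{\ps{\varnothing}}^2$.

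For each $\ps{y}$, the main term $\omega^{by}\ket{x_i}$ leaves the record unchanged and is killed by $\Pi$, and the $\ket{\varnothing}$ piece of the error strictly shrinks the record and is killed as well. Only the $\ket{x_i'}$ parts of the error with $x_i' \neq y$ equal to one of the other non-$\varnothing$ entries of $x$ survive the projection; by the no-collision assumption there are at most $t-1$ such values, each with a coefficient of magnitude at most $3/m$, so $\norm{\Pi \rec \ps{y}}^2 \leq 9(t-1)/m^2 \cdot \norm{\ps{y}}^2$. Assembling via the triangle inequality and two applications of Cauchy--Schwarz (first across the $m$ values of $y$, then to merge the $\varnothing$-contribution with the rest), one gets
\[\norm{\Pi \rec \ket{\psi}} \leq \sqrt{\tfrac{t}{m}}\,\norm{\ps{\varnothing}} + \sqrt{\tfrac{9t}{m}}\,\sqrt{\sum\nolimits_{y \in \Sigma} \norm{\ps{y}}^2} \leq \sqrt{\tfrac{10t}{m}}\,\norm{\ket{\psi}}.\]

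The main obstacle is the bookkeeping in the $\ps{y}$ case: one must observe that among the $m$ summands $\frac{1-\omega^{by}-\omega^{bx_i'}}{m}\ket{x_i'}$ in the error, only those with $x_i'$ matching some already-recorded entry land in $\Pi$, and that $x_i' = y$ merely preserves the collision-free record. This is precisely where \Cref{Fact:rec} enters: without the record-size bound the count of bad $x_i'$ would be as large as $m$ rather than $t$, and the progress per query would be $\sqrt{\bo{1}}$ instead of $\sqrt{\bo{t/m}}$ — collapsing the lower bound from the correct $\om{m^{1/3}}$ to something trivial.
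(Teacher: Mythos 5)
Your proof is correct and follows the same route as the paper's: reduce the progress increment to bounding $\norm{\Pi \rec \ket{\psi}}$ for collision-free $\ket{\psi}$ supported on size-$\leq t$ records (via \Cref{Fact:rec}), decompose into $\ps{\mathrm{id}}, \ps{\varnothing}, \ps{y}$ exactly as in \Cref{Prop:search}, read off the recording action from \Cref{Thm:recording}, and count the colliding record values per query.

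One small point in your favor: the paper states $\norm{\Pi \rec \ps{y}} \leq \frac{3t}{m}\norm{\ps{y}}$, but feeding this into the two Cauchy--Schwarz steps yields $\sqrt{(t+9t^2)/m}\,\norm{\ket{\psi}}$, which exceeds $\sqrt{10t/m}\,\norm{\ket{\psi}}$ whenever $t>1$. Your tighter computation, $\norm{\Pi\rec\ps{y}}^2 \leq \frac{9(t-1)}{m^2}\norm{\ps{y}}^2$, uses the fact that the $\leq t-1$ bad values $x_i'$ land on mutually orthogonal basis states, and it is precisely this square-root-in-$t$ bound that makes the final $\sqrt{10t/m}$ close. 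The paper's $3t/m$ is almost certainly a typo for $3\sqrt{t}/m$ (consistent with "the last one is similar" to the $\ps{\varnothing}$ case, which gives $\sqrt{t/m}$); your version is the correct one.
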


\begin{proof}
  Using the same argument as in the proof of \Cref{Prop:search}, the progress increases after each query by at most $\Delta_{t+1} \leq \Delta_t + \norm{\Pi \rec (\id - \Pi) \psr{t}}$. Let $\ket{\psi} = \sum_{i,b,x} \alpha_{i,b,x} \ket{i,b} \otimes \ket{x}$ denote any state in the support of $\id - \Pi$ \emph{and} with records of size at most~$\abs{\set{j : x_j \neq \varnothing}} \leq t$. Notice that the latter condition is satisfied by the state~$(\id - \Pi) \psr{t}$ according to \Cref{Fact:rec}. Hence, it suffices to show that~$\norm{\Pi \rec \ket{\psi}} \leq \sqrt{\frac{10t}{m}} \norm{\ket{\psi}}$.

  Let $\ket{\psi} = \ps{\mathrm{id}} + \ps{\varnothing} + \sum_{y \in \Sigma} \ps{y}$ be the  decomposition of $\ket{\psi}$ as defined in the proof of \Cref{Prop:search}. We claim that $\norm{\Pi \rec \ps{\mathrm{id}}} = 0$, $\norm{\Pi \rec \ps{\varnothing}} \leq \sqrt{\frac{t}{m}} \norm{\ps{\varnothing}}$ and $\norm{\Pi \rec \ps{y}} \leq \frac{3t}{m} \norm{\ps{y}}$. The first statement is immediate, since once again $\rec \ps{\mathrm{id}} = \ps{\mathrm{id}}$. Let us detail the second statement (the last one is similar). By \Cref{Thm:recording}, the state evolves into
    \begin{align*}
      \Pi \rec \ps{\varnothing}
        & = \Pi \sum_{i,b,x : x_i = \varnothing, b \neq 0} \alpha_{i,b,x} \ket{i,b} \otimes \ket{\dots x_{i-1}}\pt[\Big]{\frac{1}{\sqrt{m}} \sum_{x_i' \in \Sigma} \omega^{bx_i'}\ket{x_i'}}\ket{x_{i+1} \dots} \\
        & = \sum_{i,b,x : x_i = \varnothing, b \neq 0} \alpha_{i,b,x} \ket{i,b} \otimes \ket{\dots x_{i-1}}\pt[\Big]{\frac{1}{\sqrt{m}} \sum_{\substack{x_i' \in \Sigma, \\ \exists j \neq i, x_j = x_i'}} \omega^{bx_i'}\ket{x_i'}}\ket{x_{i+1} \dots}.
     \end{align*}
   The norm is at most $\norm{\Pi \rec \ps{\varnothing}}^2 = \sum_{i,b,x : x_i = \varnothing, b \neq 0} \abs{\alpha_{i,b,x}}^2 \frac{\abs{\set{x_i' \in \Sigma: \exists j, x_j = x_i'}}}{m} \leq \frac{t}{m} \norm{\ps{\varnothing}}^2$ since the records with non-zero amplitudes $\alpha_{i,b,x} \neq 0$ are of size at most~$t$ by assumption. Finally, using the triangle and Cauchy--Schwarz inequalities, we have that $\norm{\Pi \rec \ket{\psi}} \leq \sqrt{\frac{10t}{m}} \norm{\ket{\psi}}$.
\end{proof}

We leave it to the reader to prove that any algorithm solving \colf\ using~$T$ queries must satisfy the final condition $\psucc^{\unif} = \pt{\Delta_T + \bo{\sqrt{T/m}}}^2 = \bo{T^3/m}$ (our definition of the computational model should be slightly adapted to allow for the output of two indices). This entails that the average-case quantum complexity must be at least~$Q_{\unif}(\colf) = \om{m^{1/3}}$, which is optimal since it matches the complexity of the existing quantum algorithms~\cite{BHT98c,Amb07j}.

\section{The Adversary Method}
\label{Sec:adver}

The \emph{adversary method} is arguably the most popular and versatile technique for proving quantum query lower bounds. It comes in many flavors, the simplest of which is the hybrid method presented in \Cref{Sec:hybrid}. The most evolved versions have virtually no limits, as they can always provide the optimal complexity (the catch being the difficulty in applying such methods to concrete problems). In this section, we will present the modern formulation of the adversary method, based on the spectral properties of an adversially chosen matrix with real-weight entries~\cite{HLS07c}.


\subsection{Technique}

Our presentation of the adversary method extends the approaches introduced in the hybrid and recording methods. Unlike in the previous section, we revert to the model with a Boolean input alphabet $x \in \rn^n$, Boolean decision problems $\rf$, and a joint binary oracle $\ora \pt*{\ket{i,b} \otimes \ket{x}} = \ket{i,b \oplus x_i} \otimes \ket{x}$.

First, the adversary method introduces the possibility of assigning weights to the pairs of inputs considered in the hybrid method, as follows.

\fakeparagraph{Weighted Gram matrix.}
Recall that the inner product $\ip{\psi_x^T}{\psi_y^T}$ between two final states with~$f(x) \neq f(y)$ relates to the probability with which the algorithm can be correct (final condition in \Cref{Thm:hybrid}). The adversary method exploits the entire Gram matrix $G_t = \pt{\ip{\psi_x^t}{\psi_y^t}}_{x,y \in \rn^n}$ and assigns weights~$\Gamma_{x,y}$ to its entries (a somewhat unintuitive aspect of the method is that it allows for negative weights as well). The inner products~$\ip{\psi_x^t}{\psi_y^t}$ evolve as queries are made to indices~$i$ with~$x_i \neq y_i$ (progress evolution in \Cref{Thm:hybrid}). The adversary method quantifies this evolution using the ``punctured'' weights $(\Gamma_i)_{x,y}$ obtained by zeroing out all entries of~$\Gamma$ with~$x_i = y_i$. The constraints on the resulting matrices are summarized in the following definition.

\begin{definition}[Adversary matrix]
  \label{Def:adversary}
  Let $\rf$ be a Boolean function over~$n$ variables. We say that $\Gamma \in \R^{2^n \times 2^n}$ is an \emph{adversary matrix} for~$f$ if it satisfies the two conditions,
  \begin{itemize}
    \item (Symmetric) $\Gamma_{x,y} = \Gamma_{y,x}$ for all $x,y \in \rn^n$,
    \item ($f$-sparse) $\Gamma_{x,y} = 0$ when $f(x) = f(y)$.
  \end{itemize}
  For all $i \in \ind$, we define the \emph{$i$-th punctured adversary matrix} $\Gamma_i \in \R^{2^n \times 2^n}$ of $\Gamma$ as the matrix satisfying also the condition,
  \begin{itemize}
    \item (Punctured) $(\Gamma_i)_{x,y} = 0$ when $x_i = y_i$, and $(\Gamma_i)_{x,y} = \Gamma_{x,y}$ otherwise.
  \end{itemize}
\end{definition}

While the adversary matrix provides a way to emphasize pairs of inputs that are difficult to distinguish, another source of hardness can be introduced by placing a distribution on the input, as was done in the recording method. This is achieved below by extending the purification technique introduced in the last section.

\fakeparagraph{Generalized input purification.}
We described in \Cref{Def:joint} how the state of an algorithm operating under an input distribution~$\dis$ can be represented as the bipartite state $\pst{t} = \sum_{x \in \rn^n} \sqrt{\dis(x)} \psx{t}\otimes\ket{x}$. The amplitude~$\sqrt{\dis(x)}$ can, in fact, be replaced with any complex number~$a_x$ satisfying $\abs{a_x}^2 = \dis(x)$, since the reduced density matrix \emph{of the algorithm} remains equal to $\rho^t = \sum_{x \in \Sigma^n} \dis(x) \proj{\psi_x^t}$. However, this modification can affect the reduced density matrix \emph{of the input register} $\varrho^t = \sum_{x,y} a_x a_y^* \ip{\psi_x^t}{\psi_y^t} \op{x}{y}$. In the adversary method, the choice of these amplitudes determines the initial value~$\Delta_0$ of the progress measure, which we aim to minimize in order to allow for a longer progress evolution. The progress is going to be quantified by the expression
  \[- \Tr(\Gamma\varrho^t) = - \abs{\braket{\psi^{t}}{(\id \otimes \Gamma)}{\psi^{t}}} = - \abs[\Big]{\sum_{x,y} \Gamma_{x,y} a_x a_y^* \ip{\psi_x^t}{\psi_y^t}},\]
which is minimized at $t = 0$ when $(a_x)_x$ is chosen as the principal eigenvector of the adversary matrix~$\Gamma$. With this choice, and by applying an offset of $\norm{\Gamma}$ to initialize the progress at~$0$, the adversary method is formulated as follows.

\begin{theorem}[Adversary method]
  \label{Thm:adversary}
  Consider a quantum algorithm that accesses a random input~$x \in \rn^n$ drawn from a distribution~$\dis$. Let $a \in \C^{2^n}$ be a unit vector such that $\dis(x) = \abs{a_x}^2$ for all~$x$. Define the joint state of the algorithm and input after~$t$ queries as,
    \[\pst{t} = \sum_{x \in \rn^n} a_x \psx{t}\otimes\ket{x}.\]
  Given a function~$\rf$ and a non-zero adversary matrix~$\Gamma \in \R^{2^n \times 2^n}$ for~$f$ with principal eigenvector~$a$, define the following progress measure,
    \[\Delta_t = \norm{\Gamma} - \abs{\braket{\psi^{t}}{(\id \otimes \Gamma)}{\psi^{t}}}\]
  Then the progress obeys the following inequalities,
  \begin{itemize}
    \item (Initial condition) $\Delta_0 = 0$,
    \item (Progress evolution) $\Delta_{t+1} \leq \Delta_t + 2 \max_{i \in \ind} \norm{\Gamma_i}$.
  \end{itemize}
    Furthermore, the average success probability $\psucc^{\dis}$ of the algorithm in computing~$f$ after $T$ queries is at most,
  \begin{itemize}
    \item (Final condition) $\psucc^{\dis} \leq \frac{1}{2} + \sqrt{\Delta_T / (2\norm{\Gamma})}$.
  \end{itemize}
  Consequently, the average-case quantum query complexity of $f$ under the distribution~$\dis$ is at least $Q_{\dis}(f) \geq \frac{\norm{\Gamma}}{36 \max_{i \in \ind} \norm{\Gamma_i}}$.
\end{theorem}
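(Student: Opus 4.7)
The plan is to track the Hermitian form $W_t = \braket{\psi^{t}}{\id \otimes \Gamma}{\psi^{t}}$, so that $\Delta_t = \norm{\Gamma} - |W_t|$, and to verify the three bullets in turn. The initial condition is immediate: since $\ket{\psi_x^0} = U_0\ket{0,0}$ does not depend on $x$, every inner product $\ip{\psi_x^0}{\psi_y^0}$ equals $1$, so $W_0 = \sum_{x,y} a_x^* a_y \Gamma_{x,y} = a^\dagger \Gamma a$. Because $a$ is a principal eigenvector of the symmetric matrix $\Gamma$, this value equals $\pm\norm{\Gamma}$, whence $\Delta_0 = 0$.

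For the progress evolution, I would first observe that $U_{t+1} \otimes \id$ commutes with $\id \otimes \Gamma$, so only the oracle $\ora$ can alter $W_t$. A short calculation shows that $\ora_x^\dagger \ora_y$ acts as $\id$ on each subspace $\proj{i}\otimes\id$ with $x_i = y_i$ and as $\id \otimes X$ when $x_i \neq y_i$, which automatically substitutes $\Gamma$ by its punctured version $\Gamma_i$ in the corresponding term. The difference rewrites as
\[W_t - W_{t+1} = \sum_{i \in \ind} \sum_{x,y} (\Gamma_i)_{x,y}\, a_x^* a_y \braket{\psi_x^t}{(\id - \id\otimes X)(\proj{i}\otimes\id)}{\psi_y^t}.\]
Since $\id - \id\otimes X$ is positive semidefinite with norm $2$, its square root can be distributed symmetrically on each side to expose every summand as $\Tr(\Gamma_i\, X_i^\dagger X_i)$ for operators $X_i$ whose $x$-th column is $a_x (\id - \id\otimes X)^{1/2}(\proj{i}\otimes\id)\ket{\psi_x^t}$. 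The standard inequality $|\Tr(\Gamma_i M)| \leq \norm{\Gamma_i}\,\Tr(M)$ for $M \succeq 0$ bounds each trace, and collapsing $\sum_i \norm{(\proj{i}\otimes\id)\ket{\psi_x^t}}^2 = 1$ together with $\sum_x |a_x|^2 = 1$ yields $|W_t - W_{t+1}| \leq 2 \max_i \norm{\Gamma_i}$; the progress bullet then follows from the reverse triangle inequality applied to $\Delta_{t+1} - \Delta_t = |W_t| - |W_{t+1}|$.

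For the final condition, I would decompose each $\ket{\psi_x^T} = \sqrt{p_x}\ket{g_x} + \sqrt{1 - p_x}\ket{b_x}$, with $\ket{g_x}$ in the ``correct'' output subspace associated with $f(x)$ and $\ket{b_x}$ in its complement, so that $\sum_x |a_x|^2 p_x = \psucc^{\dis}$. Because $\Gamma_{x,y}$ vanishes when $f(x) = f(y)$, only pairs with $f(x) \neq f(y)$ contribute to $W_T$, and for such pairs $\ip{\psi_x^T}{\psi_y^T}$ reduces to the two cross terms $\sqrt{p_x(1-p_y)}\ip{g_x}{b_y} + \sqrt{(1-p_x)p_y}\ip{b_x}{g_y}$ by orthogonality of the output subspaces. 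Packaging these into $\ket{U} = \sum_x a_x \sqrt{p_x}\ket{g_x}\otimes\ket{x}$ and $\ket{V} = \sum_x a_x\sqrt{1-p_x}\ket{b_x}\otimes\ket{x}$, the symmetry of $\Gamma$ gives $W_T = 2\,\mathrm{Re}\,\braket{U}{\id \otimes \Gamma}{V}$, whence Cauchy--Schwarz produces $|W_T| \leq 2\norm{\Gamma}\sqrt{\psucc^{\dis}(1-\psucc^{\dis})}$. Combined with the elementary inequality $1 - 2\sqrt{p(1-p)} \geq 2(p - 1/2)^2$, this rearranges into the claimed final condition, and plugging $\Delta_T \leq 2T \max_i \norm{\Gamma_i}$ together with $\psucc^{\dis} \geq 2/3$ into it yields the promised lower bound $Q_{\dis}(f) \geq \norm{\Gamma}/(36 \max_i \norm{\Gamma_i})$.

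The main obstacle I foresee is the progress-evolution step: extracting the punctured matrix $\Gamma_i$ from the oracle difference while keeping the bound in terms of $\norm{\Gamma_i}$ rather than $\norm{\Gamma}$ requires choosing the intermediate operators $X_i$ precisely, and the PSD trace inequality only activates once the factor $\id - \id\otimes X$ has been split symmetrically across the two sides of the bilinear form. The initial and final conditions, by comparison, reduce to clean algebraic identities once $a$ is recognized as a $\Gamma$-eigenvector and the vectors $\ket{U}, \ket{V}$ are introduced.
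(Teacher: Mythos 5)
Your proof is correct and follows essentially the same approach as the paper's: both exploit that only the oracle alters the progress quantity $W_t = \braket{\psi^t}{\id\otimes\Gamma}{\psi^t}$, extract the punctured matrices~$\Gamma_i$ from the oracle difference, and use the Cauchy--Schwarz/projector-completeness structure for the per-query bound, then the vanishing of same-output blocks plus $2\sqrt{p(1-p)}\le 1-2(p-1/2)^2$ for the final condition. The minor differences (your PSD trace inequality $\abs{\Tr(\Gamma_i M)}\le\norm{\Gamma_i}\Tr(M)$ in place of the paper's direct Cauchy--Schwarz on the bilinear form, and your explicit good/bad state decomposition in place of the paper's projector conjugation) are cosmetic repackagings of the same spectral facts.
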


\begin{proof}[Proof of the initial condition]
  The initial state is $\pst{0} = \ket{0,0} \otimes \ket{a}$ where $\ket{a} = \sum_{x \in \rn^n} a_x \ket{x}$ is a principal unit eigenvector of~$\Gamma$. Hence, $\Delta_0 = \norm{\Gamma} - \abs{\braket{a}{\Gamma}{a}} = 0$.
\end{proof}

\begin{proof}[Proof of the progress evolution]
  The progress increase is bounded by $\Delta_{t+1} - \Delta_t \leq \abs{\braket{\psi^{t+1}}{(\id \otimes \Gamma)}{\psi^{t+1}} - \braket{\psi^{t}}{(\id \otimes \Gamma)}{\psi^{t}}}$. By definition, $\pst{t+1} = (U_{t+1} \otimes \id)\ora \pst{t}$ where $\ora \pt*{\ket{i,b} \otimes \ket{x}} = \pt*{\ora_x \ket{i,b}} \otimes \ket{x} = \ket{i,b \oplus x_i}\otimes\ket{x}$ is the joint binary oracle. Since $(U_{t+1} \otimes \id)$ and $(\id \otimes \Gamma)$ commute, we obtain that
    \[\Delta_{t+1} - \Delta_t \leq \abs{\braket{\psi^{t}}{\ora (\id \otimes \Gamma)\ora}{\psi^{t}} - \braket{\psi^{t}}{(\id \otimes \Gamma)}{\psi^{t}}}.\]
  The operator $\ora (\id \otimes \Gamma)\ora - \id \otimes \Gamma$ can be expressed in the standard basis as $\sum_{i,b,x,y} \pt{\op{i,b\oplus x_i}{i,b\oplus y_i} - \proj{i,b}} \otimes \Gamma_{x,y} \op{x}{y}$. Its action on the value register is represented by the two-by-two matrix $\sum_{b \in \rn} \op{b\oplus x_i}{b\oplus y_i} - \proj{b}$, which equals $0$ when $x_i = y_i$, and $X - \id$ when $x_i \neq y_i$ (where $X$ is the Pauli-$X$ matrix). The condition on $(x_i,y_i)$ can be absorbed into the matrix $\Gamma_i$, since $(\Gamma_i)_{xy} = 0$ when $x_i = y_i$ by definition (punctured property). Hence, we can continue the above inequality as follows,
    \begin{align*}
      \Delta_{t+1} - \Delta_t
        & \leq \abs[\Big]{\sum_{i \in \ind} \braket{\psi^{t}}{\pt*{\proj{i} \otimes (X - \id) \otimes \Gamma_i}}{\psi^{t}} }\\
        & \leq \sum_{i \in \ind} \norm{\id \otimes (X - \id) \otimes \Gamma_i} \cdot \norm{(\proj{i} \otimes \id) \pst{t}}^2  \\ \intertext{\raggedleft (Cauchy-Schwarz inequality)}
        & \leq \max_{i \in \ind} \norm{\id \otimes (X - \id) \otimes \Gamma_i} \cdot \sum_{i \in \ind} \norm{(\proj{i} \otimes \id) \pst{t}}^2 \\
        &  = 2 \max_{i \in \ind} \norm{\Gamma_i}. \qedhere
    \end{align*}
\end{proof}

\begin{proof}[Proof of the final condition]
  The average success probability of the algorithm is defined as $\psucc^{\dis} = \norm{\prsucc \pst{T}}^2$, where $\prsucc$ is the projector onto the states $\ket{i,b} \otimes \ket{x}$ with $b = f(x)$. The progress after~$T$ queries satisfies the equality:
  \switchAMS{
    \begin{align*}
      & \norm{\Gamma} - \Delta_T \\
        & \quad = \abs{\braket{\psi^{T}}{(\id \otimes \Gamma)}{\psi^{T}}} \\
        & \quad = \bigl|\braket{\psi^T}{\prsucc(\Gamma \otimes \id)\prsucc}{\psi^T} + \braket{\psi^T}{(\id-\prsucc)(\Gamma \otimes \id)(\id-\prsucc)}{\psi^T} \\
        & \quad \quad + \braket{\psi^T}{\prsucc(\Gamma \otimes \id)(\id - \prsucc)}{\psi^T} + \braket{\psi^T}{(\id-\prsucc)(\Gamma \otimes \id)\prsucc}{\psi^T}\bigr| \\
        & \quad = \abs{\braket{\psi^T}{\prsucc(\Gamma \otimes \id)(\id - \prsucc)}{\psi^T} + \braket{\psi^T}{(\id-\prsucc)(\Gamma \otimes \id)\prsucc}{\psi^T}} 
    \end{align*}
  }{
    \begin{align*}
      \norm{\Gamma} - \Delta_T
        & = \abs{\braket{\psi^{T}}{(\id \otimes \Gamma)}{\psi^{T}}} \\
        & = \bigl|\braket{\psi^T}{\prsucc(\Gamma \otimes \id)\prsucc}{\psi^T} + \braket{\psi^T}{(\id-\prsucc)(\Gamma \otimes \id)(\id-\prsucc)}{\psi^T} \\
        & \quad + \braket{\psi^T}{\prsucc(\Gamma \otimes \id)(\id - \prsucc)}{\psi^T} + \braket{\psi^T}{(\id-\prsucc)(\Gamma \otimes \id)\prsucc}{\psi^T}\bigr| \\
        & = \abs{\braket{\psi^T}{\prsucc(\Gamma \otimes \id)(\id - \prsucc)}{\psi^T} + \braket{\psi^T}{(\id-\prsucc)(\Gamma \otimes \id)\prsucc}{\psi^T}}
    \end{align*}
  }
  where the last equality uses that $\Gamma_{xy} = 0$ when $f(x) = f(y)$ ($f$-sparse property). By the Cauchy-Schwarz inequality, we obtain
  \begin{align*}
    \norm{\Gamma} - \Delta_T
      & \leq 2 \norm{\Gamma \otimes \id} \cdot \norm{\prsucc\ket{\psi^T}} \cdot \norm{(\id - \prsucc)\ket{\psi^T}} \\
      & = 2 \norm{\Gamma} \cdot \sqrt{\psucc^{\dis} (1 - \psucc^{\dis})} \\
      & \leq 2 \norm{\Gamma} \cdot (1/4 + \psucc^{\dis} (1 - \psucc^{\dis})) = 2 \norm{\Gamma} \cdot \pt*{1/2-(\psucc^{\dis} - 1/2)^2}.
  \end{align*}
  Reordering the terms, we obtain that $\psucc^{\dis} \leq 1/2 + \sqrt{\Delta_T/(2\norm{\Gamma})}$.
  Using the bound on the progress evolution, the success probability after $T$ queries must be at most $\psucc^{\dis} \leq 1/2 + \sqrt{T \cdot \max_{i \in \ind} \norm{\Gamma_i}/\norm{\Gamma}}$, hence the algorithm requires at least $T \geq \frac{\norm{\Gamma}}{36 \max_{i \in \ind} \norm{\Gamma_i}}$ queries to succeed with probability~$\psucc^{\dis} \geq 2/3$.
\end{proof}

The adversary method is often stated under the worst-case output condition, where the algorithm must be correct on \emph{any} input with probability at least~$2/3$ (as was defined in \Cref{Sec:intro} and \Cref{Def:qc}). In this case, the lower bound depends on the \emph{adversary value}~$\adv(f)$, which is obtained by maximizing the above argument over the entire set of adversary matrices (or, equivalently, over all input distributions).

\begin{definition}[Adversary value]
  \label{Def:adversaryValue}
  The \emph{adversary value} of a function~$\rf$ is the non-negative real number defined as,
    \[\adv(f) = \max_{\Gamma} \frac{\norm{\Gamma}}{\max_{i \in \ind} \norm{\Gamma_i}}\]
  where the maximum is taken over all non-zero adversary matrices $\Gamma \in \R^{2^n \times 2^n}$ for~$f$.
\end{definition}

\begin{corollary}
  \label{Cor:adversary}
  The quantum query complexity of any function $\rf$ is at least~$Q(f) \geq \adv(f)/36$.
\end{corollary}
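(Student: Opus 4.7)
The plan is to derive the corollary as a direct consequence of \Cref{Thm:adversary} by a standard worst-case vs. average-case reduction, combined with a maximization over the choice of adversary matrix.

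First, I would fix an arbitrary non-zero adversary matrix $\Gamma$ for $f$. Since $\Gamma$ is real and symmetric, it admits a real unit principal eigenvector $a \in \R^{2^n}$. This vector naturally defines a probability distribution on $\rn^n$ via $\dis(x) = a_x^2$, so that $\abs{a_x}^2 = \dis(x)$ and $a$ is the principal eigenvector of $\Gamma$ -- precisely the hypotheses required by \Cref{Thm:adversary}. Applying the theorem yields the lower bound $Q_{\dis}(f) \geq \frac{\norm{\Gamma}}{36 \max_{i \in \ind} \norm{\Gamma_i}}$.

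Next, I would observe that worst-case correctness implies average-case correctness under \emph{any} input distribution. Concretely, if an algorithm computes $f$ in $T$ queries in the sense of \Cref{Def:qc} (success probability at least $2/3$ on every input), then for any distribution~$\dis$, the average success probability is $\psucc^{\dis} = \sum_x \dis(x) \cdot \norm{\pt{\id \otimes \proj{f(x)}} \psx{T}}^2 \geq 2/3$. Hence $Q(f) \geq Q_{\dis}(f)$ for every distribution~$\dis$, and in particular for the distribution constructed from $a$ above. Combining with the previous step gives $Q(f) \geq \frac{\norm{\Gamma}}{36 \max_{i \in \ind} \norm{\Gamma_i}}$ for every non-zero adversary matrix $\Gamma$.

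Finally, taking the supremum over all non-zero adversary matrices $\Gamma$ on the right-hand side, and invoking \Cref{Def:adversaryValue}, I conclude that $Q(f) \geq \adv(f)/36$. No step here poses a genuine obstacle: the main conceptual point is simply that \Cref{Thm:adversary} is free to choose the distribution $\dis$ as any $\dis(x) = \abs{a_x}^2$ induced by a principal eigenvector of~$\Gamma$, and the worst-case query model dominates all average-case models.
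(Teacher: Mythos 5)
Your argument is correct and is exactly the intended derivation: the paper does not spell out a separate proof of \Cref{Cor:adversary}, but the sentence immediately preceding it (``the lower bound depends on the adversary value~$\adv(f)$, which is obtained by maximizing the above argument over the entire set of adversary matrices'') is precisely your worst-case-dominates-average-case reduction followed by a supremum over~$\Gamma$, with the input distribution chosen as $\dis(x)=a_x^2$ for a real unit principal eigenvector~$a$ of the (real symmetric) adversary matrix. Nothing to add.
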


Unlike other combinatorial measures of complexity -- such as the block sensitivity -- the adversary value is always equal to the optimal query complexity, up to a constant factor. The proof of the upper bound $Q(f) = \bo{\adv(f)}$ will be the focus of \Cref{Sec:dual}. The crucial insight is that~$\adv(f)$ can be expressed as the optimum of a semidefinite program, whose dual can be turned into a quantum algorithm.

Another interesting aspect of the adversary method is that it naturally scales with function composition, in the sense that applying it to a composed function $f \bullet g$ only requires understanding the method separately for $f$ and $g$.

\begin{proposition}[Function composition \cite{HLS07c,Rei09c}]
  \label{Prop:composition}
  Let $\rf$ and $g : \rn^m \ra \rn$ be two Boolean functions. Define their composition $f \bullet g : \rn^{nm} \ra \rn$ as $f \bullet g(x) = f(g(x_{1},\dots,x_{m}),\dots,g(x_{(n-1)m+1},\dots,x_{nm}))$. Then, $\adv(f \bullet g) = \adv(f)\adv(g)$.
\end{proposition}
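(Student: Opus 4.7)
The plan is to prove the two inequalities $\adv(f\bullet g) \geq \adv(f)\adv(g)$ and $\adv(f\bullet g) \leq \adv(f)\adv(g)$ separately. For the first, I construct an explicit composed adversary matrix. After normalizing so that $\norm{\Gamma^f} = \adv(f)$, $\max_i \norm{\Gamma^f_i} = 1$ and analogously for $g$, let $u \in \R^{2^n}$ and $v \in \R^{2^m}$ be the principal unit eigenvectors of $\Gamma^f$ and $\Gamma^g$, decompose $v = v_0 + v_1$ according to the value of $g$, and set $\lambda_b = \norm{v_b}$. Define a matrix on $\rn^{nm}$ by
\[
\Gamma_{x,y} = \Gamma^f_{G(x),G(y)} \cdot \prod_{j=1}^n K_j(x^{(j)},y^{(j)}),
\]
where $G(x) = (g(x^{(1)}),\dots,g(x^{(n)}))$ and the per-block kernel $K_j$ is a rank-one contribution built from $v_{g(x^{(j)})}$ when $g(x^{(j)}) = g(y^{(j)})$ and a rescaled entry of $\Gamma^g$ (involving factors of the form $1/(\adv(g)\lambda_0\lambda_1)$) when $g(x^{(j)}) \neq g(y^{(j)})$. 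By construction $\Gamma$ is symmetric and $(f\bullet g)$-sparse.

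The spectral verification then proceeds in two steps. First, the lifted vector $w_x := u_{G(x)} \prod_{j} v_{x^{(j)}}/\lambda_{g(x^{(j)})}$ is a unit vector (from $\sum_{z:g(z)=b} v_z^2 = \lambda_b^2$ and $\norm{u}=1$), and, by expanding $(\Gamma w)_x$ block-by-block, the ``agreeing'' blocks reproduce the rank-one normalization while the ``disagreeing'' blocks pick up a factor $\adv(g)$ via $\Gamma^g v = \adv(g) v$; summing over $c = G(y)$ and using $\Gamma^f u = \adv(f) u$ yields $\Gamma w = \adv(f)\adv(g) w$ and hence $\norm{\Gamma} \geq \adv(f)\adv(g)$. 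Second, for any query index $i$ sitting in block $j$, the punctured matrix $\Gamma_i$ factors through block $j$ as either a rank-one kernel (of unit norm) or the rescaled $\Gamma^g_{i \bmod m}$ (of norm at most $1$), while the other blocks carry rank-one kernels and the outer factor is governed by $\Gamma^f$ or $\Gamma^f_j$; submultiplicativity then gives $\max_i \norm{\Gamma_i} \leq 1$ and therefore $\adv(f\bullet g) \geq \adv(f)\adv(g)$.

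For the reverse inequality $\adv(f\bullet g) \leq \adv(f)\adv(g)$, the cleanest route uses the semidefinite programming formulation of $\adv$ developed in \Cref{Sec:dual}. Concretely, one shows that tensor products of primal (or dual) feasible solutions for $f$ and $g$ yield feasible solutions for $f \bullet g$ whose objective value is the product of the two original objectives, so by weak duality the optimal value cannot exceed $\adv(f)\adv(g)$; equivalently, any optimal adversary matrix for $f \bullet g$ can be ``restricted'' along the block structure to feasible adversary matrices for $f$ and $g$ while losing only a multiplicative factor matching the product.

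The main obstacle is calibrating the per-block rescaling factors in $K_j$. A naive entrywise tensor product of $\Gamma^f$ and $\Gamma^g$ introduces residual $\lambda_0,\lambda_1$ factors that depend on the Hamming distance between $G(x)$ and $G(y)$, breaking the eigenvector relation. The correct calibration exploits that $g$ is Boolean: a ``disagreeing'' pair has value set $\{0,1\}$, so the symmetric scaling $1/(\lambda_0\lambda_1)$ is canonically available and must be matched to the $1/\lambda_b^2$ rank-one normalization on agreeing blocks through the choice of $w$, so that all $\lambda$ factors cancel in $(\Gamma w)_x$. Similar care is required for the punctured bound, since $\Gamma_i$ is not literally a tensor product but a sum of tensor-factored sectors labelled by whether the query in block $j$ crosses a $g$-level set.
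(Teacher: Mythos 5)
The paper does not prove this proposition; it cites \cite{HLS07c,Rei09c} and invites the reader to try only the special case $\adv(\orf\bullet g) \geq \sqrt{n}\,\adv(g)$, so there is no proof in the text to compare against. Assessing your proposal on its own: the high-level plan (compose adversary matrices for $\geq$, tensor SDP-dual solutions for $\leq$) correctly mirrors the split between \cite{HLS07c} and \cite{Rei09c}, but both halves leave real gaps.

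In the $\geq$ direction, the claimed relation $\Gamma w = \adv(f)\adv(g)\,w$ does not hold for the tensor-style matrix you describe. When you expand $(\Gamma w)_x$ and group by $c=G(y)$, each disagreeing block contributes a fixed multiple $\gamma$ of what an agreeing block contributes (the exact value of $\gamma$---$\adv(g)$, $2$, etc.---depends on how you normalize $K_j$, but it is a single scalar), so the $c$-summand acquires a factor $\gamma^{d(G(x),c)}$ depending on the Hamming distance $d(G(x),c)$. Since $\Gamma^f$ is generally supported on pairs at many different Hamming distances, $\sum_c\Gamma^f_{G(x),c}u_c\,\gamma^{d(G(x),c)}$ is not a scalar multiple of $u_{G(x)}$, and the eigenvector relation fails. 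Your proposed $1/(\lambda_0\lambda_1)$ calibration cannot resolve this: for any nonzero normalized adversary matrix one has $\lambda_0=\lambda_1=1/\sqrt{2}$ automatically (from $\Gamma^g v_0=\adv(g)v_1$, $\Gamma^g v_1=\adv(g)v_0$ and $\norm{\Gamma^g}=\adv(g)$, which force $\norm{v_0}=\norm{v_1}$), so a $\lambda$-imbalance is never the obstruction---the Hamming-distance dependence of the $\gamma$-factors is. Your construction does prove the suggested $\orf$ exercise, because there the outer matrix lives on Hamming-distance-$1$ pairs only; the general case requires the more delicate argument of \cite{HLS07c}. For the $\leq$ direction, tensoring vector realizations $\ket{w^{(x,(j,k))}}=\ket{a^{(G(x),j)}}\otimes\ket{c^{(x^{(j)},k)}}$ is the correct move, but the feasibility sum $\sum_{(j,k):x_{(j,k)}\neq y_{(j,k)}}\ip{w^{(x,(j,k))}}{w^{(y,(j,k))}}$ picks up, beyond the disagreeing blocks (which contribute the desired $1$), an uncontrolled contribution from the \emph{agreeing} blocks; it vanishes only for a \emph{strict} dual solution of $g$, i.e., one with $\sum_{k:z_k\neq z'_k}\ip{c^{(z,k)}}{c^{(z',k)}}=0$ whenever $g(z)=g(z')$, and the existence of such solutions at the optimal value is a nontrivial lemma that your sketch silently assumes.
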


The reader interested in the proof of this result may attempt to demonstrate the case $\adv(\orf \bullet g) \geq \sqrt{n} \cdot \adv(g)$ by combining an optimal adversary matrix for~$g$ with the adversary matrix for the \orf\ function described in the next section.


\subsection{Applications}

One of the most challenging aspects of the adversary method is identifying which adversary matrices maximize the ratio $\norm{\Gamma}/(\max_i \norm{\Gamma_i})$. In this section, we develop some applications where~$\Gamma$ belongs to~$\rn^{2^n \times 2^n}$ (sometimes referred to as the ``basic adversary method''). This restriction on the entries of~$\Gamma$ can severely limit the adversary method in general, but it makes it somewhat more intuitive. The merits of introducing more general classes of matrices were identified later (see~\cite{SS06ja} for instance), with the most general case -- allowing negative entries -- discovered by H{\o}yer, Lee and {\v{S}}palek~\cite{HLS07c}.


\fakeparagraph{Application 1: The \orf\ function.}
We present a simple adversary matrix for the \orf\ function, demonstrating that its query complexity is at least~$\om{\sqrt{n}}$. Note that the constant factor in this lower bound is slightly smaller than those established in \Cref{Prop:hybridOR,Prop:ordeg}.

\begin{proposition}[Adversary method applied to \orf]
  \label{Prop:advOR}
  The quantum query complexity of the \orf\ function is at least~$Q(\orf) \geq \sqrt{n}/36$.
\end{proposition}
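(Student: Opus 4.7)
My plan is to construct an explicit $\set{0,1}$-valued adversary matrix $\Gamma$ of the form suggested by the hybrid-method proof in \Cref{Prop:hybridOR}: the hard pairs of inputs are those of the form $(0^n, e_i)$, where $e_i$ denotes the string with a single~$1$ at position~$i$. Concretely, I would define
\[
\Gamma_{x,y} =
\begin{cases}
1 & \text{if } \set{x,y} = \set{0^n, e_i} \text{ for some } i \in \ind,\\
0 & \text{otherwise.}
\end{cases}
\]
It is immediate to check that $\Gamma$ is symmetric, and it is $f$-sparse because $\orf(0^n) = 0$ while $\orf(e_i) = 1$ for all~$i$. Hence $\Gamma$ is a valid adversary matrix for \orf\ in the sense of \Cref{Def:adversary}.

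Next I would compute the spectral norm $\norm{\Gamma}$. Since all the nonzero entries of $\Gamma$ lie in the rows/columns indexed by $0^n$ and $e_1,\dots,e_n$, the matrix acts nontrivially only on an $(n+1)$-dimensional subspace. Searching for a principal eigenvector of the form $\alpha\ket{0^n} + \beta\sum_{i=1}^n \ket{e_i}$ yields the eigenvalue equations $n\beta = \lambda \alpha$ and $\alpha = \lambda \beta$, whose positive solution is $\lambda = \sqrt{n}$. Therefore $\norm{\Gamma} = \sqrt{n}$.

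Now I would analyze each punctured matrix $\Gamma_i$. For the pair $(0^n, e_j)$, the coordinates agree at position~$i$ unless $j = i$: indeed $(0^n)_i = 0 = (e_j)_i$ for all $j \neq i$. So after puncturing, the only surviving nonzero entries of $\Gamma_i$ are $(\Gamma_i)_{0^n, e_i} = (\Gamma_i)_{e_i, 0^n} = 1$, giving a rank-one symmetric matrix with $\norm{\Gamma_i} = 1$.

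Combining these computations, $\norm{\Gamma}/\max_i \norm{\Gamma_i} = \sqrt{n}$, so $\adv(\orf) \geq \sqrt{n}$, and \Cref{Cor:adversary} yields $Q(\orf) \geq \sqrt{n}/36$. The main potential obstacle is only bookkeeping: verifying that the principal eigenvector is indeed supported on the star $\set{0^n, e_1, \dots, e_n}$ (which follows because $\Gamma$ vanishes outside this subspace) and tracking which entries are killed by the puncturing operation. Both steps are straightforward, which is why the basic adversary method works so cleanly for \orf.
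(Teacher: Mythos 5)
Your proof is correct and follows essentially the same route as the paper: you place $(0,1)$-weights on the pairs $(0^n,e_i)$, compute $\norm{\Gamma}=\sqrt{n}$ via the principal eigenvector supported on the star, observe that each $\Gamma_i$ retains only the single nonzero pair $(0^n,e_i)$ so $\norm{\Gamma_i}=1$, and invoke \Cref{Cor:adversary}. The paper states the norm computations without derivation and additionally notes the explicit principal eigenvector $\bigl(\tfrac{1}{\sqrt{2}},\tfrac{1}{\sqrt{2n}},\dots,\tfrac{1}{\sqrt{2n}}\bigr)$ to extract an average-case lower bound, but the core argument is identical to yours.
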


\begin{proof}
  We consider the $(0,1)$-adversary matrix~$\Gamma$ with nonzero weights placed on the same hard-to-distinguish pairs of inputs $(x\super{0},y\super{i})$ as defined in the application of the hybrid method (\Cref{Prop:hybridOR}). The non-zero parts of the adversary and punctured adversary matrices are represented in \Cref{Fig:advMat}.

  \begin{figure}
    \[
      \Gamma =
      \begin{array}{c c}
      \begin{blockarray}{c@{\hspace{2mm}}c@{\hspace{2mm}}c@{\hspace{2mm}}cc}
        x\super{0} & y\super{1} & \dots & y\super{n} \\
        \begin{block}{(c@{\hspace{2mm}}c@{\hspace{2mm}}c@{\hspace{2mm}}c)c}
          0 & 1 & \dots & 1 & x\super{0} \\
          1 & 0 & \dots & 0 & y\super{1} \\
          \vdots & \vdots & \ddots & \vdots & \vdots \\
          1 & 0 & \dots & 0 & y\super{n} \\
        \end{block}
      \end{blockarray}
      &
      \Gamma_i =
      \begin{blockarray}{c@{\hspace{2mm}}c@{\hspace{2mm}}c@{\hspace{2mm}}c@{\hspace{2mm}}c@{\hspace{2mm}}c@{\hspace{2mm}}cc}
        x\super{0} & y\super{1} & \dots & y\super{i} & \dots & \dots & y\super{n} \\
        \begin{block}{(c@{\hspace{2mm}}c@{\hspace{2mm}}c@{\hspace{2mm}}c@{\hspace{2mm}}c@{\hspace{2mm}}c@{\hspace{2mm}}c)c}
          0 & \dots & 0 & 1 & 0 & \dots & 0 & x\super{0} \\
          \vdots & 0 & \dots & \dots & \dots & \dots & 0 & y\super{1} \\
          0 & \vdots & \ddots & & & & \vdots & \vdots \\
          1 & \vdots & & \ddots & & & \vdots & y\super{i} \\
          0 & \vdots & & & \ddots & & \vdots & \vdots \\
          \vdots & \vdots &  &  &  & \ddots & \vdots & \vdots \\
          0 & 0 & \dots & \dots & \dots & \dots & 0 & y\super{n} \\
        \end{block}
      \end{blockarray}
      \end{array}
      \]
    \caption{Adversary matrix for the \orf\ function.}
    \label{Fig:advMat}
  \end{figure}

  It is a simple calculation to check that $\norm{\Gamma} = \sqrt{n}$ and $\norm{\Gamma_i} = 1$, hence $Q(\orf) \geq \norm{\Gamma}/(36 \norm{\Gamma_i}) = \sqrt{n}/36$. The fact that $\pt[\big]{\frac{1}{\sqrt{2}},\frac{1}{\sqrt{2n}},\dots,\frac{1}{\sqrt{2n}}}$ is a principal eigenvector of~$\Gamma$ informs us that the lower bound holds also for the average-case complexity $Q_{\dis}(\orf) \geq \sqrt{n}/36$ under the input distribution $\dis(x\super{0}) = 1/2$, $\dis(y\super{1}) = \dots = \dis(y\super{n}) = 1/(2n)$.
\end{proof}


\fakeparagraph{Application 2: Combinatorial formulation of the basic adversary method.}
The adversary method was originally introduced by Ambainis~\cite{Amb02j} for adversary matrices with~$(0,1)$ entries. Here, we reproduce the combinatorial formulation of this approach and develop its application to the \conf\ problem in the next section. The basic adversary method represents hard-to-distinguish pairs of inputs as a bipartite graph and relates the adversary value (and, therefore, the query complexity) to the degree expansion properties of that graph.

\begin{proposition}[Basic adversary method \cite{Amb02j}]
  \label{Prop:basicAdv}
  Let $\rf$ be a Boolean function. Choose two sets of inputs $V_0 \subseteq \set{x : f(x) = 0}$, $V_1 \subseteq \set{x : f(x) = 1}$ and a relation $E \subseteq V_0 \times V_1$. Let~$G$ be the bipartite graph with vertex set $V = V_0 \cup V_1$ and edge set $E$. For each index $i \in \ind$, define the subgraph $G_i$ of~$G$ obtained by removing all edges $(x,y)$ for which~$x_i = y_i$.

  For each $x \in \rn^n$, let $d(x)$ be the degree of the vertex $x$ in the graph $G$ and $d(x,i)$ be its degree in the graph $G_i$. Then, the adversary value of $f$ is at least
    \[\adv(f) \geq \sqrt{\frac{\min_{(x,y) \in V_0 \times V_1} d(x) d(y)}{\max_{(x,y) \in E, i \in \ind} d(x,i) d(y,i)}}.\]
\end{proposition}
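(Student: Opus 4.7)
The plan is to construct a natural $(0,1)$-valued adversary matrix directly from the bipartite graph $G$, and then bound its spectral ratio using test vectors for $\norm{\Gamma}$ and a Cauchy--Schwarz row/column-sum argument for $\norm{\Gamma_i}$. Concretely, I would set $\Gamma_{x,y} = 1$ whenever $(x,y) \in E$ or $(y,x) \in E$, and $\Gamma_{x,y} = 0$ otherwise. This matrix is manifestly symmetric, and $f$-sparse because $E \subseteq V_0 \times V_1$ with $V_0 \subseteq f^{-1}(0)$ and $V_1 \subseteq f^{-1}(1)$. The punctured matrix $\Gamma_i$ is then exactly the (symmetrized) bipartite adjacency matrix of the subgraph $G_i$. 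Once the two spectral bounds below are in place, an application of \Cref{Cor:adversary} yields the claim.

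For the lower bound $\norm{\Gamma} \geq \sqrt{\min_{(x,y) \in V_0 \times V_1} d(x)d(y)}$, I would use the variational identity $\norm{\Gamma} \geq \abs{\braket{u}{\Gamma}{v}}/(\norm{\ket{u}}\cdot\norm{\ket{v}})$ with the test vectors $\ket{u} = \sum_{x \in V_0} \sqrt{d(x)}\ket{x}$ and $\ket{v} = \sum_{y \in V_1} \sqrt{d(y)}\ket{y}$. The bipartite structure of $\Gamma$ gives
\[\braket{u}{\Gamma}{v} = \sum_{(x,y) \in E} \sqrt{d(x)d(y)} \geq \abs{E} \cdot \sqrt{\min_{(x,y) \in V_0 \times V_1} d(x)d(y)},\]
since each edge contributes a degree product that is at least the minimum over all of $V_0 \times V_1$. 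Combined with $\norm{\ket{u}}^2 = \sum_{x \in V_0} d(x) = \abs{E}$ and the analogous equality for $\ket{v}$, the factor of $\abs{E}$ cancels, producing the desired bound.

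For the upper bound $\norm{\Gamma_i}^2 \leq \max_{(x,y) \in E, i \in \ind} d(x,i)d(y,i)$, the key step (and main obstacle) is establishing a sharp edge-localized spectral inequality of the form $\norm{A}^2 \leq \max_{(x,y) : A_{x,y} \neq 0} r(x) c(y)$ for any nonnegative matrix $A$ with row sums $r(x)$ and column sums $c(y)$. I would derive this from Cauchy--Schwarz: for any unit vector $\ket{w}$,
\[\norm{A\ket{w}}^2 = \sum_x \pt*{\sum_y A_{x,y} w_y}^2 \leq \sum_x r(x) \sum_y A_{x,y} w_y^2 = \sum_y w_y^2 \sum_x A_{x,y} r(x),\]
and the inner sum is at most $c(y) \cdot \max_{x : A_{x,y} \neq 0} r(x)$. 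Applied to $\Gamma_i$, whose row and column sums at $x \in V_0$ and $y \in V_1$ are precisely $d(x,i)$ and $d(y,i)$, this gives $\norm{\Gamma_i}^2 \leq \max_{(x,y) \in E_i} d(x,i)d(y,i)$, hence $\max_i \norm{\Gamma_i}^2 \leq \max_{(x,y) \in E, i \in \ind} d(x,i)d(y,i)$ since $E_i \subseteq E$. Taking the ratio of the two bounds and invoking \Cref{Cor:adversary} completes the proof.
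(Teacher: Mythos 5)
Your proof is correct, and it follows the same skeleton as the paper's: take $\Gamma$ to be the (symmetrized) biadjacency matrix of $G$, lower-bound $\norm{\Gamma}$, upper-bound $\max_i\norm{\Gamma_i}$, and take the ratio. The details of both norm estimates, however, genuinely differ. For $\norm{\Gamma}$, the paper uses a test vector that is constant on $V_0$ and constant on $V_1$, with the two constants chosen so that $\norm{a}\le 1$ while $\norm{\Gamma a}^2\ge m_0 m_1$; you instead use the degree-weighted vectors $u_x=\sqrt{d(x)}$, $v_y=\sqrt{d(y)}$ and the bilinear variational bound $\norm{\Gamma}\ge\abs{\braket{u}{\Gamma}{v}}/(\norm{u}\norm{v})$, with the cancellation $\abs{E}/\abs{E}$ doing the work. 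Your version is cleaner and arguably tighter in spirit (it immediately isolates the geometric means of edge-endpoint degrees). For the upper bound, the paper invokes the spectral-norm lemma from \cite{SS06ja} as a black box, while you derive the (slightly more general, nonnegative-matrix) version from a one-line Cauchy--Schwarz computation followed by an interchange of summation; this makes the argument self-contained. Both routes establish exactly $\norm{\Gamma}\ge\sqrt{m_0m_1}$ and $\norm{\Gamma_i}\le\max_{(x,y)\in E}\sqrt{d(x,i)d(y,i)}$, so the final bound is identical.

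One small slip: the proposition states a lower bound on the \emph{adversary value} $\adv(f)$, so the step from the ratio $\norm{\Gamma}/\max_i\norm{\Gamma_i}$ to the conclusion should invoke the defining maximization in \Cref{Def:adversaryValue}, not \Cref{Cor:adversary} (which relates $\adv(f)$ to $Q(f)$ and is not needed to establish the claim as stated). The mathematics is unaffected, but the reference is misplaced.
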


\begin{proof}
  We define~$\Gamma$ as the adjacency matrix of the graph~$G$ (i.e., $\Gamma_{x,y} = 1$ if and only if $(x,y) \in E$). One can check that $\Gamma$ is indeed an adversary matrix. Let $m_0 = \min_{x \in V_0} d(x)$ and $m_1 = \min_{y \in V_1} d(y)$ be the minimal left and right degrees in the bipartite graph $G$. We can assume without loss of generality that $m_0,m_1 \geq 1$ (otherwise the lower bound is vacuous).

  We show first that $\norm{\Gamma} \geq \sqrt{m_0 m_1}$. Let $a \in \R^n$ be the vector defined as $a_x = \sqrt{m_0/(2m_1\abs{E})}$ when $x \in V_0$ and $a_y = \sqrt{m_1/(2m_0\abs{E})}$ when $y \in V_1$. Then,
    $\norm{a}^2 = m_0\abs{V_0}/(2m_1\abs{E}) + m_1\abs{V_1}/(2m_0\abs{E}) \leq 1$
  and
    $\norm{\Gamma a}^2
    = \sum_{x \in V_0} d(x)^2 m_1/(2m_0\abs{E}) + \sum_{y \in V_1} d(y)^2 m_0/(2m_1\abs{E})
    \geq m_0 m_1 \abs{V_0}/(2\abs{E}) + m_0 m_1 \abs{V_1}/(2\abs{E})
    = m_0m_1$.
  Hence, $\norm{\Gamma} \geq \norm{\Gamma a}/\norm{a} \geq \sqrt{m_0 m_1}$.

  We claim next that $\norm{\Gamma_i} \leq \max_{(x,y) \in E} \sqrt{d(x,i) d(y,i)}$ for all $i$. This follows immediately from the next inequality on the spectral norm of symmetric $(0,1)$-matrices.
    \begin{lemma}[Appendix A in \cite{SS06ja}]
      The spectral norm of a symmetric $(0,1)$-matrix~$A$ is at most $\norm{A} \leq \max_{x,y : A_{x,y} = 1} \sqrt{r_x(A)c_y(A)}$, where $r_x(A)$ (resp. $c_y(A)$) is the sum of the elements in the $x$-th row (resp. $y$-th column) of $A$.
    \end{lemma}

  The proposition follows by definition of $\adv(f) \geq \sqrt{\norm{\Gamma}/\max_{i \in \ind} \norm{\Gamma_i}}$.
\end{proof}


\fakeparagraph{Application 3: The \conf\ function.}
We study the complexity of determining whether an undirected $n$-vertex graph is connected or not. The input is encoded over $\binom{n}{2}$ bits~$x \in \rn^{\binom{n}{2}}$ representing the adjacency matrix of the graph (where $x_{\set{i,j}} = 1$ if and only if there is an edge between vertices~$i$ and~$j$). This input is \emph{not} meant to represent the graphs~$G$ introduced in the basic adversary method (\Cref{Prop:basicAdv}), which are distinct objects used to analyze the adversary value. 

\begin{definition}[\conf]
  The $\conf$ problem is to output~$1$ if the graph represented by the input~$x \in \rn^{\binom{n}{2}}$ is connected, and~$0$ otherwise.
\end{definition}

The query complexity of this problem was first established by D{\"{u}}rr, Heiligman, H{\o}yer and Mhalla~\cite{DHHM06j}. The classical query complexity is easily shown to be maximal through a reduction from the \orf\ problem.

\begin{proposition}
  The randomized query complexity of the \conf\ function is at least~$R(\conf) = \om{n^2}$.
\end{proposition}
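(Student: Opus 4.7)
The plan is to prove $R(\conf) = \om{n^2}$ by a reduction from the \orf\ problem on $N = \Theta(n^2)$ variables, leveraging the already-established lower bound $R(\orf_N) \geq N/3$ (which follows from $\bs(\orf_N) = N$ on the all-zero input, combined with the block sensitivity bound $R(f) \geq \bs(f)/3$).

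The reduction embeds an \orf\ instance of size $N = \lfloor n/2 \rfloor \cdot \lceil n/2 \rceil$ into a connectivity instance as follows. Partition the $n$ vertices into two disjoint sets $A = \set{1,\dots,\lfloor n/2 \rfloor}$ and $B = \set{\lfloor n/2 \rfloor + 1,\dots,n}$. Given an \orf\ input $y \in \rn^N$ indexed by pairs $(i,j) \in A \times B$, construct the graph $G_y$ on $n$ vertices whose edge set consists of (i) the two fixed paths $1\!-\!2\!-\!\cdots\!-\!\lfloor n/2 \rfloor$ inside $A$ and $\lfloor n/2\rfloor+1\!-\!\cdots\!-\!n$ inside $B$, and (ii) the bipartite edge $\set{i,j}$ whenever $y_{(i,j)}=1$. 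The key observation is that $G_y$ is connected if and only if at least one bipartite edge is present, i.e.\ if and only if $\orf(y) = 1$: each side $A$ and $B$ is internally connected by its path, and no edge crosses between $A$ and $B$ unless some $y_{(i,j)}$ equals one.

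Any randomized algorithm $\mathcal{A}$ solving \conf\ on $n$ vertices yields a randomized algorithm for \orf\ on $N$ bits by simulating $\mathcal{A}$ on the implicit input $x = x(y) \in \rn^{\binom{n}{2}}$: whenever $\mathcal{A}$ queries the adjacency bit $x_{\set{i,j}}$, answer $1$ if $\set{i,j}$ lies on one of the two fixed paths, answer $0$ if $\set{i,j}$ is an edge inside $A$ or inside $B$ but not on a path, and answer $y_{(i,j)}$ (making one query to $y$) if $\set{i,j} \in A \times B$. This simulation uses at most as many queries to $y$ as $\mathcal{A}$ makes to $x$, and it outputs $\conf(x(y)) = \orf(y)$ with probability at least $2/3$. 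Hence $R(\orf_N) \leq R(\conf)$, and combining with $R(\orf_N) \geq N/3 = \lfloor n/2 \rfloor \lceil n/2 \rceil / 3 = \om{n^2}$ yields the claimed bound.

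There is no real obstacle: the only care required is to make sure the simulation never needs information outside the two cases above, which is immediate from the construction. The same reduction in fact gives $D(\conf) = \om{n^2}$ as well, since $\orf_N$ has maximal deterministic query complexity $N$.
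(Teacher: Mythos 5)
Your proof is correct and takes essentially the same approach as the paper: a reduction from $\orf$ on $\Theta(n^2)$ bits to $\conf$ by fixing two path graphs on the two halves of the vertex set and encoding the $\orf$ variables as potential bipartite edges between the halves. The only cosmetic differences are that you handle odd $n$ via floors and ceilings where the paper simply assumes $n$ even, and you spell out the query simulation and cite $R(\orf_N) \geq \bs(\orf_N)/3$ explicitly, which the paper leaves implicit.
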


\begin{proof}
  Assume that~$n$ is even. We describe a reduction from the \orf\ problem over $n^2/4$ bits to the \conf\ problem over $\binom{n}{2}$ bits. Fix $P_1$ and $P_2$ to be two path graphs of lengths $n/2$ over the vertices~$\set{1,\dots,n/2}$ and~$\set{n/2+1,\dots,n}$, respectively. Given an input $y \in \rn^{n/2 \times n/2}$ to the \orf\ problem (indexed as a square matrix for convenience), define the graph obtained by the union of $P_1$, $P_2$ and all the edges $\set{i,j+n/2}$ for which $y_{i,j} = 1$. Then, the resulting graph is connected if and only if $\orf(y) = 1$. Since computing~$\orf(y)$ requires~$\om{n^2}$ queries, the reduction implies that the same lower bound applies to the \conf\ problem as well.
\end{proof}

The same reduction yields an~$\om{\sqrt{n^2/4}} = \om{n}$ quantum lower bound for \conf. This is, however, not optimal, as shown in the next proposition based on the basic adversary method (we will describe a matching upper bound in \Cref{Prop:conUB}, using the dual to the adversary method).

\begin{proposition}[Adversary method applied to \conf]
  \label{Prop:conLB}
  The quantum query complexity of the \conf\ function is at least~$Q(\conf) = \om{n^{3/2}}$.
\end{proposition}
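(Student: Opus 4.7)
I apply the basic adversary method of \Cref{Prop:basicAdv}. The plan is to choose $V_1$ to be (a subset of) spanning trees on $[n]$ and $V_0$ to be (a subset of) two-component spanning forests on $[n]$, with $E$ the set of pairs $(x,y) \in V_0 \times V_1$ satisfying $y = x + e$ for some edge $e$. Note that flipping a single bit of the adjacency-matrix input corresponds to adding or removing exactly one edge, so disconnecting/reconnecting by one edge is the correct notion of adjacency. Every spanning tree $y \in V_1$ has $d(y) = \Theta(n)$ neighbors in $V_0$, since each of its $n-1$ edges is a bridge whose removal produces a two-component forest. Every forest $x \in V_0$ whose components are balanced in size (both $\Theta(n)$) has $d(x) = \Theta(n^2)$ neighbors in $V_1$, one per cross edge. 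Finally, for any fixed edge index $i$, at most one element of the opposite set differs from a given $x$ or $y$ in exactly that edge, so $d(x,i), d(y,i) \leq 1$. Plugging into \Cref{Prop:basicAdv} gives
\[\adv(\conf) \geq \sqrt{\tfrac{\Theta(n^2) \cdot \Theta(n)}{1}} = \Omega(n^{3/2}),\]
and \Cref{Cor:adversary} then yields $Q(\conf) = \Omega(n^{3/2})$.

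The choice of $V_0, V_1$ must be made with care: if one takes $V_0$ to be all balanced two-component forests and $V_1$ to be all spanning trees, then $d(y)$ can drop to $O(1)$ for pathological trees $y$ (such as a star, or a path, whose only balanced-cut edge is a single middle edge). My plan to circumvent this is to either (i) restrict $V_1$ to spanning trees that possess $\Omega(n)$ balanced-cut edges, e.g.\ path-like ``caterpillar'' spanning trees where $\Theta(n)$ internal edges split the vertex set into two sides of size $\Theta(n)$; or (ii) upgrade to the more general weighted adversary matrix of \Cref{Thm:adversary}, setting $\Gamma_{x,y}$ as a nontrivial function of the component sizes of $x$ so that the spectrum of $\Gamma$ concentrates on balanced partitions while the punctured norms $\norm{\Gamma_i}$ remain controlled. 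Either way the key norm estimates are $\norm{\Gamma} = \Omega(n^{3/2})$ and $\max_i \norm{\Gamma_i} = O(1)$.

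\textbf{Main obstacle.} The central difficulty is simultaneously securing $d(x) = \Theta(n^2)$ for every $x \in V_0$ and $d(y) = \Theta(n)$ for every $y \in V_1$. The basic adversary method forces these bounds to be uniform minima, but the natural choices do not satisfy both at once: enlarging $V_1$ lets pathological trees sneak in with few balanced cuts, while shrinking it threatens the uniform estimate on $d(x,i) d(y,i)$. Resolving this tension, either through a delicate combinatorial choice of $V_0, V_1$ (as above) or via the weighted adversary, is the heart of the proof and is precisely where the $n^{3/2}$ exponent comes from: it matches the situation in which the algorithm must Grover-search over $n$ edges at each of $\sqrt{n}$ incremental ``connectivity checks.''
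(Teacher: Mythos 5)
Your plan has the right shape (bipartite $V_0,V_1$, degree-ratio bound, $n^{3/2}$ exponent), and you correctly identify the main obstacle: with $V_1$ all spanning trees, pathological trees have $d(y)=O(1)$. But the two fixes you sketch do not close this gap, and the tension you name is in fact more severe than the proposal suggests. Concerning fix (i): if you shrink $V_1$ to trees with $\Omega(n)$ balanced-cut edges, you destroy the $d(x)=\Theta(n^2)$ bound. Take $x\in V_0$ to be a balanced pair of stars: for \emph{every} cross-edge $e$, the tree $y=x+e$ is a "dumbbell" with at most $3$ balanced-cut edges (the two spoke edges incident to $e$ and $e$ itself), so $y\notin V_1$ and $d(x)=0$. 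You would therefore also have to prune $V_0$, but then $d(y)$ is no longer the number of balanced cuts of $y$; it is the number of cuts $f$ with $y-f$ landing in the pruned $V_0$, and showing this stays $\Omega(n)$ uniformly requires exactly the kind of delicate two-sided pruning argument the proposal defers. (Caterpillars don't help directly: $x+e$ for a pair of paths $x$ and a generic cross-edge $e$ is not a caterpillar, so taking $V_1$ to be caterpillars again collapses $d(x)$.) Fix (ii), the negative-weight adversary, could in principle work but no candidate $\Gamma$ is given, and the required norm estimates are not a routine consequence of the combinatorics you set up.

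The paper avoids the whole issue by choosing a vertex-transitive family: $V_0$ consists of graphs made of two disjoint cycles each of length $\geq n/3$, $V_1$ of Hamiltonian $n$-cycles, and $E$ relates $x$ and $y$ when $y$ is obtained from $x$ by cutting one edge from each cycle and gluing the two resulting paths into one $n$-cycle. Because cycles are edge-transitive there are no bad instances: every $x\in V_0$ has $d(x)=\Theta(n^2)$ (any choice of one edge per cycle works) and every $y\in V_1$ has $d(y)=\Theta(n^2)$ (any two edges at distance $\geq n/3$). The cost is that $E$-adjacent inputs differ in four bits rather than one, so $\max_i d(x,i)d(y,i)$ rises from $1$ to $O(n)$, but the resulting bound $\sqrt{\Theta(n^2)\cdot\Theta(n^2)/O(n)}=\Omega(n^{3/2})$ is the same. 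So the key idea you are missing is to replace forests/trees by pairs-of-cycles/cycles and to allow the adjacency relation $E$ to flip several bits at once; this trades a larger numerator against a larger denominator while eliminating the degree irregularity that blocks your construction.
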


\begin{proof}
  We first construct the two sets of inputs~$V_0$ and $V_1$ needed to apply \Cref{Prop:basicAdv}. The set~$V_0$ consists of the graphs $x \in \rn^{\binom{n}{2}}$ made of two disjoint cycles, each of length at least~$n/3$. The set~$V_1$ consists of the graphs $x \in \rn^{\binom{n}{2}}$ made of a single cycle of length~$n$. We define the relation $E \subseteq V_0 \times V_1$ as all pairs of graphs $(x,y) \in V_0 \times V_1$ that are related by the following process: $y$ can be obtained by disconnecting one edge from each cycle in~$x$ and gluing the two resulting paths together into a cycle of length~$n$.

  Using the notations of \Cref{Prop:basicAdv}, each input $x \in V_0$ belongs to at least $d(x) \geq (n/3)^2$ pairs in $E$, since there are at least $n/3$ choices to disconnect each cycle in~$x$. Conversely, each input $y \in V_1$ belongs to at least $d(y) \geq n^2/6$ pairs since there are at least $n^2/6$ choices to disconnect two edges in $y$ at distance at least $n/3$ from each other. We leave it to the reader to verify that the number of pairs remaining, when an edge $\set{i,j}$ must be in one graph but not in the other, satisfies the relation $d(x,\set{i,j}) d(y,\set{i,j}) = \bo{n}$. By \Cref{Prop:basicAdv} and \Cref{Cor:adversary}, the quantum query complexity is at least $Q(\conf) = \om{\adv(\conf)} = \om{\sqrt{n^2 \cdot n^2 / n}} = \om{n^{3/2}}$.
\end{proof}

\conf\ is part of a larger family of graph problems -- the non-trivial monotone graph properties -- whose complexities attract a lot of attention. The interested reader can refer to the Aanderaa-Karp-Rosenberg conjectures (e.g., \cite[Section 5]{ABK21c}).

\section{Algorithmic Dual to the Adversary Method}
\label{Sec:dual}

The study of lower-bound methods is often not entirely separate from that of upper bounds, i.e., algorithm design. This principle is particularly well illustrated by the adversary method. Indeed, any solution to its dual, defined through semidefinite optimization, can be adapted into a surprisingly efficient converse algorithm. This result, first established by Reichardt~\cite{Rei11c}, provides a tight characterization of quantum query complexity in terms of the adversary value, up to constant factors:~$Q(f) = \ta{\adv(f)}$. This stands in contrast to the dual of the polynomial method, as described in \Cref{Prop:dualpoly}, which does not have an algorithmic interpretation in general (although this can be addressed by a recent extension of the polynomial method~\cite{ABP19j}).

This section presents the dual of the adversary method and explains how its solutions can be converted into quantum algorithms. We follow the approach of~\cite{LMR11c,dWol19p}. In the application section, we derive optimal algorithms for the \orf, \aotf\ and \conf\ functions.


\subsection{Technique}

The adversary value $\adv(f)$ was defined in the previous section as the supremum of the ratio $\norm{\Gamma}/\max_{i \in \ind}\norm{\Gamma_i}$, evaluated over the set of adversary matrices~$\Gamma$ (\Cref{Def:adversaryValue}). It is not complicated to see that this optimization problem can be phrased as a semidefinite program~(SDP), i.e., a generalization of a linear program where, in addition to linear constraints, some variables~$X$ (structured as square matrices) must satisfy the positive semidefinite constraint (PSD)~$X \succeq 0$.

The dual of an SDP often provides valuable insights into its optimal solutions. Below, we state the primal-dual formulation of the adversary value. The dual program associates a PSD matrix~$V\super{i}$ with each query index $i$, and it minimizes the largest diagonal entry of~$\sum_i V\super{i}$ under the constraint that the off-diagonal terms of the partial sum $\sum_{i : x_i \neq y_i} V\super{i}_{x,y}$ are equal to $1$ whenever~$f(x) \neq f(y)$.

\begin{proposition}[Dual program for $\adv(f)$, proven in \cite{Rei09c} or Theorem 3.29 in \cite{Bel14p}]
  \label{Prop:dual}
  Let $\rf$ be a Boolean function. Define the two following semidefinite programs.
  \switchAMS{
  \begin{gather*}
    \textit{Primal semidefinite program}  \\
    \boxed{\begin{array}{ll}
      \max\limits_{\Gamma}  &  \frac{\norm{\Gamma}}{\max_{i \in \ind}\norm{\Gamma_i}}    \\[2mm]
      \mbox{s.t.} & \Gamma_{x,y} = \Gamma_{y,x} \hfill \forall x,y \in \rn^n \\[2mm]
      & \Gamma_{x,y} = 0 \quad \forall x,y: f(x) = f(y) \\[2mm]
      & \Gamma \in \R^{2^n \times 2^n}
    \end{array}} \\[5mm]
     \textit{Dual semidefinite program} \\
     \boxed{\begin{array}{ll}
      \min\limits_{V\super{1},\dots,V\super{n}} &  \max\limits_{x \in \rn^n}  \sum\limits_{i \in \ind} V\super{i}_{x,x} \\[2mm]
      \mbox{s.t.} & V\super{i} \succeq 0 \hfill \forall i \in \ind  \\[2mm]
      & \sum_{i : x_i \neq y_i} V\super{i}_{x,y} = 1 \quad \forall x,y: f(x) \neq f(y)\\[2mm]
      & V\super{1},\dots,V\super{n} \in \C^{2^n \times 2^n}
    \end{array}}
  \end{gather*}
  }{
    \[\begin{array}{@{}cc@{}}
          \textit{Primal semidefinite program} & \textit{Dual semidefinite program} \\
          \boxed{\begin{array}{ll}
            \max\limits_{\Gamma}  &  \frac{\norm{\Gamma}}{\max_{i \in \ind}\norm{\Gamma_i}}    \\[2mm]
            \mbox{s.t.} & \Gamma_{x,y} = \Gamma_{y,x} \hfill \forall x,y \in \rn^n \\[2mm]
            & \Gamma_{x,y} = 0 \quad \forall x,y: f(x) = f(y) \\[2mm]
            & \Gamma \in \R^{2^n \times 2^n}
          \end{array}}
          &
          \boxed{\begin{array}{ll}
            \min\limits_{V\super{1},\dots,V\super{n}} &  \max\limits_{x \in \rn^n}  \sum\limits_{i \in \ind} V\super{i}_{x,x} \\[2mm]
            \mbox{s.t.} & V\super{i} \succeq 0 \hfill \forall i \in \ind  \\[2mm]
            & \sum_{i : x_i \neq y_i} V\super{i}_{x,y} = 1 \quad \forall x,y: f(x) \neq f(y)\\[2mm]
            & V\super{1},\dots,V\super{n} \in \C^{2^n \times 2^n}
          \end{array}}
      \end{array}\]
   }

  Then, the two programs are dual to each other, and they satisfy the strong duality property. In particular, their optimum are both equal to $\adv(f)$.
\end{proposition}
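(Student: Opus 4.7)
The plan is to verify weak duality by direct computation and then invoke Slater's condition for strong duality.

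For weak duality, let $\Gamma$ be any adversary matrix for $f$ with unit principal eigenvector $\alpha$ (so that $\norm{\Gamma}=\alpha^T\Gamma\alpha$ after an appropriate sign), and let $\{V^{(i)}\}_{i\in\ind}$ be any dual-feasible point. Substituting the equality $\sum_{i:x_i\neq y_i}V^{(i)}_{xy}=1$ into the expansion of $\alpha^T\Gamma\alpha$ and swapping the order of summation yields
\[\norm{\Gamma}=\sum_{x,y:f(x)\neq f(y)}\alpha_x\alpha_y\Gamma_{xy}=\sum_{i\in\ind}\sum_{x,y}\alpha_x\alpha_y\,(\Gamma_i)_{xy}\,V^{(i)}_{xy}.\]
Writing a spectral decomposition $V^{(i)}=\sum_k\ket{\phi_{i,k}}\bra{\phi_{i,k}}$ of each PSD matrix and introducing the vector $\beta_{i,k}$ with entries $\beta_{i,k}(x)=\alpha_x\,\phi_{i,k}(x)$, the inner sum for index $i$ equals $\sum_k\braket{\beta_{i,k}}{\Gamma_i}{\beta_{i,k}}$, which is bounded in absolute value by $\norm{\Gamma_i}\sum_k\|\beta_{i,k}\|^2=\norm{\Gamma_i}\sum_x\alpha_x^2\,V^{(i)}_{xx}$. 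Summing over $i$ and using $\sum_x\alpha_x^2=1$ gives
\[\norm{\Gamma}\leq\max_i\norm{\Gamma_i}\cdot\max_x\sum_{i\in\ind}V^{(i)}_{xx},\]
which is weak duality: the primal optimum $\adv(f)$ is at most the dual optimum.

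For strong duality the main obstacle is establishing Slater's condition. On the primal side, rescaling any non-zero adversary matrix (which exists whenever $f$ is non-constant, e.g.\ the rank-two matrix supported on a single pair $x,y$ with $f(x)\neq f(y)$) makes all spectral-norm constraints $\norm{\Gamma_i}\leq 1$ strict, providing an interior point. On the dual side, strict feasibility is more delicate because the linear equalities pin down the off-diagonal entries; I would start from a base feasible $\{V^{(i)}\}$ induced by the zero-error $n$-query algorithm that reads the entire input, and then perturb each $V^{(i)}$ by a small $\epsilon I$, which preserves all off-diagonal equalities while making each $V^{(i)}$ strictly positive definite. With Slater's condition verified on both sides, standard SDP duality yields the opposite inequality and hence the coincidence of the two optima at $\adv(f)$.
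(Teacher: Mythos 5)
The paper itself offers no proof of this proposition: the statement is explicitly cited to Reichardt and to Theorem~3.29 of Belovs, and the text moves on. So there is no in-paper argument to compare against, and I can only assess your proposal on its own merits.

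Your weak-duality computation is correct and is essentially the argument one finds in the cited references. Picking the principal eigenvector $\alpha$ of $\Gamma$ (after fixing the sign), substituting $\sum_{i:x_i\neq y_i}V^{(i)}_{xy}=1$ for the factor of~$1$, swapping sums to reach $\sum_i \langle\beta_{i,k}|\Gamma_i|\beta_{i,k}\rangle$ via a spectral decomposition of each $V^{(i)}$, and bounding each term by $\norm{\Gamma_i}\cdot\norm{\beta_{i,k}}^2$ is exactly the clean way to see that the dual value upper-bounds the primal value. (Minor point: for complex $V^{(i)}$ the Gram decomposition reads $V^{(i)}_{xy}=\sum_k\overline{\phi_{i,k}(x)}\phi_{i,k}(y)$, and since $\alpha$ is real, $\beta_{i,k}(x)=\alpha_x\phi_{i,k}(x)$ still works; the bound $|\langle\beta|\Gamma_i|\beta\rangle|\leq\norm{\Gamma_i}\norm{\beta}^2$ holds because $\Gamma_i$ is real symmetric, hence Hermitian.)

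The strong-duality half has a genuine gap. The ``primal semidefinite program'' displayed in the proposition is not actually an SDP: the objective is a ratio of spectral norms of the variable $\Gamma$, and even after normalizing to $\max_i\norm{\Gamma_i}\leq 1$, the objective $\norm{\Gamma}$ is a spectral norm, not a linear functional of the decision variables. Slater's condition is a statement about SDPs (or, more generally, convex programs with explicit conic constraints); before you can invoke it you must recast the problem as a bona fide SDP --- linearize $\norm{\Gamma}$ via a joint optimization over $\Gamma$ and the unit vector $\alpha$ (and then further change variables so the result is linear/conic), encode each constraint $\norm{\Gamma_i}\leq1$ as an LMI, compute the Lagrangian dual of the resulting SDP, and finally verify that this dual collapses to the stated dual with variables $V^{(1)},\dots,V^{(n)}$. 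That reformulation and dual derivation is precisely the nontrivial content of Reichardt's and Belovs' proofs, and ``standard SDP duality yields the opposite inequality'' skips all of it. Your Slater-type perturbations (rescale $\Gamma$ on the primal side; add $\epsilon I$ to each $V^{(i)}$ on the dual side, which indeed leaves the off-diagonal equality constraints untouched) are sensible observations for whichever SDP one eventually writes down, and the base dual-feasible point you allude to exists --- e.g.\ $w^{(x,i)}=\ket{x_1,\dots,x_{i-1}}$ gives $\sum_{i:x_i\neq y_i}\langle w^{(x,i)}|w^{(y,i)}\rangle=1$ with value $n$ --- but these pieces do not substitute for the missing reformulation and dual computation.
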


By (weak) duality, one can certify that the adversary value is at most $\adv(f) \leq T$ by exhibiting a feasible solution to the dual program with value $T$. The striking property, which we establish next, is the ability to also derive a quantum algorithm for computing~$f$ with complexity~$\bo{T}$. The rest of this section is dedicated to the description and analysis of this algorithm. We start by rewriting the solutions of the dual program in terms of the vector realizations of the PSD matrices.

\begin{lemma}[Vector realization of a dual solution]
  \label{Lem:rep}
  Let $V\super{1},\dots,V\super{n} \in \C^{2^n \times 2^n}$ be a feasible solution to the dual program from \Cref{Prop:dual} with value $T = \max_{x \in \rn^n} \sum_{i \in \ind} V\super{i}_{x,x}$. Then, there exist an integer~$d$ and a set of complex vectors $\set{\ket{w\super{x,i}}}_{x \in \rn^n, i \in \ind} \subset \C^d$ such that
  $\sum_{i : x_i \neq y_i} \ip{w\super{x,i}}{w\super{y,i}} = 1$ when $f(x) \neq f(y)$, and $T = \max_{x \in \rn^n} \sum_{i \in \ind} \norm{w\super{x,i}}^2$.
\end{lemma}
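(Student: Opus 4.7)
The plan is to convert each PSD matrix $V\super{i}$ into its Gram matrix form and read off the vectors $\ket{w\super{x,i}}$ directly as the columns of a factorization. This is a routine consequence of the characterization of positive semidefinite matrices, so the argument is essentially a dictionary translation rather than a genuine obstacle.

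First I would fix some $d$ large enough to accommodate all the factorizations simultaneously (for instance $d = 2^n$ suffices, since each $V\super{i}$ has rank at most $2^n$). For each $i \in \ind$, the constraint $V\super{i} \succeq 0$ guarantees the existence of a matrix $B_i \in \C^{d \times 2^n}$ such that $V\super{i} = B_i^{\dagger} B_i$ (one can take $B_i$ to be the principal square root of $V\super{i}$ padded with zeros, or the Cholesky factor). Labeling the columns of $B_i$ by $\rn^n$, define
\[
    \ket{w\super{x,i}} = B_i \ket{x} \in \C^d
\]
for every $x \in \rn^n$ and $i \in \ind$. Then by construction
\[
    V\super{i}_{x,y} = \braket{x}{B_i^\dagger B_i}{y} = \ip{w\super{x,i}}{w\super{y,i}},
\]
and in particular $V\super{i}_{x,x} = \norm{w\super{x,i}}^2$.

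Next I would plug these identities into the two constraints of the dual program. The linear equality $\sum_{i : x_i \neq y_i} V\super{i}_{x,y} = 1$ for $f(x) \neq f(y)$ becomes exactly
\[
    \sum_{i : x_i \neq y_i} \ip{w\super{x,i}}{w\super{y,i}} = 1,
\]
which is the first claim. The objective value $T = \max_x \sum_i V\super{i}_{x,x}$ becomes $T = \max_x \sum_i \norm{w\super{x,i}}^2$, which is the second claim. The symmetry in $x \leftrightarrow y$ of the first equation is automatic from the Hermitian symmetry of $V\super{i}$.

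The only step that could conceivably require care is the choice of the common ambient dimension $d$; if one factored each $V\super{i}$ independently one would get vectors living in spaces of potentially different dimensions, but this is trivially fixed by embedding every factorization into $\C^d$ with $d = \max_i \rank(V\super{i}) \leq 2^n$ (or, most simply, $d = 2^n$) via zero-padding. No quantitative estimates are needed, so this completes the argument.
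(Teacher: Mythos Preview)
Your proposal is correct and follows essentially the same approach as the paper: both arguments invoke the Gram-matrix characterization of positive semidefinite matrices to extract vectors $\ket{w\super{x,i}}$ with $V\super{i}_{x,y} = \ip{w\super{x,i}}{w\super{y,i}}$, then read off the two claimed identities directly from the dual constraints. Your version is slightly more explicit about the factorization $V\super{i} = B_i^\dagger B_i$ and the common embedding dimension~$d$, but the substance is identical.
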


\begin{proof}
  A necessary (and sufficient) condition for a matrix $V \in \C^{2^n \times 2^n}$ to be PSD is to be equal to a Gram matrix $V_{x,y} = \ip{w\super{x}}{w\super{y}}$ for some vectors $\ket{w\super{0}},\dots,\ket{w\super{2^n-1}} \in \C^d$ and integer~$d$. It is immediate to verify that using such a vector realization for the matrices $V\super{1},\dots,V\super{n}$ leads to the conclusion of the lemma.
\end{proof}

The dimension~$d$ of a vector solution can always be chosen as $2^n$ (for instance, by taking the column vectors of the Cholesky decompositions of the matrices $V\super{1},\dots,V\super{n}$). However, the smaller~$d$ is, the less memory the quantum algorithm will require.

The algorithm is going to determine the value of~$f(x)$ based on the outcome of a quantum phase estimation procedure. The estimated eigenphase will be close to~$0$ when~$f(x) = 1$ and at least~$1/(2T)$ when~$f(x) = 0$. The eigenphase gap between the two cases ensures that running phase estimation with a precision of~$\bo{1/T}$ is sufficient.

We now define the input unitary $R_x$ used in the phase estimation procedure. It is given by the product of two reflection operators acting on three registers, over a Hilbert space~$\Hil$ of dimension~$2nd+1$ (in particular, the algorithm is not memoryless). The first two registers are the index and value registers. The third register has~$d$ dimensions and is used to encode the given vector realization. The extra dimension allows us to define a special state~$\pss$, which will serve as the guiding state for phase estimation.

\begin{definition}[Unitaries $R_x$ associated with a vector realization]
  \label{Def:refl}
  Fix two integers $n,d$. For each $x \in \rn^n$, let~$\Hil_x$ be the Hilbert space of dimension $2nd+1$ defined as,
    \[\Hil_x = \spn\set[\big]{\ket{i,x_i} \otimes \ket{w} : i \in \ind, w \in \set{1,\dots,d}} \oplus \spn\set*{\ket{1,0} \otimes \ket{d+1}}.\]
  Let $\Hil = \sum_x \Hil_x$ be the Hilbert space spanned by the union of the sets. Define $\pss = \ket{1,0} \otimes \ket{d+1}$.

  Fix a vector realization $\ket{w\super{x,i}} \in \spn\set{\ket{1},\dots,\ket{d}}$ as described in \Cref{Lem:rep}. For each $x \in \rn^n$, let $\ket{t_x^+} \in \Hil_x$ and $\ket{t_x^-} \in \Hil_{\bar{x}}$ be the two (unnormalized) states defined as,
  \switchAMS{
    \begin{gather*}
      \ket{t_x^+} = \pss + \frac{1}{\sqrt{4T}} \sum_{i \in \ind} \ket{i,x_i} \otimes \ket{w\super{x,i}}, \\
      \ket{t_x^-} = \pss - \sqrt{4T} \sum_{i \in \ind} \ket{i,\br{x_i}} \otimes \ket{w\super{x,i}},
    \end{gather*}
  }{
    \[\ket{t_x^+} = \pss + \frac{1}{\sqrt{4T}} \sum_{i \in \ind} \ket{i,x_i} \otimes \ket{w\super{x,i}}
    \quad \text{and}\quad
    \ket{t_x^-} = \pss - \sqrt{4T} \sum_{i \in \ind} \ket{i,\br{x_i}} \otimes \ket{w\super{x,i}},\]
  }
  where $\br{x_i} = 1 - x_i$. 
  Finally, define the unitary~$R_x$ acting on~$\Hil$ as the product,
    \[R_x = \pt{2 \Pi_x - \id}\pt{2\Delta - \id}.\]
  where~$\Pi_x$ and $\Delta$ are the orthogonal projectors onto $\Hil_x$ and $\spn\set*{\ket{t_y^-} : f(y) = 1}$ respectively.
\end{definition}

Before analyzing the spectral properties of~$R_x$, note that it can be implemented using only two queries to a quantum oracle for~$x$.

\begin{lemma}[Query implementation of~$R_x$]
  \label{Lem:qimplem}
  There exist two unitary operators $U_0,U_1$ such that, for any~$x \in \rn^n$, the operation $R_x$ given in \Cref{Def:refl} can be decomposed as the product $R_x = \pt{\ora_x \otimes \id} U_1 \pt{\ora_x \otimes \id} U_0$, where $\ora_x$ is the quantum binary oracle to~$x$.
\end{lemma}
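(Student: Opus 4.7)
My plan is to factor $R_x = (2\Pi_x - \id)(2\Delta - \id)$ and argue that both reflections can be realized with at most one query each, glued by an $x$-independent unitary. Since $\Delta$ projects onto $\spn\set{\ket{t_y^-} : f(y)=1}$, a span determined entirely by the fixed vector realization and the preimage $f^{-1}(1)$, the reflection $2\Delta - \id$ is itself $x$-independent; so I would simply set $U_0 := 2\Delta - \id$, with no oracle call required.

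The heart of the argument is to realize $2\Pi_x - \id$ in the sandwiched form $(\ora_x \otimes \id)\,U_1\,(\ora_x \otimes \id)$ with $U_1$ independent of $x$. The key observation is that $\Hil_x$ is the image under $\ora_x \otimes \id$ of a fixed, $x$-independent ``template'' subspace. Concretely, I would define
\[\Hil_0 := \spn\set{\ket{i,0}\otimes\ket{w} : i \in \ind,\, w \in \set{1,\dots,d}} \,\oplus\, \spn\set{\pss}.\]
Because $(\ora_x \otimes \id)(\ket{i,0}\otimes\ket{w}) = \ket{i,x_i}\otimes\ket{w}$ for every $w \in \set{1,\dots,d}$, the oracle sends the ``data part'' of $\Hil_0$ bijectively onto the data part of $\Hil_x$. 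Letting $\Pi_0$ be the projector onto $\Hil_0$, this yields $\Pi_x = (\ora_x \otimes \id)\,\Pi_0\,(\ora_x \otimes \id)$, hence
\[2\Pi_x - \id = (\ora_x \otimes \id)\,(2\Pi_0 - \id)\,(\ora_x \otimes \id),\]
and so I would take $U_1 := 2\Pi_0 - \id$. Composing with $U_0$ then gives the claimed factorization $R_x = (\ora_x \otimes \id)\,U_1\,(\ora_x \otimes \id)\,U_0$.

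The main obstacle is the treatment of the guiding state $\pss = \ket{1,0}\otimes\ket{d+1}$: a literal application of $\ora_x \otimes \id$ would send it to $\ket{1,x_1}\otimes\ket{d+1}$, spoiling both the preservation of $\Hil$ and the $x$-independence of $U_1$. I would handle this by adopting the standard span-program convention that $\ora_x$ acts as the identity on the ``control'' subspace spanned by $\ket{d+1}$ in the third register; equivalently, $\ora_x \otimes \id$ is replaced by the controlled oracle conditioned on the third register lying in $\set{\ket{1},\dots,\ket{d}}$. This extension requires no additional oracle call (only a constant-size ancilla and control logic) and makes $\pss$ a fixed point of the (extended) $\ora_x \otimes \id$, so that $\Pi_x = (\ora_x \otimes \id)\,\Pi_0\,(\ora_x \otimes \id)$ holds exactly and the desired decomposition follows. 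Up to this bookkeeping, the rest of the proof is a one-line verification that $(2\Pi_0 - \id)$ and $(2\Delta - \id)$ are unitary and that their product with two oracle calls reproduces $R_x$.
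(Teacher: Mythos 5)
Your factorization and choice of $U_0 = 2\Delta - \id$ match the paper exactly, and you correctly pinpoint the obstacle: when $x_1 = 1$, the oracle maps $\pss$ to $\ket{1,1}\otimes\ket{d+1}$, which breaks the identity $\Pi_x = (\ora_x\otimes\id)\Pi_0(\ora_x\otimes\id)$ — your $2\Pi_0 - \id$ would put a spurious $-1$ on that intermediate state. The trouble is with your fix: replacing $\ora_x\otimes\id$ by a \emph{controlled} oracle proves a variant of the lemma, not the lemma as stated, which posits the unadorned binary oracle $\ora_x$. (Your side claim that a controlled binary oracle costs no extra query is also asserted rather than argued; it happens to be true here because of the XOR structure, but controlled-$U$ is not in general implementable from black-box $U$, so some care is needed before invoking it.)

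The paper resolves the $\pss$ issue by conditioning $U_1$ rather than the oracle: take $U_1$ to apply Pauli-$Z$ on the value register only when the third register lies in $\spn\set{\ket{1},\dots,\ket{d}}$, i.e.\ when it is not $\ket{d+1}$. With this $U_1$ the sandwich $(\ora_x\otimes\id)\,U_1\,(\ora_x\otimes\id)$ acts as $+1$ on $\ket{i,x_i}\otimes\ket{w}$ and as $-1$ on $\ket{i,\br{x_i}}\otimes\ket{w}$ for $w\le d$, and it fixes $\pss$ (since $U_1$ does nothing when the third register is $\ket{d+1}$, so the two oracle calls on the first two registers cancel). That is exactly $2\Pi_x - \id$ on every basis vector of $\Hil$, and no global operator identity of the form $\Pi_x = (\ora_x\otimes\id)\Pi_0(\ora_x\otimes\id)$ is required. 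If you substitute this conditioned Pauli-$Z$ for your $2\Pi_0 - \id$, your argument goes through with the unmodified oracle and literally proves the lemma as stated.
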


\begin{proof}
  The first unitary can be chosen as the reflection~$U_0 = 2\Delta - \id$, since the projector~$\Delta$ does not depend on~$x$. By definition, the second reflection~$2 \Pi_x - \id$ must flip the sign of the basis states~$\ket{i,\br{x_i}} \otimes \ket{w}$ when $w \neq d+1$, and keep the other states the same. Omitting the last register, it suffices to use the Pauli-Z gate to obtain $\ora_x (\id \otimes Z) \ora_x \ket{i,\br{x_i}} = - \ket{i,\br{x_i}}$ and $\ora_x (\id \otimes Z) \ora_x \ket{i,x_i} = \ket{i,x_i}$. The unitary~$U_1$ applies the same Pauli-Z transformation, but conditioned on the last register being not in the state $\ket{d+1}$.
\end{proof}

We now state the central result regarding the spectral properties of~$R_x$. If $f(x) = 1$ then it is shown that the state~$\pss$ lies close the eigenspaces of~$R_x$ with eigenphase~$0$. On the other hand, if~$f(x) = 0$ then the state~$\pss$ principally belongs to the eigenspaces with larger eigenphases~$\om{1/T}$.

\begin{proposition}[Phase gap]
  \label{Lem:epGap}
  Let $\Lambda_{x,\theta}$ be the projector onto the eigenspaces of $R_x$ with eigenvalues in the set $\set{e^{i\varphi} : \abs{\varphi} \leq \theta}$. Then, $\norm{\Lambda_{x,0} \pss}^2 \geq 3/4$ if $f(x) = 1$ and $\norm{\Lambda_{x,\frac{1}{3T}} \pss}^2 \leq 2/9$ if~$f(x) = 0$.
\end{proposition}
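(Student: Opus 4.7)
The plan is to analyze the product of reflections $R_x = (2\Pi_x - I)(2\Delta - I)$ using Jordan's lemma together with the Effective Spectral Gap Lemma (ESGL) of Lee, Mittal, Reichardt, {\v{S}}palek, and Szegedy. The starting point is the identity
\[\ip{t_x^+}{t_y^-} = 1 - \sum_{i : x_i \neq y_i} \ip{w^{x,i}}{w^{y,i}},\]
which by the dual constraint of \Cref{Lem:rep} vanishes whenever $f(x) \neq f(y)$. Combined with the definitions of $\Hil_x$ and $\Delta$, this yields two pivotal containments: \emph{(i)} if $f(x) = 1$, then $\ket{t_x^-} \in \mathrm{range}(\Delta)$ and $\Pi_x \ket{t_x^-} = \pss$; \emph{(ii)} if $f(x) = 0$, then $\ket{t_x^+} \in \mathrm{range}(\Pi_x) \cap \ker(\Delta)$.

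For the negative case, \emph{(ii)} makes $\ket{t_x^+}$ a $-1$ eigenvector of $R_x$, since $(2\Delta - I)\ket{t_x^+} = -\ket{t_x^+}$ and $(2\Pi_x - I)(-\ket{t_x^+}) = -\ket{t_x^+}$. Because $\Lambda_{x,\theta}$ annihilates $-1$ eigenvectors for every $\theta < \pi$, writing $\pss = \ket{t_x^+} - \ket{r}$ with $\ket{r} := (4T)^{-1/2}\sum_i \ket{i, x_i}\otimes\ket{w^{x,i}}$, whose norm is at most $1/2$ since $\sum_i \|\ket{w^{x,i}}\|^2 \leq T$, gives $\Lambda_{x, 1/(3T)}\pss = -\Lambda_{x, 1/(3T)}\ket{r}$. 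The plan is then to bound $\|\Lambda_{x, 1/(3T)}\ket{r}\|^2 \leq 2/9$ by splitting $\ket{r} = \Delta \ket{r} + (I - \Delta)\ket{r}$ and applying the ESGL bound $\|\Lambda_{x,\theta}\Pi_x(I - \Delta)\| \leq \theta/2$ to the $\ker(\Delta)$-component, together with a Pythagorean bound on $\|\Delta \ket{r}\|$ arising from $\|\ket{r}\|^2 \leq 1/4$. The normalization $1/\sqrt{4T}$ in the definition of $\ket{t_x^+}$ is exactly the one that calibrates the resulting constant to $2/9$ at $\theta = 1/(3T)$.

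For the positive case, \emph{(i)} and the identity $\Delta\ket{t_x^-} = \ket{t_x^-}$ give $R_x\ket{t_x^-} = (2\Pi_x - I)\ket{t_x^-} = 2\pss - \ket{t_x^-}$, that is $(R_x + I)\ket{t_x^-} = 2\pss$. Applying the $+1$-eigenspace projector $\Lambda_{x,0}$, which commutes with $R_x$ and acts as the identity on its range, yields the striking equality $\Lambda_{x,0}\pss = \Lambda_{x,0}\ket{t_x^-}$. The plan is then to lower bound $\|\Lambda_{x,0}\ket{t_x^-}\|^2$ via Jordan's lemma: since $\ket{t_x^-} \in \mathrm{range}(\Delta)$, its $+1$-eigenspace content lies in $\mathrm{range}(\Pi_x) \cap \mathrm{range}(\Delta)$, while the remainder lives either in 2-dimensional Jordan subspaces (with angle $\theta_j > 0$) or in the $-1$-eigenspace $\mathrm{range}(\Pi_x)^\perp \cap \mathrm{range}(\Delta)$. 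Tracking how the constraint $\Pi_x \ket{t_x^-} = \pss$ of unit norm distributes across these pieces pins at least $3/4$ of the mass of $\Lambda_{x,0}\ket{t_x^-}$ in the intersection, giving $\|\Lambda_{x,0}\pss\|^2 \geq 3/4$.

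The hard part will be the negative case: the naive triangle-inequality bound from $\|\pss - \ket{t_x^+}\| \leq 1/2$ only gives $\|\Lambda_{x, 1/(3T)}\pss\|^2 \leq 1/4$, which falls short of the target $2/9$; closing the gap demands the $\theta/2$ decay of the ESGL on the $\ker(\Delta)$-component of $\ket{r}$, combined with a tight control of $\|\Delta\ket{r}\|$ via the Pythagorean identity $\|\Delta\ket{r}\|^2 + \|(I-\Delta)\ket{r}\|^2 = \|\ket{r}\|^2 \leq 1/4$.
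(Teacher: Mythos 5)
Your inner-product identity $\ip{t_x^+}{t_y^-} = 1 - \sum_{i : x_i \neq y_i} \ip{w\super{x,i}}{w\super{y,i}}$ is correct, and containments~(i) and~(ii) follow faithfully from the paper's \emph{stated} definition of $\Delta$ -- but that definition appears to contain a misprint: $\Delta$ should be the projector onto $\spn\set{\ket{t_y^+} : f(y) = 1}$, not onto $\spn\set{\ket{t_y^-} : f(y) = 1}$. Under the stated $\Delta$ the proposition itself fails: already for $n = d = T = 1$, $f = \mathrm{id}$, $w\super{0,1} = w\super{1,1} = 1$, the operator $R_1$ has spectrum $\set{-1,\ -\tfrac{3}{5} \pm \tfrac{4i}{5}}$ with no $+1$-eigenvector at all, so $\norm{\Lambda_{1,0}\pss}^2 = 0$. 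The paper's own proof tacitly uses the corrected $\Delta$: it places $\ket{t_x^+}$ in $\mathrm{range}(\Delta)$ when $f(x) = 1$, and it deduces $\Delta\ket{t_x^-} = 0$ for $f(x) = 0$ from the vanishing of $\ip{t_y^+}{t_x^-}$, which is exactly the inner product you computed (whereas $\ip{t_y^-}{t_x^-}$ is \emph{not} controlled by the dual constraint). With the corrected $\Delta$ the two containments swap roles and the argument becomes short. For $f(x) = 1$, $\ket{t_x^+} \in \mathrm{range}(\Pi_x) \cap \mathrm{range}(\Delta)$ is a $+1$-eigenvector of $R_x$ with $\norm{\pss - \ket{t_x^+}} \leq 1/2$, so $\norm{\Lambda_{x,0}\pss}^2 = 1 - \norm{(\id - \Lambda_{x,0})(\pss - \ket{t_x^+})}^2 \geq 3/4$. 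For $f(x) = 0$, $\Delta\ket{t_x^-} = 0$ and $\Pi_x\ket{t_x^-} = \pss$, so a single application of the effective spectral gap lemma to $\ket{t_x^-}$ -- with no splitting -- gives $\norm{\Lambda_{x,1/(3T)}\pss}^2 \leq \tfrac{1}{36T^2}\norm{\ket{t_x^-}}^2 \leq \tfrac{1}{36T^2}(1 + 4T^2) \leq 2/9$. The asymmetric normalizations $1/\sqrt{4T}$ and $\sqrt{4T}$ in $\ket{t_x^\pm}$ are chosen precisely so that the short vector keeps $\pss$ close to the $+1$-eigenspace while the long one pays the $\theta/2$ decay of the spectral gap lemma.

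Even granting the stated $\Delta$, your plan does not close. In the negative case, the split $\ket{r} = \Delta\ket{r} + (\id - \Delta)\ket{r}$ combined with the spectral gap lemma gives only $\norm{\Lambda_{x,\theta}\ket{r}} \leq \norm{\Delta\ket{r}} + \tfrac{\theta}{2}\norm{(\id - \Delta)\ket{r}}$; the Pythagorean identity permits $\norm{\Delta\ket{r}} = 1/2$ with the other component vanishing, so the best you can extract is the naive $1/4$ bound that you yourself flag as insufficient, and nothing in your argument controls $\norm{\Delta\ket{r}}$ any further. In the positive case, the identity $\Lambda_{x,0}\pss = \Lambda_{x,0}\ket{t_x^-}$ is correct but is not a lower bound by itself: $\ket{t_x^-}$ has norm $\sqrt{1 + 4T\sum_i \norm{w\super{x,i}}^2}$, which can be large, and no structural constraint forces a constant fraction of it to lie in the $+1$-eigenspace (in the toy example above there is none). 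What the argument actually needs is a $+1$-eigenvector close to $\pss$, and that is exactly what $\ket{t_x^+}$ supplies once $\Delta$ is corrected.
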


\begin{proof}
  First, we consider the case~$f(x) = 1$. The state~$\ket{t_x^+}$ is both in the support of~$\Delta$ and~$\Pi_x$. Hence, $R_x \ket{t_x^+} = \ket{t_x^+}$, meaning that it is in the support of~$\Lambda_{x,0}$ as well. The lemma follows by observing that the distance with the state~$\pss$ is at most $\norm{\pss - \ket{t_x^+}}^2 = \frac{1}{4T} \norm{\sum_{i \in \ind} \ket{i,x_i} \otimes \ket{w\super{x,i}}}^2 = \frac{1}{4T} \sum_{i \in \ind} \norm{\ket{w\super{x,i}}}^2 \leq \frac{1}{4T} \cdot T \leq \frac{1}{4}$.

  Next, we consider the case~$f(x) = 0$. We let the reader verifies that the state~$\ket{t_x^-}$ is orthogonal to the support of~$\Delta$, i.e., $\Delta \ket{t_x^-} = 0$, and its projection onto the support of~$\Pi_x$ is $\pss = \Pi_x \ket{t_x^-}$ (the former equality makes use of the condition $\sum_{i : x_i \neq y_i} \ip{w\super{x,i}}{w\super{y,i}} = 1$ satisfied by the vector realization when $f(x) \neq f(y)$). The proof will conclude by using general results about operators that are a \emph{product of two reflections} (as is~$R_x$). If $\Pi$ and $\Delta$ were rank-one projectors, then the eigenvalues $e^{\pm i \theta}$ of the rotation $(2\Pi-\id)(2\Delta-\id)$ would be dictated by the angle~$\theta$ between the two projectors. Moreover, the norm of a state $\ket{t}$ orthogonal to $\Delta$ would decrease by a factor~$\sin(\theta/2) \leq \theta/2$ when projected on $\Pi$, as shown in \Cref{Fig:gap}.

  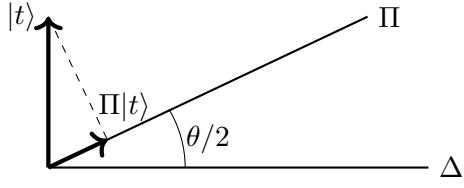
\begin{figure}
    \centering
    \begin{tikzpicture}
      \draw[thick] (0,0) -- (5,0) node[right] {$\Delta$};
      \draw[thick] (0,0) -- (4.2,2) node[right] {$\Pi$};

      \draw[ultra thick,->] (0,0) -- (0,2) node[left] {$\ket{t}$};
      \draw[ultra thick,->] (0,0) -- ($(0,0)!(0,2)!(4.2,2)$) ;
      \node at (1,0.8) {$\Pi \ket{t}$};
      \draw[dashed] (0,2) -- ($(0,0)!(0,2)!(4.2,2)$);

      \draw (1.8,0) arc[start angle=0,end angle=31,radius=1.5];
      \node at (2.1,0.38) {$\theta/2$};
    \end{tikzpicture}
    \caption{The orthogonal projection of the vector~$\ket{t}$ on~$\Pi$ has norm~$\norm{\Pi \ket{t}} = \sin(\theta/2) \norm{\ket{t}}$.}
    \label{Fig:gap}
  \end{figure}

  For higher-rank projectors, a similar argument applies to the norm of a state orthogonal to~$\Delta$ that is projected on the small-phase eigenspaces.

  \begin{lemma}[Effective spectral gap lemma \cite{LMR11c}]
    Let $\Pi$ and $\Delta$ be two projectors and set~$R = (2\Pi-\id)(2\Delta-\id)$. Let~$\Lambda_{\theta}$ be the projector onto the eigenspaces of~$R$ with eigenvalues in~$\set{e^{i\varphi} : \abs{\varphi} \leq \theta}$. Then, for any state~$\ket{t}$ with~$\Delta\ket{t} = 0$, we have~$\norm{\Lambda_{\theta} \Pi \ket{t}} \leq \frac{\theta}{2}\norm{\ket{t}}$.
  \end{lemma}

  By applying this lemma to the state $\ket{t_x^-}$, we conclude $\norm{\Lambda_{x,1/3T} \pss}^2 = \norm{\Lambda_{x,1/3T}  \allowbreak\Pi_x \ket{t_x^-}}^2 \leq \frac{1}{36T^2}\norm{\ket{t_x^-}}^2 = \frac{1}{36T^2}\pt{1 + 4T\sum_{i \in \ind} \norm{\ket{w\super{x,i}}}^2} \leq \frac{1}{36T^2} \pt{1 + 4T^2} \leq \frac{2}{9}$.
\end{proof}

The last ingredient of the construction is the well-known quantum phase estimation algorithm, which allows estimating (in superposition) the eigenphase of a guiding eigenvector.

\begin{lemma}[Phase estimation]
  \label{Lem:qpe}
  Let~$R$ be a unitary operator and~$\ket{\psi}$ be an eigenvector with eigenphase~$\varphi \in (-\pi,\pi]$, i.e., $R\ket{\psi} = e^{i\varphi} \ket{\psi}$. Given a precision parameter~$\eps \in (0,1)$, the quantum phase estimation algorithm implements a unitary $\mathrm{QPE}_R$ that uses the controlled-$R$ operation~$\bo{1/\eps}$ times and computes a superposition~$\mathrm{QPE}_R \pt{\ket{\psi}\otimes \ket{0}}= \ket{\psi}\otimes\pt{\sum_{\td{\varphi}} \alpha_{\td{\varphi}} \ket{\td{\varphi}}}$ of phase estimates~$\td{\varphi}$ such that the probability of measuring an estimate with error $\abs{\td{\varphi} - \varphi} < \eps$ is at least~$\sum_{\td{\varphi} : \abs{\td{\varphi} - \varphi} \leq \eps} \abs{\alpha_{\td{\varphi}}}^2 \geq 8/9$.
\end{lemma}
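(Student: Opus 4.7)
The plan is to construct $\mathrm{QPE}_R$ as the textbook quantum phase estimation circuit, and then do a direct amplitude calculation. First, I would introduce an ancilla register of $m = \lceil \log_2(c/\eps) \rceil$ qubits, for a sufficiently large constant $c$ to be fixed at the end. The unitary $\mathrm{QPE}_R$ consists of three stages: apply a Hadamard gate to each ancilla qubit; for each $k \in \set{0,\dots,m-1}$, apply the controlled-$R^{2^k}$ operation with the $k$-th ancilla qubit as the control; and finally apply the inverse quantum Fourier transform on the ancilla register. The query cost is $\sum_{k=0}^{m-1} 2^k < 2^m = \bo{1/\eps}$ controlled-$R$ operations, matching the claimed budget. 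Note that the eigenspaces of $R$ are preserved by every stage, so $\ket{\psi}$ factors out and the ancilla register alone determines the output distribution.

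Next, I would track the ancilla state on the input $\ket{\psi}\otimes\ket{0}$. Since $R\ket{\psi} = e^{i\varphi}\ket{\psi}$, the standard phase-kickback computation shows that the state just before the inverse QFT is $\ket{\psi}\otimes \frac{1}{\sqrt{2^m}} \sum_{j=0}^{2^m-1} e^{ij\varphi}\ket{j}$. Applying the inverse QFT yields exactly the form $\ket{\psi}\otimes \pt{\sum_{\td{\varphi}} \alpha_{\td{\varphi}} \ket{\td{\varphi}}}$ promised in the statement, where the outcomes are $\td{\varphi} = 2\pi k/2^m$ for $k \in \set{0,\dots,2^m-1}$ and the amplitudes are the truncated geometric sums $\alpha_{\td{\varphi}} = \frac{1}{2^m} \sum_{j=0}^{2^m-1} e^{ij(\varphi-\td{\varphi})}$, which I would evaluate in closed form.

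Then I would upper bound the failure probability. Using the elementary inequality $\abs{1-e^{i\alpha}} \geq 2\abs{\alpha}/\pi$ for $\abs{\alpha}\leq \pi$, the closed-form sum gives $\abs{\alpha_{\td{\varphi}}} \leq \frac{1}{2^m \abs{1-e^{i(\varphi-\td{\varphi})}}} = \bo{\tfrac{1}{2^m \abs{\varphi-\td{\varphi}}}}$ whenever $\td{\varphi}$ is not too close to $\varphi$. Partitioning the bad outcomes $\set{\td{\varphi} : \abs{\td{\varphi}-\varphi} > \eps}$ by distance to $\varphi$ in geometric shells and summing the squared bound produces a telescoping/geometric-like tail of total mass $\bo{1/(2^m \eps)} = \bo{1/c}$. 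Picking $c$ sufficiently large then forces this mass to be at most $1/9$, which after complementing yields the claimed $\sum_{\td{\varphi}:\abs{\td{\varphi}-\varphi}\leq \eps} \abs{\alpha_{\td{\varphi}}}^2 \geq 8/9$.

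The main obstacle is the quantitative tail bound: $\varphi$ is an arbitrary real in $(-\pi,\pi]$, so it generally fails to coincide with any dyadic phase $2\pi k/2^m$, and one must carefully handle outcomes just outside the $\eps$-window (where the crude $1/\abs{\varphi-\td{\varphi}}^2$ estimate is weakest) separately from those far away, ensuring that the full sum over $\bo{2^m}$ outcomes still collapses to $\bo{1/(2^m \eps)}$. This is precisely the classical Nielsen--Chuang analysis of phase estimation, which I would adapt to extract the explicit $8/9$ constant.
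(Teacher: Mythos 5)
The paper does not prove this lemma; it is invoked as the well-known quantum phase estimation algorithm (the preceding sentence says exactly that), and no constant-extraction is attempted in the text. So there is no ``paper proof'' to compare against --- you supplied the proof from scratch, and what you have is the standard Nielsen--Chuang analysis, which is the right tool.

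Your outline is correct. One small slip: $\abs{\alpha_{\td{\varphi}}}$ equals $\frac{1}{2^m} \cdot \frac{\abs{1 - e^{i2^m(\varphi - \td{\varphi})}}}{\abs{1 - e^{i(\varphi - \td{\varphi})}}}$, so after bounding the numerator by $2$ you get $\abs{\alpha_{\td{\varphi}}} \leq \frac{2}{2^m\abs{1 - e^{i(\varphi - \td{\varphi})}}}$, not $\frac{1}{2^m\abs{1 - e^{i(\varphi - \td{\varphi})}}}$; the factor of $2$ is absorbed into your $O(\cdot)$ so nothing downstream breaks. Also, since $\varphi, \td{\varphi} \in (-\pi,\pi]$, you should interpret $\varphi - \td{\varphi}$ modulo $2\pi$ before applying $\abs{1-e^{i\alpha}} \geq 2\abs{\alpha}/\pi$, and the QFT outcomes $2\pi k/2^m$ with $k > 2^{m-1}$ should be relabeled as $2\pi k/2^m - 2\pi$ to land in $(-\pi,\pi]$.

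Your final ``obstacle'' paragraph over-worries about the outcomes just outside the $\eps$-window. Since the outcomes are spaced $\delta = 2\pi/2^m$ apart, the tail sum on each side is bounded by $\frac{\pi^2}{4^m}\sum_{j\geq 0}\frac{1}{(\eps + j\delta)^2} \leq \frac{\pi^2}{4^m}\pt[\big]{\frac{1}{\eps^2} + \frac{1}{\delta\eps}}$. With $2^m \approx c/\eps$ this is $\frac{\pi^2}{c^2} + \frac{\pi}{2c}$ per side; the first (near-window) term is $O(1/c^2)$ and is actually \emph{dominated} by the $O(1/c)$ far-tail term, so no special handling is needed. Taking $c \approx 36$ already pushes the total below $1/9$, giving the stated $8/9$ bound. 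With these small repairs your proof goes through.
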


We now state the main theorem about converting the vector realization of a dual solution into a quantum algorithm.

\begin{theorem}[Dual algorithm]
  \label{Thm:dual}
  Let $\rf$ be a Boolean function and $\set{w\super{x,i}}_{x \in \rn^n, i \in \ind}$ be a set of complex vectors $w\super{x,i} \in \C^d$, for some integer~$d$, satisfying the condition
    \[\sum_{i : x_i \neq y_i} \ip{w\super{x,i}}{w\super{y,i}} = 1 \quad \text{for all $x,y$ such that $f(x) \neq f(y)$.}\]
  Then, there exists a quantum algorithm that computes~$f$ with query complexity,
    \[T = \bo[\bigg]{\max_{x \in \rn^n}  \sum_{i \in \ind} \norm{w\super{x,i}}^2}.\]
  Moreover, there exists at least one such set of vectors that satisfies~$T = \ta{\adv(f)}$.
\end{theorem}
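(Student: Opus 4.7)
The plan is to use the unitary $R_x$ from \Cref{Def:refl} as the input to a quantum phase estimation procedure (\Cref{Lem:qpe}) applied to the guiding state $\pss$, and to decide whether $f(x)=0$ or $f(x)=1$ by testing whether the resulting phase estimate is close to zero. The eigenphase gap established in \Cref{Lem:epGap} supplies the separation needed for this test. Concretely, the algorithm sets the precision parameter $\eps = 1/(6T)$, runs QPE on $R_x$ with input $\pss \otimes \ket{0}$ to obtain an estimate $\td\varphi$, and outputs~$1$ if $\abs{\td\varphi} \leq \eps$ and~$0$ otherwise. By \Cref{Lem:qimplem}, each controlled application of $R_x$ can be implemented with two calls to $\ora_x$, so the total query cost is $\bo{1/\eps} = \bo{T}$, as claimed.

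To verify correctness, I would decompose $\pss$ in the eigenbasis of $R_x$. If $f(x)=1$, then \Cref{Lem:epGap} gives $\norm{\Lambda_{x,0}\pss}^2 \geq 3/4$; on each $\varphi=0$ eigencomponent, the guarantee of \Cref{Lem:qpe} returns $\abs{\td\varphi}\leq \eps$ with probability at least $8/9$, so the algorithm outputs~$1$ with probability at least $3/4 \cdot 8/9 = 2/3$. If $f(x)=0$, then $\norm{\Lambda_{x,1/(3T)}\pss}^2 \leq 2/9$, and the remaining eigencomponents have $\abs{\varphi} > 1/(3T) = 2\eps$, so they can yield $\abs{\td\varphi}\leq\eps$ only through an estimation error exceeding $\eps$, which occurs with probability at most $1/9$. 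The total probability of outputting~$1$ is then at most $2/9 + 7/9 \cdot 1/9 < 1/3$, and hence the algorithm outputs~$0$ with probability at least $2/3$.

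For the second claim, strong duality in \Cref{Prop:dual} guarantees that the dual SDP attains the value $\adv(f)$. Invoking \Cref{Lem:rep} on an optimal dual solution produces a vector realization with $\max_{x} \sum_{i} \norm{w\super{x,i}}^2 = \adv(f)$, and applying the construction above with this set of vectors yields a quantum algorithm computing $f$ with $\bo{\adv(f)}$ queries. Combined with the lower bound from \Cref{Cor:adversary}, this recovers $Q(f) = \Theta(\adv(f))$. The main technical content has been absorbed into \Cref{Lem:epGap} via the effective spectral gap lemma; the only delicate remaining calibration is the choice of $\eps$, which must be small enough that the $1/(3T)$ phase gap lies strictly outside the resolution window of the QPE estimator, yet large enough to keep the total query cost $\bo{T}$.
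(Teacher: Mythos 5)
Your proof is correct and takes exactly the paper's route: phase estimation on $R_x$ with guiding state $\pss$ and precision $1/(6T)$, correctness via \Cref{Lem:epGap} and \Cref{Lem:qpe}, query cost via \Cref{Lem:qimplem} (two queries per controlled-$R_x$), and the optimal $T = \Theta(\adv(f))$ from an optimal dual solution via \Cref{Lem:rep} and \Cref{Prop:dual}. The only discrepancy is that the paper's own write-up of the correctness step swaps the $f(x)=0$ and $f(x)=1$ cases relative to \Cref{Lem:epGap} as stated (it asserts, e.g., $\norm{\Lambda_{x,0}\pss}^2 \ge 3/4$ for $f(x)=0$); your labeling is the one consistent with that lemma, so your account is internally coherent where the paper's has a sign-of-the-output typo.
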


\begin{proof}
  The algorithm simply consists of using phase estimation (\Cref{Lem:qpe}) on the unitary~$R_x$ and guiding state~$\pss$ (\Cref{Def:refl}) with precision~$\eps = 1/(6T)$, and measuring a value~$\td{\varphi}$ in the phase estimate register. If $\abs{\td{\varphi}} \leq 1/(6T)$ then it outputs~$1$, otherwise it outputs~$0$.

  The query complexity is~$\bo{T}$ by \Cref{Lem:qimplem,Lem:qpe}. It remains to show that the output is equal to~$f(x)$ with probability at least~$2/3$.

  If~$f(x) = 0$, the probability of measuring~$\abs{\td{\varphi}} \leq 1/(6T)$ is at least $\norm{\Lambda_{x,0} \pss}^2 \cdot 8/9 \geq 3/4 \cdot 8/9 = 2/3$, which is the squared norm of the component in $\pss$ with eigenphase~$0$ (\Cref{Lem:epGap}) multiplied by the probability of measuring an estimate less than $1/(6T)$ for such a component (\Cref{Lem:qpe}).

  If~$f(x) = 1$, the probability of measuring~$\abs{\td{\varphi}} \leq 1/(6T)$ is at most $\norm{\Lambda_{x,\frac{1}{3T}} \pss}^2 + 1/9 \leq 2/9 + 1/9 = 1/3$, which is the squared norm of the component in $\pss$ with eigenphase at most~$1/(3T)$ (\Cref{Lem:epGap}), added to the probability of measuring an estimate less than~$1/(6T)$ for the component with eigenphase at least~$1/(3T)$ (\Cref{Lem:qpe}).

  Finally, the algorithm can be made to work with the optimal value~$T = \ta{\adv(f)}$ by using a vector realization (\Cref{Lem:rep}) for an optimal solution to the dual from \Cref{Prop:dual}.
\end{proof}

Together with \Cref{Cor:adversary}, this result shows that the adversary value is a tight characterization of the quantum query complexity, up to constant factors.

\begin{corollary}
  \label{Cor:adversaryTight}
  There exist two universal constant $c_1 < c_2$ such that the quantum query complexity of any function $\rf$ satisfies $c_1 \adv(f) \leq Q(f) \leq c_2 \adv(f)$.
\end{corollary}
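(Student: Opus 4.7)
The plan is to observe that Corollary \ref{Cor:adversaryTight} is essentially a packaging of two results already proved in the document, one for each direction, so the proposal is to combine them rather than reprove anything substantive.

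For the lower bound direction, I would directly invoke Corollary \ref{Cor:adversary}, which states $Q(f) \geq \adv(f)/36$ for every Boolean function $\rf$. Setting $c_1 = 1/36$ gives the left-hand inequality immediately. No further work is required here because the adversary method of \Cref{Thm:adversary} has already been shown to apply to every feasible adversary matrix, and the adversary value is by definition the supremum of the resulting ratio.

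For the upper bound direction, the plan is to apply Theorem \ref{Thm:dual}. By \Cref{Prop:dual}, the dual semidefinite program attains the value $\adv(f)$ thanks to strong duality, so there is a feasible dual solution $V\super{1},\dots,V\super{n}$ with $\max_x \sum_i V\super{i}_{x,x} = \adv(f)$. \Cref{Lem:rep} converts this into a vector realization $\set{\ket{w\super{x,i}}}$ whose parameter $T$ satisfies $T = \adv(f)$. Feeding this into \Cref{Thm:dual} produces a bounded-error quantum algorithm for $f$ making $\bo{\adv(f)}$ queries, and by \Cref{Def:qc} this yields $Q(f) \leq c_2 \adv(f)$ for the absolute constant $c_2$ hidden in the big-$O$.

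Combining the two bounds gives the claimed sandwich $c_1 \adv(f) \leq Q(f) \leq c_2 \adv(f)$ with universal constants $c_1 < c_2$ (taking $c_1 = 1/36$ and $c_2$ as large as needed from the construction of \Cref{Thm:dual}; if necessary, inflate $c_2$ so that $c_1 < c_2$ holds trivially). Since both supporting statements are already proved in the excerpt, there is no genuine obstacle in this corollary; the only mildly delicate point worth flagging in the write-up is that strong duality in \Cref{Prop:dual} is precisely what lets us match the two constants against the same quantity $\adv(f)$, rather than having to compare the primal optimum of one program with the dual of a different one.
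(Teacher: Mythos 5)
Your proposal is correct and matches the paper's (implicit) proof exactly: the lower bound is \Cref{Cor:adversary} and the upper bound is the last sentence of \Cref{Thm:dual}, which already packages the chain through \Cref{Prop:dual} and \Cref{Lem:rep} that you spell out. Nothing further is needed.
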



\subsection{Applications}

The dual of the adversary method has proven surprisingly useful in the design and analysis of new quantum algorithms. Several algorithmic frameworks have been developed based on the result established in \Cref{Thm:dual}, such as span programs~\cite{Rei09c}, learning graphs~\cite{Bel12c}, and, more recently, transducers~\cite{BJY24j}. These frameworks can simplify the search for dual solutions and the study of their properties.

In this section, we explicitly design the vector realizations of dual solutions that lead to optimal quantum algorithms for the \orf\ and \conf\ functions. We also use the composition property of the adversary value for solving the \aotf\ problem optimally.


\fakeparagraph{Application 1: The \orf\ function.}
We describe an optimal solution to the dual adversary proving that the lower bound~$Q(\orf) = \om{\sqrt{n}}$ is indeed optimal. We proceed by exhibiting a vector realization $w^{(x, i)}$ satisfying the conditions stated in \Cref{Lem:rep} with value $\max_{x \in \rn^n} \sum_{i \in \ind} \norm{w\super{x,i}}^2 = \bo{\sqrt{n}}$.

For convenience in the proof, we first establish a general result stating that the value of a vector realization can be balanced between the $0$-inputs and $1$-inputs, such that the maximum of $\max_{x \in \rn^n} \sum_{i \in \ind} \norm{w\super{x,i}}^2$ can be replaced with the geometric mean of the maximum evaluated on the $0$-inputs and $1$-inputs separately.

\begin{lemma}[Balanced vector realization]
  \label{Lem:rebalance}
  Suppose that $\set{\ket{w\super{x,i}}}$ is a vector realization of a solution with value $T = \max_{x \in \rn^n} \sum_{i \in \ind} \norm{w\super{x,i}}^2$, as described in \Cref{Lem:rep}. Define,
    \[T_0 = \max_{x : f(x) = 0} \sum_{i \in \ind} \norm{w\super{x,i}}^2 \quad \text{and} \quad T_1 = \max_{x : f(x) = 1} \sum_{i \in \ind} \norm{w\super{x,i}}^2.\]
  Then there exists a vector realization of a solution with value $\sqrt{T_0T_1}$.
\end{lemma}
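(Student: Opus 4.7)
The plan is to rescale the vectors by a factor that depends only on whether the input is a $0$-input or a $1$-input of $f$, choosing the scaling so that the constraint from \Cref{Lem:rep} is preserved while the objective is balanced between the two sides.

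More concretely, I would fix two positive real parameters $\alpha, \beta > 0$ with $\alpha\beta = 1$, and define the new vectors by $\ket{\td{w}\super{x,i}} = \alpha \ket{w\super{x,i}}$ whenever $f(x) = 0$ and $\ket{\td{w}\super{x,i}} = \beta \ket{w\super{x,i}}$ whenever $f(x) = 1$. The feasibility condition from \Cref{Lem:rep} only needs to be checked for pairs $x,y$ with $f(x) \neq f(y)$, and for any such pair one of the two vectors is scaled by $\alpha$ and the other by $\beta$, so
\[\sum_{i : x_i \neq y_i} \ip{\td{w}\super{x,i}}{\td{w}\super{y,i}} = \alpha\beta \sum_{i : x_i \neq y_i} \ip{w\super{x,i}}{w\super{y,i}} = 1,\]
by the hypothesis on $\ket{w\super{x,i}}$ and the normalization $\alpha\beta = 1$. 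Hence $\set{\ket{\td{w}\super{x,i}}}$ is again a valid vector realization.

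The value of this new realization is $\max(\alpha^2 T_0, \beta^2 T_1)$, since the sum of squared norms is multiplied by $\alpha^2$ on $0$-inputs and by $\beta^2$ on $1$-inputs. It remains to optimize over $\alpha,\beta$ under the constraint $\alpha\beta = 1$. Setting $\alpha^2 T_0 = \beta^2 T_1$ and combining with $\alpha\beta = 1$ yields $\beta^2 = \sqrt{T_0/T_1}$ and $\alpha^2 = \sqrt{T_1/T_0}$, so that $\alpha^2 T_0 = \beta^2 T_1 = \sqrt{T_0 T_1}$, which gives a vector realization of value $\sqrt{T_0 T_1}$ as required.

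There is no real obstacle here; the only things to check are that the scaling preserves the linear constraint (which it does because exactly one factor of $\alpha$ and one of $\beta$ appear in each inner product that needs to equal $1$) and that the $\max$ is minimized by the balanced choice. One minor caveat is handling the degenerate cases $T_0 = 0$ or $T_1 = 0$: in those cases the formula still makes sense as $\sqrt{T_0 T_1} = 0$, and one can take $\alpha \to \infty$ (resp. $\beta \to \infty$) in a limiting argument, or simply note that if $T_0 = 0$ then all $\ket{w\super{x,i}}$ vanish on $0$-inputs, which forces the feasibility condition to be vacuous and the realization trivial.
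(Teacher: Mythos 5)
Your proof is correct and matches the paper's argument exactly: the paper also rescales by $(T_1/T_0)^{1/4}$ on $0$-inputs and $(T_0/T_1)^{1/4}$ on $1$-inputs (i.e., your $\alpha,\beta$ with $\alpha\beta=1$), checks the constraint is preserved, and observes the new value is $\sqrt{T_0T_1}$. Your remark on the degenerate case is a harmless addition; in fact if $f$ is non-constant then feasibility already forces $T_0,T_1>0$, so no limiting argument is needed.
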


\begin{proof}
  It suffices to define the vectors $v^{(x, i)} = (T_1/T_0)^{1/4}\cdot w^{(x, i)}$ when $f(x) = 0$ and $v^{(x, i)} = (T_0/T_1)^{1/4}\cdot w^{(x, i)}$ when $f(x) = 1$. When $f(x) \neq f(y)$, the condition $\sum_{i : x_i \neq y_i} \ip{v\super{x,i}}{v\super{y,i}} = 1$ is satisfied since the factors are cancelling out: $(T_1/T_0)^{1/4} (T_0/T_1)^{1/4} = 1$. This new solution has value $\max_{x \in \rn^n} \sum_i \norm{v\super{x,i}}^2 \leq \max\set{\sqrt{T_1/T_0} \cdot T_0, \sqrt{T_0/T_1} \cdot T_1} = \sqrt{T_0T_1}$.
\end{proof}

We now describe the algorithm for \orf.

\begin{proposition}[Dual algorithm applied to \orf]
  \label{Prop:orUB}
  There exists a quantum algorithm for the \orf\ function with query complexity~$Q(\orf) = \bo{\sqrt{n}}$.
\end{proposition}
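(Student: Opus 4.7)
The plan is to invoke \Cref{Thm:dual} by exhibiting an explicit vector realization of a dual adversary solution for \orf\ whose value is $\bo{\sqrt{n}}$. Since the only $0$-input of \orf\ is $x = 0^n$, the feasibility condition $\sum_{i : x_i \neq y_i} \ip{w\super{x,i}}{w\super{y,i}} = 1$ only needs to be checked between $x = 0^n$ and each nonzero $y \in \rn^n$, where it simplifies to $\sum_{i : y_i = 1} \ip{w\super{0^n,i}}{w\super{y,i}} = 1$.

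Motivated by this, I would take $d = 1$ (scalar vectors) and define, for each $i \in \ind$:
\[w\super{0^n,i} = 1,\qquad w\super{y,i} = \begin{cases} 1/\abs{y} & \text{if } y_i = 1, \\ 0 & \text{otherwise,}\end{cases}\]
for every nonzero $y \in \rn^n$, where $\abs{y}$ is the Hamming weight of~$y$. Checking feasibility is then immediate: for any nonzero $y$ we have $\sum_{i : y_i = 1} 1 \cdot (1/\abs{y}) = \abs{y}/\abs{y} = 1$.

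Next I would compute the one-sided costs of this realization. On the $0$-input, $\sum_{i \in \ind} \abs{w\super{0^n,i}}^2 = n$, so $T_0 = n$. On any nonzero $y$, only the $\abs{y}$ indices with $y_i = 1$ contribute, giving $\sum_{i \in \ind} \abs{w\super{y,i}}^2 = \abs{y} \cdot 1/\abs{y}^2 = 1/\abs{y} \leq 1$, so $T_1 \leq 1$. Applying the balancing step from \Cref{Lem:rebalance} yields a vector realization whose value is at most $\sqrt{T_0 T_1} \leq \sqrt{n}$, and plugging this into \Cref{Thm:dual} produces a quantum algorithm for \orf\ with query complexity $\bo{\sqrt{n}}$.

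There is really no main obstacle here: the only creative step is the choice of weights that make the $0$-input and $1$-inputs pay in opposite directions (a single ``$1$'' spread over $n$ coordinates on the $0$-side, versus weights of order $1/\abs{y}$ concentrated on the support of~$y$ on the $1$-side), so that after rebalancing the costs meet at $\sqrt{n}$. All verifications reduce to the identity $\abs{y} \cdot (1/\abs{y}) = 1$.
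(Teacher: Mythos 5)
Your proof is correct and follows essentially the same approach as the paper: a scalar ($d=1$) vector realization with $T_0 = n$ and $T_1 \leq 1$, followed by the rebalancing lemma and \Cref{Thm:dual}. The only difference is cosmetic — the paper places weight $1$ on the first index $i$ where $y_i = 1$, whereas you spread weight $1/\abs{y}$ uniformly over the support of $y$; both choices satisfy the feasibility condition and give $T_1 \leq 1$, so the resulting bound is identical.
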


\begin{proof}
  We describe a feasible vector realization with dimension~$d = 1$ (i.e., the vectors are scalar numbers). For each~$x \in \rn^n$ and~$i \in \ind$, set:
  \[w\super{x,i} =
      \left\{\begin{array}{c@{\hspace{4mm}}l}
         1 & \text{if $x$ is the all-$0$ input,} \\ [1mm]
         1 & \text{if $x_i = 1$ and $x_j = 0$ for all $j < i$,} \\[1mm]
         0 & \text{otherwise.}
       \end{array}
     \right.\]
  The condition $\sum_{i : x_i \neq y_i} \ip{w\super{x,i}}{w\super{y,i}} = 1$ is easily verified when~$\orf(x) \neq \orf(y)$, hence it is a valid vector realization.

  The all-$0$ input gives the value of $T_0 = \max_{x : f(x) = 0} \sum_{i \in \ind} \norm{w\super{x,i}}^2 = n$, and the other inputs gives the value of $T_1 = \max_{x : f(x) = 1} \sum_{i \in \ind} \norm{w\super{x,i}}^2 = 1$. Hence, by rebalancing the vector realization using \Cref{Lem:rebalance} (which amounts to multiplying $w\super{x,i}$ with $1/n^{1/4}$ when $x$ is the all-$0$ input, and $n^{1/4}$ otherwise) and applying \Cref{Thm:dual}, we obtain a quantum algorithm with query complexity~$\bo{\sqrt{T_0T_1}} = \bo{\sqrt{n}}$.
\end{proof}


\fakeparagraph{Application 2: \aotf.}
A \emph{read-once formula} is a Boolean function that can be represented as the evaluation of a rooted tree whose nodes are labeled with \orf, \andf\ and \notf\ gates, and where each input bit appears exactly once in the leaves. The composition property of the adversary value (\Cref{Prop:composition}) and \Cref{Thm:dual} enable the design of optimal quantum algorithms for such functions, achieving a query complexity of~$\ta{\sqrt{n}}$~\cite{Rei11c}. We outline the argument for the balanced \aotf, a depth-$2$ read-once formula with an \andf\ gate at the root and \orf\ gates at the first level (each with input size $\sqrt{n}$).

\begin{proposition}[Dual algorithm applied to \aotf]
  \label{Prop:aot}
  The quantum query complexity of the balanced \aotf\ function is~$Q(\andf \bullet \orf) = \ta{\sqrt{n}}$.
\end{proposition}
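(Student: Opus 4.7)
The plan is to derive the proposition from three ingredients already available in the document: the multiplicativity of the adversary value under composition (Proposition \ref{Prop:composition}), the tight characterization $Q(f) = \ta{\adv(f)}$ from Corollary \ref{Cor:adversaryTight}, and the value $\adv(\orf_m) = \ta{\sqrt m}$ on $m$-bit inputs. Since $\andf \bullet \orf$ is defined over $n = m^2$ bits, the chain $Q(\andf \bullet \orf) = \ta{\adv(\andf \bullet \orf)} = \ta{\adv(\andf) \cdot \adv(\orf)} = \ta{\sqrt m \cdot \sqrt m} = \ta{m} = \ta{\sqrt n}$ is then immediate.

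First, I would pin down $\adv(\orf_m) = \ta{\sqrt m}$. The lower bound $\adv(\orf_m) \geq \sqrt m$ is already present inside the proof of Proposition \ref{Prop:advOR}, which exhibits an adversary matrix with $\norm{\Gamma} = \sqrt m$ and $\max_i \norm{\Gamma_i} = 1$. The matching upper bound follows from Proposition \ref{Prop:orUB}, which constructs a quantum algorithm with $Q(\orf_m) = \bo{\sqrt m}$, combined with the inequality $\adv(f) \leq Q(f)/c_1$ from Corollary \ref{Cor:adversaryTight}.

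Next, I would observe that $\adv(\andf_m) = \adv(\orf_m)$. This is by invariance of the adversary value under input relabelling: since $\andf_m(x) = 1 - \orf_m(1 - x_1, \ldots, 1 - x_m)$, any adversary matrix $\Gamma$ for $\orf_m$ yields an adversary matrix for $\andf_m$ by permuting rows and columns according to the involution $x \mapsto \br{x}$. This permutation preserves the spectral norms of both $\Gamma$ and each $\Gamma_i$, since the puncturing condition $x_i = y_i$ is stable under the simultaneous flip $x \mapsto \br{x}$, $y \mapsto \br{y}$. Hence $\adv(\andf_m) = \ta{\sqrt m}$ as well, and the two factors multiply to give $\adv(\andf \bullet \orf) = \ta{m}$.

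The main conceptual step --- and the one doing essentially all the work --- is the invocation of Proposition \ref{Prop:composition}. I would use that composition theorem as a black box; its proof (not given in the excerpt) is where the real technical content lies, as it requires simultaneously combining optimal primal adversary matrices for the lower bound $\adv(f \bullet g) \geq \adv(f)\adv(g)$ and optimal dual vector realizations for the matching upper bound. Once Proposition \ref{Prop:composition} is granted, together with Corollary \ref{Cor:adversaryTight} to pass between $\adv$ and $Q$, the remainder of the argument is purely organizational, and in particular the $\sqrt n$ speedup over the trivial $\om{n}$ randomized bound emerges automatically, without needing to design a recursive Grover-style algorithm by hand.
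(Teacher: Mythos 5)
Your proof is correct and follows essentially the same route as the paper: pass between $Q$ and $\adv$ via Corollary~\ref{Cor:adversaryTight}, factor $\adv(\andf \bullet \orf) = \adv(\andf)\adv(\orf)$ via Proposition~\ref{Prop:composition}, and plug in the $\ta{\sqrt m}$ bound for each component obtained from Propositions~\ref{Prop:advOR} and~\ref{Prop:orUB}. Your explicit argument that $\adv(\andf) = \adv(\orf)$ via the $x \mapsto \br{x}$ involution on adversary matrices just spells out the step the paper dismisses as ``easy to see,'' and is a slightly cleaner way to do it (working directly at the level of $\adv$ rather than $Q$).
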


\begin{proof}
  The query complexities of the \orf\ and \andf\ functions over~$m = \sqrt{n}$ bits are $Q(\orf) = Q(\andf) = \ta{\sqrt{m}}$, as established, for instance, in \Cref{Prop:advOR,Prop:orUB} (it is easy to see that \orf\ and \andf\ must have exactly the same query complexity). By the composition property of the adversary value (\Cref{Prop:composition}) and its characterization of quantum query complexity (\Cref{Cor:adversaryTight}), we deduce that the complexity of the \aotf\ function is~$\ta{\sqrt{m} \times \sqrt{m}} = \ta{\sqrt{n}}$.
\end{proof}

While the lower bound may not seem particularly surprising here, the upper bound falls below the natural complexity~$\bo{\sqrt{n} \log n}$, which would be achieved using standard error reduction techniques when composing bounded-error algorithms. This errorless composition property is a striking feature of quantum query algorithms, allowing arbitrarily many levels of composition without any drift in the query complexity.


\fakeparagraph{Application 3: The \conf\ function.}
We return to the \conf\ problem, for which a lower bound of~$\om{n^{3/2}}$ was established in \Cref{Prop:conLB}. We complement this result with a matching upper bound due to Belovs and Reichardt~\cite{BR12c}, obtained by exhibiting the vector realization of an optimal solution to the dual adversary. These vectors will have a very compact description of dimension~$d = n^2$, leading to a quantum algorithm with only~$\bo{\log n}$ qubits of memory. This uses exponentially less space that an older quantum algorithm for \conf~\cite{DHHM06j} that requires $\bo{n \log n}$ memory.

\begin{proposition}[Dual algorithm applied to \conf]
  \label{Prop:conUB}
  \switchAMS{
  There exists a quantum algorithm for \conf\ with query complexity~$Q(\conf) = \bo{n^{3/2}}$.
  }{
  There exists a quantum algorithm for the \conf\ function with query complexity~$Q(\conf) = \bo{n^{3/2}}$.
  }
\end{proposition}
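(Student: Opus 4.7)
The plan is to apply \Cref{Thm:dual} by exhibiting a vector realization of a dual adversary solution with value $O(n^{3/2})$. Following the electrical-flow paradigm, the vectors $w^{(x,e)} \in \mathbb{C}^{n-1}$ will be indexed by the target vertices $v \in \{2,\dots,n\}$, with vertex $1$ playing the role of a distinguished source.

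For a connected input $x$, pick for each $v \neq 1$ a unit $(1,v)$-flow $\theta^{(x)}_v$ supported on the edges of $x$ (for instance, the signed indicator of a shortest $(1,v)$-path in $x$, oriented from $1$ to $v$), and set $w^{(x,e)}_v = \theta^{(x)}_v(e)$. For a disconnected input $y$, let $C_y$ denote the connected component of $y$ containing vertex $1$, let $\bar C_y = \{2,\dots,n\} \setminus C_y$, and let $\phi_y$ be the indicator function of $C_y$. For $v \in \bar C_y$ and an edge $e=\{i,j\} \notin y$, set $w^{(y,e)}_v = (\phi_y(i)-\phi_y(j))/|\bar C_y|$; set $w^{(y,e)}_v = 0$ in all other cases.

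The dual constraint of \Cref{Lem:rep} follows from the classical flow/potential duality. Since no edge of $y$ crosses the cut $(C_y,\bar C_y)$, the edges contributing to the inner product are exactly those in $x$ that cross the cut. For fixed $v \in \bar C_y$, Kirchhoff's conservation law applied to the unit flow $\theta^{(x)}_v$ gives
\[\sum_{e=\{i,j\}} \theta^{(x)}_v(e)\bigl(\phi_y(i)-\phi_y(j)\bigr) = \phi_y(1)-\phi_y(v) = 1,\]
and summing over the $|\bar C_y|$ nonzero coordinates and dividing by $|\bar C_y|$ produces exactly~$1$. For the witness sizes, a shortest-path unit flow has energy at most $n-1$, so $\sum_e \|w^{(x,e)}\|^2 = \sum_v \mathcal{E}(\theta^{(x)}_v) \leq (n-1)^2$ for every connected $x$. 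Conversely $\|w^{(y,e)}\|^2 \leq 1/|\bar C_y|$ with at most $|C_y|\cdot|\bar C_y|$ cut edges, hence $\sum_e \|w^{(y,e)}\|^2 \leq |C_y| \leq n-1$ for every disconnected $y$. Balancing these two quantities via \Cref{Lem:rebalance} and invoking \Cref{Thm:dual} yields a quantum algorithm of query complexity $O\bigl(\sqrt{(n-1)^2 (n-1)}\bigr) = O(n^{3/2})$, matching the lower bound of \Cref{Prop:conLB}.

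The main conceptual ingredient is the flow/cut duality, which forces the dual inner-product identity to hold term by term. The main technical subtlety is the normalization factor $1/|\bar C_y|$ in the negative witness: it is what makes the sum telescope to exactly~$1$ irrespective of how many connected components $y$ has, while simultaneously suppressing the contribution of a large number of cut edges.
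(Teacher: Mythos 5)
Your proposal is correct and follows essentially the same Belovs--Reichardt electrical-flow construction as the paper: unit $(1,v)$-flows for connected inputs and cut potentials for disconnected ones, summed over target vertices $v$, with the dual constraint verified by the flow/potential telescoping identity and the final bound obtained by rebalancing $T_0 = \Theta(n)$ against $T_1 = \Theta(n^2)$ via \Cref{Lem:rebalance}. The only superficial difference is that you encode the $st$-connectivity witnesses as signed scalars (which requires fixing a global edge orientation), whereas the paper encodes them as the vectors $\ket{i}$ and $\ket{i}-\ket{j}$ in $\C^n$, giving $d = n-1$ rather than $d = n^2$; both choices yield $\bo{n^{3/2}}$ queries and $\bo{\log n}$ qubits.
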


\begin{proof}
  We first describe a vector realization for the problem of deciding if two vertices~$s$ and~$t$ are connected by a path (called $st$-\conf), and we adapt it next to \conf.

  Recall that a graph over $n$ vertices is represented by its adjacency matrix $x \in \rn^{\binom{n}{2}}$. We let $C_x(v) \subseteq \ind$ denote the set of vertices that belong to the same connected component as the vertex~$v$. Given two vertices $s,t \in \rn^n$, we partition the set of graphs into two parts $\mathcal{G}_0^{st} \cup \mathcal{G}_1^{st} = \rn^{\binom{n}{2}}$, where~$\mathcal{G}_0^{st}$ contains the graphs that are not $st$-connected, and $\mathcal{G}_1^{st}$ contains the graphs that are $st$-connected (i.e., $t \in C_x(s)$). For each input~$x \in \rn^{\binom{n}{2}}$ and edge $\set{i,j} \subset \set{1,\dots,n}$ (representing a query index), we define the vector $\ket{v_{st}\super{x,\set{i,j}}} \in \spn\set{\ket{1},\dots,\ket{n}}$ as,
    \begin{itemize}
      \item If $x \in \mathcal{G}_0^{st}$ then:
      \[\ket{v_{st}^{(x,\set{i,j})}} = \left\{
        \begin{array}{ll}
          \ket{i} - \ket{j} & \text{if $i \in C_x(s)$ and $j \notin C_x(s)$,} \\
          0 & \text{otherwise.}
        \end{array}
      \right.\]
      \item If $x \in \mathcal{G}_1^{st}$ then fix a path \emph{of shortest length} from $s$ to $t$, and define:
      \[\ket{v_{st}^{(x,\set{i,j})}} = \left\{
        \begin{array}{ll}
          0 & \text{if $\set{i,j}$ is not an edge on that path,} \\
          \ket{i} & \text{if $\set{i,j}$ is an edge on the path and $i$ is visited first.} 
        \end{array}
      \right.\]
    \end{itemize}
  We claim that this construction is a valid vector realization for the problem of deciding whether the vertices $s$ and $t$ are connected. Indeed, let~$x \in \mathcal{G}_0^{st}$ and $y \in \mathcal{G}_1^{st}$. Then, the quantity $\sum_{\set{i,j} : x_{\set{i,j}} \neq y_{\set{i,j}}} \ip{v_{st}\super{x,\set{i,j}}}{v_{st}\super{y,\set{i,j}}}$ counts the number of times the (oriented) $st$-path associated with $y$ is leaving the connected component of $s$ in $x$, minus the number of times it is entering this component. The former occurs exactly one more time than the latter (since the path starts at $s \in C_x(s)$ and ends at $t \notin C_x(s)$), hence $\sum_{\set{i,j} : x_{\set{i,j}} \neq y_{\set{i,j}}} \ip{v_{st}\super{x,\set{i,j}}}{v_{st}\super{y,\set{i,j}}} = 1$.

  We now adapt this vector realization to the \conf\ problem. We use the basic observation that a graph is connected if and only if it is $st$-connected for $s = 1$ and all $t \in \set{2,\dots,n}$. Let $\mathcal{G}_0 \subset \rn^{\binom{n}{2}}$ be the set of all graphs that are not connected, and $\mathcal{G}_1$ those that are connected. For each~$x \in \rn^{\binom{n}{2}}$ and~$\set{i,j} \subset \set{1,\dots,n}$, we define $\ket{w\super{x,\set{i,j}}} \in \spn\set{\ket{k} \ket{t} : k \in \ind, t \in \set{2,\dots,n}}$ as,
    \begin{itemize}
      \item If $x \in \mathcal{G}_0$ then
      $\ket{w^{(x,\set{i,j})}} = \frac{1}{n - \abs{C_x(1)}} \sum_{t \notin C_x(1)} \ket{v_{1t}^{(x,\set{i,j})}}\ket{t}$.
      \item If $x \in \mathcal{G}_1$ then
      $\ket{w^{(x,\set{i,j})}} = \sum_{t \in \set{2,\dots,n}} \ket{v_{1t}^{(x,\set{i,j})}}\ket{t}$.
    \end{itemize}
  Given $x \in \mathcal{G}_0$ and $y \in \mathcal{G}_1$, we have $\sum_{\set{i,j} : x_{\set{i,j}} \neq y_{\set{i,j}}} \ip{w\super{x,\set{i,j}}}{w\super{y,\set{i,j}}} = \sum_{t \notin C_x(1)} \frac{1}{n - \abs{C_x(1)}} \cdot 1 = 1$, hence it is a valid vector realization for the \conf\ problem.

  Finally we show that the $0$-inputs have a value of at most $T_0 = 2(n-1)$ and the $1$-inputs have a value of at most $T_1 = (n-1)(n-2)$.
  Indeed, for all $x \in \mathcal{G}_0$, we have $\sum_{\set{i,j}} \norm{\ket{w\super{x,\set{i,j}}}}^2 = \frac{1}{(n - \abs{C_x(1)})^2} \sum_{t \notin C_x(1)} \sum_{\set{i,j}} \norm{\ket{v_{1t}^{(x,\set{i,j})}}}^2 = \frac{1}{(n - \abs{C_x(1)})^2} \sum_{t \notin C_x(1)} 2 \abs{C_x(1)} (n - \abs{C_x(1)}) = 2 \abs{C_x(1)} \allowbreak \leq 2(n-1)$. For all $x \in \mathcal{G}_1$, we have $\sum_{\set{i,j}} \norm{w\super{x,\set{i,j}}}^2 = \sum_{t \in \set{2,\dots,n}} \sum_{\set{i,j}} \norm{\ket{v_{1t}^{(x,\set{i,j})}}}^2 \leq \sum_{t \in \set{2,\dots,n}} (n-1) = (n-1)(n-2)$, where the inequality uses the fact that the shortest path between $s = 1$ and $t$ must be of size at most~$n-1$. By \Cref{Lem:rebalance} and \Cref{Thm:dual}, we can convert this vector realization into a quantum algorithm with query complexity~$\bo{\sqrt{T_0T_1}} = \bo{n^{3/2}}$.
\end{proof}


\newpage
\printbibliography[heading=bibintoc]

\end{document}